\def\maxwidth{ %
  \ifdim\Gin@nat@width>\linewidth
    \linewidth
  \else
    \Gin@nat@width
  \fi
}
\definecolor{fgcolor}{rgb}{0.345, 0.345, 0.345}
\definecolor{shadecolor}{rgb}{.97, .97, .97}
\definecolor{messagecolor}{rgb}{0, 0, 0}
\definecolor{warningcolor}{rgb}{1, 0, 1}
\definecolor{errorcolor}{rgb}{1, 0, 0}
\newenvironment{knitrout}{}{} 
\newtheorem{theorem}{Theorem}
\numberwithin{theorem}{section} 
\newtheorem{lemma}{Lemma}
\numberwithin{lemma}{section}
\newtheorem{definition}{Definition}
\numberwithin{definition}{section} 
\newtheorem{remark}{Corollary}
\numberwithin{remark}{section} 
\newtheorem{remark2}{Remark}
\numberwithin{remark2}{section}
\newcommand{\vect}{\boldsymbol} 
\newcommand{\vectoriz}{\texttt{vec}} 
\newcommand{\diag}{\texttt{diag}} 
\DeclareMathOperator*{\argmax}{arg\,max}
\author{Jan Kloppenborg M\o ller, Peter Nystrup, Poul G. Hjorth, and Henrik Madsen}
\begin{document}

\def\spacingset#1{\renewcommand{\baselinestretch}%
{#1}\small\normalsize} \spacingset{1}

\newcommand{\blind}{0}

\if0\blind
{
  \title{\bf Optimal Forecast Reconciliation with Uncertainty Quantification}
  \author{ Jan Kloppenborg M\o ller,
    \thanks{
    The authors gratefully acknowledge \textit{ConSave (MUDP No. 2020-15631), IEA Wind Task 51 (EUDP project 134-22015), ELEXIA (Horizon Europe No. 101075656), ARV (EU H2020 No. 101036723), DynFlex (part of the Danish Mission Green Fuel projects),
        }}
Peter Nystrup,Poul G. Hjorth, and Henrik Madsen\hspace{.2cm}\\
    Department of Applied Mathematics and Computer Science,\\ Technical University of Denmark\\
  }
  \maketitle
} \fi

\if1\blind
{
  \bigskip
  \bigskip
  \bigskip
  \begin{center}
    {\LARGE\bf  Optimal Forecast Reconciliation with Uncertainty Quantification}
\end{center}
  \medskip
} \fi

\bigskip
\begin{abstract}
We propose to estimate the weight matrix used for forecast reconciliation as parameters in a general linear model in order to quantify its uncertainty. This implies that forecast reconciliation can be formulated as an orthogonal projection from the space of base-forecast errors into a coherent linear subspace. We use variance decomposition together with the Wishart distribution to derive the central estimator for the forecast-error covariance matrix. In addition, we prove that distance-reducing properties apply to the reconciled forecasts at all levels of the hierarchy as well as to the forecast-error covariance. A covariance matrix for the reconciliation weight matrix is derived, which leads to improved estimates of the forecast-error covariance matrix. We show how shrinkage can be introduced in the formulated model by imposing specific priors on the weight matrix and the forecast-error covariance matrix. The method is illustrated in a simulation study that shows consistent improvements in the log-score. Finally, standard errors for the weight matrix and the variance-separation formula are illustrated using a case study of forecasting electricity load in Sweden.

\end{abstract}

\noindent%
{\it Keywords:} Forecast reconciliation, general linear model, orthogonal projection, restricted maximum likelihood, 
maximum a posteriori estimation, shrinkage estimation
\vfill

\newpage

  \spacingset{1} 

\section{Introduction}

Forecast reconciliation is an effective way to ensure coherency across hierarchies defined by linear constraints. Forecasts for a hierarchy are coherent when they fulfil its constraints. Classical examples include that aggregated sales forecasts for individual stores must align with regional and national sales forecasts and quarterly forecasts must align with biannual and annual forecasts. In addition to ensuring coherency, reconciliation often improves forecast accuracy on all levels of a hierarchy.

Forecast reconciliation can be temporally \citep{athanasopoulos2017forecasting}, structurally \citep{athanasopoulos2009hierarchical}, or cross--temporally by combining the two \citep[see, e.g.,][]{di_fonzo2023}. It has been successful in many different areas of application. Australian tourism data is a benchmark case that has been studied in numerous articles \citep[e.g.,][]{athanasopoulos2009hierarchical,kourentzes2019cross}. Other applications include solar power \citep{yang2017temporal}, wind power \citep{jeon2019probabilistic,hansen_et_al2023}, electricity load \citep{nystrup2020temporal}, and heat load forecasting \citep{bergsteinsson2021heat}.

Since its introduction by \cite{hyndman2011optimal}, several articles have improved the theoretical understanding of forecast reconciliation. \cite{wickramasuriya2019optimal} showed the relation between generalized least squares and the minimum trace (minT) solution and derived the variance of the reconciled forecast errors. \cite{panagiotelis2021forecast} presented a geometric interpretation of forecast reconciliation and proved that the minT solution is optimal in the sense that it minimises the expected loss. \cite{di2021forecast} extended the work by  \cite{hollyman2021understanding} on the connection between forecast combination and reconciliation to include linear constraints.

\cite{wickramasuriya2019optimal} provided theoretical justification for using the variance--covariance matrix of the base forecast errors as an estimate of the unknown and unidentifiable variance--covariance matrix for the coherency errors. However, the dimension of this matrix is in general large and it is often ill-conditioned. Therefore, shrinkage or simplification \citep{athanasopoulos2017forecasting} is usually applied. By now, the minT approach, including optimal shrinkage \citep{ledoit2003improved}, is established as the standard for forecast reconciliation. 

Along with shrinkage, which is effectively a reduction towards a subspace, other dimensionality reduction techniques have been suggested. \cite{nystrup2021dimensionality} used eigenvalue decomposition and \cite{moller2023} proposed a parameterised likelihood approach to reduce the dimension of the problem. \cite{eckert2021forecasting} used Bayesian methods to find the posterior distribution of the reconciled forecasts and bias shrinkage to shrink some weights towards zero.

\cite{pritularga2021stochastic} argued that the effect of uncertainty in forecast reconciliation has been overlooked. By decomposing the variance of the reconciled forecasts into different sources, they showed that uncertainties propagate from the variance--covariance matrix estimation to the reconciliation weights, thereby increasing the uncertainty of the reconciled forecasts. Multiple studies have found that more complete approximations of the variance--covariance matrix improve the accuracy of the reconciled forecasts at the cost of increased variance \citep{nystrup2021dimensionality,panagiotelis2021forecast,pritularga2021stochastic,moller2023}. 
To the best of our knowledge, we are the first to quantify the uncertainty of the estimated forecast reconciliation weights and apply this to obtain better estimates of the forecast-error covariance.

This article introduces a number of new results and insights on the statistical properties of forecast reconciliation: 1) we show how the reconciliation weights can be estimated as parameters in a general linear model (GLM) independently of the forecast variance; 2) the GLM formulation implies an orthogonal projection that is shown to generalise to all levels of the hierarchy through the coherency constraints; 3) from the orthogonal projection and the resulting analysis-of-variance (ANOVA) type separation of variation, we derive a central estimator for the forecast error variance--covariance matrix; 4) we prove that distance-reducing properties apply to the reconciled forecasts at all levels of the hierarchy as well as to the forecast-error covariance; 5) we formulate forecast reconciliation with shrinkage using maximum a posteriori (MAP) estimation, which highlights the choice of priors; and 6) the GLM formulation allows us to estimate the parameter (weight matrix) variance--covariance matrix and obtain a better estimate of the forecast-error variance--covariance. The introduced formulations as well as the analysis of the statistical properties of forecast reconciliation pave the way for future work on statistical testing and modelling of the parameters in the reconciliation weight matrix. 

The outline of this article is as follows. In Section \ref{sec:ForeRecon}, we formulate the general framework, show its equivalence with forecast reconciliation, and derive a number of in-sample results related to orthogonal projection. We show how shrinkage can be formulated as a MAP estimation problem and derive the mean-value parameters and forecast variances in Section \ref{sec:shrinkandmap}. Results on parameter and forecast variance are presented in Section \ref{sec:Par_Fore_Var} along with some test statistics. Section \ref{sec:simStu} analyses improvements in forecast distributions in a simulation study. Section \ref{sec:case} illustrates the variance separation and standard errors for the high-dimensional weight matrix in a case study on electricity load forecasting in Sweden. Finally, Section \ref{sec:con} concludes. 

A full list of symbols used in this article is given in Appendix \ref{sec:momen}. 
As many results in this work involve Kronecker products and vectorisation, we have listed the most important relations in Appendix \ref{sec:useful}. Multivariate distributions and their usage in objective functions are briefly presented in Appendix \ref{sec:logLike}.

\section{Forecast reconciliation as a general linear model}\label{sec:ForeRecon}
\cite{hyndman2011optimal} proposed to formulate the forecast reconciliation problem as
\begin{align}
  \boldsymbol{\hat{y}}_t=\boldsymbol{S}\boldsymbol{\tilde{y}}_t+\boldsymbol{\epsilon}_t;\quad \boldsymbol{\epsilon}\sim N(\boldsymbol{0},\boldsymbol{\Sigma}_{h})\label{eq:initMod},
\end{align}
where $\vect{S}$ is a summation matrix ensuring coherency. As a simple example, consider forecasts of quarterly, biannual, and annual observations $\vect{\hat{y}}_t=[\hat{y}_{\text{A},t},\hat{y}_{\text{H}_1,t},\hat{y}_{\text{H}_2,t},\hat{y}_{\text{Q}_1,t},...,,\hat{y}_{\text{Q}_4,t}]^T$. In this case, the summation matrix is
\begin{align}
\vect{S}=\left[\begin{matrix}
    1 & 1 & 1& 1\\
    1 & 1 & 0& 0\\
    0 & 0 & 1 & 1\\
    & & \vect{I}_4
\end{matrix}\right]=  \left[\begin{matrix}
\vect{S}_T\\ \vect{I}_4
\end{matrix}\right].\label{eq:SumExample}
\end{align}
We refer to $\vect{S}_T$ as the top-level summation matrix. The formulation \eqref{eq:initMod} leads to estimated reconciled forecasts
\begin{align}
    \boldsymbol{\tilde{y}}_t=
    (\boldsymbol{S}^\top \boldsymbol{\Sigma}_h^{-1}\boldsymbol{S})^{-1}\boldsymbol{S}^\top\boldsymbol{\Sigma}_h^{-1}
    \boldsymbol{\hat{y}}_t
    = \boldsymbol{P}\boldsymbol{\hat{y}}_t\label{eq:hier}.
\end{align}

As pointed out by \cite{panagiotelis2021forecast}, the matrix $\vect{S}\vect{P}$ is a projection matrix from the $n$-dimensional base forecast ($\vect{\hat{y}}\in\mathbb{R}^n$) into a coherent $m$-dimensional linear subspace. In the following, we will view the weight matrix, $\vect{P}$, as parameters in a GLM. It follows directly from the construction of $\boldsymbol{P}$ that there is a set of linear constraints built into the estimation  \citep{wickramasuriya2019optimal} 
\begin{align}
  \boldsymbol{P}\boldsymbol{S}=\left[\begin{matrix} \boldsymbol{P}_T & \boldsymbol{P}_B     
      \end{matrix}\right]\left[\begin{matrix} \boldsymbol{S}_T \\ \boldsymbol{I}     
      \end{matrix}\right]=\boldsymbol{I}_m,\label{eq:linCon}
\end{align}
where $\vect{P}_T$ and $\vect{P}_B$ are the weight matrices for the top- and bottom-level, respectively. Equation \eqref{eq:linCon} gives $m^2$ linear constraints, and with $\vect{P}\in \mathbb{R}^{m\times n}$ we have a total of $nm-m^2=m(n-m)$ free parameters to estimate. Below, we will show how this can be formulated as a linear regression problem with linear constraints, thereby allowing estimation of the parameters and reconciled variance at the bottom level using maximum likelihood (ML) or restricted maximum likelihood (REML) estimation.

\subsection{Maximum likelihood and REML}\label{sec:MLE}

The main contribution of this article is to show that the forecast reconciliation problem can be written as a GLM:
\begin{align}
  \vect{y}-\vect{\hat{y}}_B=\vect{X}\vect{\beta}+\vect{\epsilon};\quad \vect{\epsilon}\sim N(\vect{0},\vect{\Sigma}),\label{eq:GLMformFull}
\end{align}
where $\vect{y}\in\mathbb{R}^{mT}$ is the bottom-level observations, $\vect{\hat{y}}_B\in\mathbb{R}^{mT}$ is the bottom-level base forecasts, $\vect{\Sigma}\in\mathbb{R}^{mT\times mT}$ and $\vect{X}\in\mathbb{R}^{mT\times p}$ are chosen in appropriate ways, and $\vect{\beta}\in\mathbb{R}^{p}$ is directly related to the weight matrix in \eqref{eq:hier}. In general, the ML estimate of $\vect{\beta}$ is 
\begin{align}
\vect{\hat{\beta}}=(\vect{X}^\top\vect{\Sigma}^{-1}\vect{X})^{-1}\vect{X}^\top\vect{\Sigma}^{-1}(\vect{y}-\vect{\hat{y}}_B)\label{eq:beta.est}.
\end{align}
Later, in Section \ref{sec:shrinkandmap}, we will show how shrinkage can be introduced by imposing specific priors on $\vect{\beta}$ and $\vect{\Sigma}$ using MAP estimation.  

When $\vect{\Sigma}=\sigma^2\vect{I}$, the GLM is often formulated in terms of orthogonal projections \citep{madsen_thyregod_2011}. For more general variance--covariance structures, the GLM defines orthogonal projections from a transformed space. The precise definitions are given in the remark below.

\begin{remark2}[GLM projection]\label{remark:glmproj}
With $\vect{z}=\vect{y}-\vect{\hat{y}}_{B}\in\mathcal{Y}\subseteq\mathbb{R}^{mT}$, the estimate \eqref{eq:beta.est} defines an orthogonal projection from the transformed space $\mathcal{Y}^{*}$, with $\vect{z}^{*}\in\mathcal{Y}^{*}\subseteq\mathbb{R}^{mT}$ and $\vect{z}^{*}=\vect{\Sigma}^{-1/2}\vect{z}$, $\vect{\Sigma}=\vect{\Sigma}^{1/2}\left(\vect{\Sigma}^{1/2}\right)^\top$, into a $p$-dimensional linear subspace.  
If $\vect{\Sigma}=\sigma^2\vect{I}$, then \eqref{eq:beta.est} defines an orthogonal projection from $\mathcal{Y}$ into a $p$-dimensional linear subspace.
\end{remark2}

We begin by considering one particular time of prediction and formulate the following linear regression model 
\begin{align}
  \boldsymbol{y}_{t}=\boldsymbol{X}_{\cdot,t}\boldsymbol{\beta}+\boldsymbol{\epsilon}_t;\quad \boldsymbol{\epsilon}_t\sim N(0,\boldsymbol{\Sigma}_r) \textrm{ and iid.}. \label{eq:glmForm}
\end{align}

Reconciled forecasts for other levels are constructed by aggregating the forecasts for the bottom level ($\vect{\tilde{y}}_t=\vect{X}_{\cdot,t}\vect{\hat{\beta}}$), i.e. ($\vect{\tilde{y}}_{F,t}$ is the collection of reconciled forecasts on all top and bottom levels), 
\begin{align}
\vect{\tilde{y}}_{F,t}=\vect{S}\vect{\tilde{y}}_{t}=\vect{S}\vect{X}_{\cdot,t}\vect{\hat{\beta}},\label{eq:y.tildeF}
\end{align}
which implies that the forecasts are coherent for all choices of $\vect{X}_{\cdot,t}$.

If we let $\boldsymbol{y}=[\boldsymbol{y}_{T}^\top,\boldsymbol{y}_{T-1}^\top,\hdots,\boldsymbol{y}_{1}^\top]^\top$ and 
\begin{align}
  \boldsymbol{X}=\left[\begin{matrix}
      \boldsymbol{X}_{\cdot,T}\\
      \boldsymbol{X}_{\cdot,T-1}\\
      \vdots\\
      \boldsymbol{X}_{\cdot,1}
      \end{matrix}\right],\label{eq:fullDesign}
\end{align}
then the full model for all bottom-level observations, assuming independence between (the $m$-dimensional) observations, is
\begin{align}
    \boldsymbol{y}=\boldsymbol{X}\boldsymbol{\beta}+\boldsymbol{\epsilon};\quad \boldsymbol{\epsilon}\sim N(0,\boldsymbol{I}_T\otimes\boldsymbol{\Sigma}_r).\label{eq:glmFormFull}
\end{align}

We consider design matrices of the following form
\begin{align}
  \boldsymbol{X}_{\cdot,t}=&
  \left[\begin{matrix}
      \boldsymbol{x}_{1,t}^\top & \boldsymbol{0}&\hdots&\boldsymbol{0}\\ 
       \boldsymbol{0}& \boldsymbol{x}_{2,t}^\top & \ddots & \vdots\\
       \vdots &\ddots & \ddots & \boldsymbol{0}\\
       \boldsymbol{0}&\hdots&\boldsymbol{0}& \boldsymbol{x}_{m,t}^\top &
    \end{matrix}\right]\label{eq:desingGen},
\end{align}
meaning that, in general, we allow $\vect{x}_{i,t}\neq \vect{x}_{j,t}$. This will allow us to remove columns of the design matrix as part of a test strategy. For general design matrices defined by \eqref{eq:desingGen} the following theorem applies.

\begin{theorem}[GLM solution]\label{the:main}
  The general solution for the mean--value parameter estimates $\vect{\hat{\beta}}$ in model \eqref{eq:glmFormFull}--\eqref{eq:desingGen} is 
  \begin{align}
      \boldsymbol{\hat{\beta}}=& (\boldsymbol{X}^\top(\boldsymbol{I}_T\otimes\boldsymbol{\hat{\Sigma}}^{-1}_r)\boldsymbol{X})^{-1} \boldsymbol{X}^\top(\boldsymbol{I}_T\otimes\boldsymbol{\hat{\Sigma}}_r^{-1})\boldsymbol{y}, \label{eq:normalEq}
    \end{align}
  with $\vect{e}_t= \vect{y}_t-\vect{X}_{\cdot,t}\vect{\hat{\beta}}$. The ML estimate of $\boldsymbol{\Sigma}_r$, for a given $\vect{\hat{\beta}}$, is
    \begin{align}
      \boldsymbol{\hat{\Sigma}}_{r,\text{ML}}=\frac{1}{T}\sum_{t=1}^T\boldsymbol{e}_t \boldsymbol{e}_t^\top,\label{eq:Sigma.rML}
      \end{align}
   and the REML estimate is given as the solution to 
\begin{align}
      \boldsymbol{\hat{\Sigma}}_{r,\text{REML}} =
      \frac{1}{T}\left(
      \sum_{t=1}^T\boldsymbol{e_t} \boldsymbol{e}_t^\top+
      \frac{
        \partial\log|\boldsymbol{X}^\top\left(\boldsymbol{I}_T\otimes\boldsymbol{\Sigma}^{-1}_{r,\text{REML}}\right)\boldsymbol{X}|
      }{
        \partial\boldsymbol{\Sigma}^{-1}_{r,\text{REML}}} \Bigg|_{\boldsymbol{\Sigma}^{-1}_{r}= \boldsymbol{\hat{\Sigma}}^{-1}_{r,\text{REML}}}\right),\label{eq:Sigma.rREML}
\end{align}
with ($\vect{\Sigma}=\vect{I}_T\otimes \vect{\Sigma}_r$). For $\vect{\Sigma}^{-1}_r=\vect{\hat{\Sigma}}^{-1}_{r,\text{REML}}$ we get
\begin{align}
  \left(\frac{\partial \log|\boldsymbol{X}^\top\boldsymbol{\Sigma}^{-1}\boldsymbol{X}|}{\partial\boldsymbol\Sigma_r^{-1}} \Bigg|_{\boldsymbol{\Sigma}^{-1}_{r}= \boldsymbol{\hat{\Sigma}}^{-1}_{r,\text{REML}}}\right)_{ij} = \text{Tr}\left((\boldsymbol{X}^\top\boldsymbol{\Sigma}^{-1}\boldsymbol{X})^{-1}_{I_j,I_i}\boldsymbol{X}^\top_{i,\cdot}\boldsymbol{X}_{j,\cdot}\right),\label{eq:partXX}
\end{align}
where $I_{i}=\{P_{i-1}+1,P_{i}+2,...,P_{i}+p_i\}$ and $P_i=\sum_{l=0}^ip_l$ with the convention that $p_0=0$.

In the special case where, for all $(i,j)$, $\boldsymbol{x}_{i,t}=\boldsymbol{x}_{j,t}\in\mathbb{R}^{\bar{p}}$, the parameter estimates  
  \begin{align}
      \boldsymbol{\hat{\beta}}= (\boldsymbol{X}^\top\boldsymbol{X})^{-1} \boldsymbol{X}^\top\boldsymbol{y}\label{eq:beta.hatSimp}
    \end{align}
  are independent of $\boldsymbol{\Sigma}_r$ and 
    \begin{align}
      \boldsymbol{\hat{\Sigma}}_{r,REML}=\frac{1}{T-\bar{p}}\sum_{t=1}^T\boldsymbol{e}_t \boldsymbol{e}_t^\top\label{eq:Sig.rSimp}.
      \end{align}  
\end{theorem}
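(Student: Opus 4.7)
The plan is to obtain everything from the Gaussian log-likelihood of (\ref{eq:glmFormFull}),
\begin{align*}
\ell(\boldsymbol{\beta},\boldsymbol{\Sigma}_r)= -\tfrac{T}{2}\log|\boldsymbol{\Sigma}_r| -\tfrac{1}{2}\sum_{t=1}^T(\boldsymbol{y}_t-\boldsymbol{X}_{\cdot,t}\boldsymbol{\beta})^\top\boldsymbol{\Sigma}_r^{-1}(\boldsymbol{y}_t-\boldsymbol{X}_{\cdot,t}\boldsymbol{\beta})+\text{const},
\end{align*}
using $|\boldsymbol{I}_T\otimes\boldsymbol{\Sigma}_r|=|\boldsymbol{\Sigma}_r|^T$. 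First I would set $\partial\ell/\partial\boldsymbol{\beta}=\boldsymbol{0}$ after writing the quadratic part as $(\boldsymbol{y}-\boldsymbol{X}\boldsymbol{\beta})^\top(\boldsymbol{I}_T\otimes\boldsymbol{\Sigma}_r^{-1})(\boldsymbol{y}-\boldsymbol{X}\boldsymbol{\beta})$; the standard GLS algebra delivers (\ref{eq:normalEq}) directly. Then I would fix $\hat{\boldsymbol{\beta}}$, rewrite the quadratic term as $\text{Tr}\bigl(\boldsymbol{\Sigma}_r^{-1}\sum_t\boldsymbol{e}_t\boldsymbol{e}_t^\top\bigr)$, and apply the matrix identities $\partial\log|\boldsymbol{\Sigma}_r|/\partial\boldsymbol{\Sigma}_r^{-1}=-\boldsymbol{\Sigma}_r$ and $\partial\,\text{Tr}(\boldsymbol{B}\boldsymbol{\Sigma}_r^{-1})/\partial\boldsymbol{\Sigma}_r^{-1}=\boldsymbol{B}$; setting the derivative to zero gives (\ref{eq:Sigma.rML}).

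For the REML part I would start from the restricted log-likelihood obtained after integrating out the fixed effects, which adds the correction $-\tfrac{1}{2}\log|\boldsymbol{X}^\top\boldsymbol{\Sigma}^{-1}\boldsymbol{X}|$ to the profile log-likelihood above. Differentiating the rest reproduces (\ref{eq:Sigma.rML}); the additional correction produces the extra term in (\ref{eq:Sigma.rREML}). To arrive at (\ref{eq:partXX}), set $\boldsymbol{A}=\boldsymbol{X}^\top(\boldsymbol{I}_T\otimes\boldsymbol{\Sigma}_r^{-1})\boldsymbol{X}$ and exploit the block-diagonal form (\ref{eq:desingGen}): the $(I_a,I_b)$-block of $\boldsymbol{A}$ equals $(\boldsymbol{\Sigma}_r^{-1})_{ab}\sum_t\boldsymbol{x}_{a,t}\boldsymbol{x}_{b,t}^\top$, so $\boldsymbol{A}$ is linear in the entries of $\boldsymbol{\Sigma}_r^{-1}$. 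Combining $\partial\log|\boldsymbol{A}|/\partial A_{kl}=(\boldsymbol{A}^{-1})_{lk}$ with the chain rule, the only indices $(k,l)$ that contribute to $\partial/\partial(\boldsymbol{\Sigma}_r^{-1})_{ij}$ lie in $I_i\times I_j$, and collecting them yields $\text{Tr}\bigl((\boldsymbol{A}^{-1})_{I_j,I_i}\boldsymbol{X}_{i,\cdot}^\top\boldsymbol{X}_{j,\cdot}\bigr)$, which is exactly (\ref{eq:partXX}).

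The main obstacle I expect is this last derivation: the bookkeeping over the nested block indices $I_i=\{P_{i-1}+1,\dots,P_i+p_i\}$, together with the symmetry of $\boldsymbol{\Sigma}_r^{-1}$, requires the differentiation convention to be stated carefully so that the trace formula reproduces the correct derivative entry for every $(i,j)$. A clean route is to treat the entries of $\boldsymbol{\Sigma}_r^{-1}$ as unconstrained and differentiate component-wise; alternatively, the $\text{vec}$ identities in Appendix \ref{sec:useful} sidestep the index-chasing by expressing $\boldsymbol{A}$ as a sum of Kronecker-like terms.

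Finally, in the special case $\boldsymbol{x}_{i,t}=\boldsymbol{x}_t$ for all $i$, I would note that $\boldsymbol{X}_{\cdot,t}=\boldsymbol{I}_m\otimes\boldsymbol{x}_t^\top$, so Kronecker-product identities give $\boldsymbol{X}^\top(\boldsymbol{I}_T\otimes\boldsymbol{\Sigma}_r^{-1})\boldsymbol{X}=\boldsymbol{\Sigma}_r^{-1}\otimes\boldsymbol{X}_0^\top\boldsymbol{X}_0$ with $\boldsymbol{X}_0=[\boldsymbol{x}_1\ \cdots\ \boldsymbol{x}_T]^\top$; the $\boldsymbol{\Sigma}_r^{-1}$ factor cancels against the corresponding factor in $\boldsymbol{X}^\top(\boldsymbol{I}_T\otimes\boldsymbol{\Sigma}_r^{-1})\boldsymbol{y}$, producing (\ref{eq:beta.hatSimp}). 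For the variance, the same identity gives $\log|\boldsymbol{X}^\top(\boldsymbol{I}_T\otimes\boldsymbol{\Sigma}_r^{-1})\boldsymbol{X}|=-\bar{p}\log|\boldsymbol{\Sigma}_r|+\text{const}$, so the restricted log-likelihood has $|\boldsymbol{\Sigma}_r|$ raised to the effective exponent $-(T-\bar{p})/2$; concentrating out $\boldsymbol{\Sigma}_r$ then yields (\ref{eq:Sig.rSimp}).
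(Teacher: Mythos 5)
Your proposal is correct and follows essentially the same route as the paper's Appendix C proof: Gaussian log-likelihood with GLS normal equations, trace/log-determinant derivatives for the ML and REML estimates, block-wise differentiation of $\log|\boldsymbol{X}^\top\boldsymbol{\Sigma}^{-1}\boldsymbol{X}|$ exploiting the structure \eqref{eq:desingGen} to obtain \eqref{eq:partXX}, and Kronecker-product factorisation in the special case $\boldsymbol{x}_{i,t}=\boldsymbol{x}_t$. The only (immaterial) difference is that you differentiate directly with respect to $\boldsymbol{\Sigma}_r^{-1}$, which reaches \eqref{eq:Sigma.rREML} slightly more directly than the paper's differentiation with respect to $\boldsymbol{\Sigma}_r$ followed by the sandwich conversion.
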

\begin{proof}
See Appendix \ref{proof:main}.    
\end{proof}

Both $\vect{X}$ and $\vect{\Sigma}$ are sparse matrices, but the products can be formulated as lower-dimensional dense matrices (see Appendix \ref{proof:main}). 
In the special case of $\boldsymbol{x}_{i,t}=\boldsymbol{x}_{j,t}$, the solution is given directly by \eqref{eq:beta.hatSimp} and \eqref{eq:Sig.rSimp} (or the ML version of the latter). Using Remark \ref{remark:glmproj}, the model \eqref{eq:glmFormFull}--\eqref{eq:desingGen} defines an orthogonal projection from $\mathbb{R}^{Tm}$ into a $\bar{p}m$ dimensional linear subspace. In the general case, where the parameters $\vect{\hat{\beta}}$ depend on the estimate of $\vect{\Sigma}_r$, the relaxation algorithm \citep{Madsen07} can be used to iterate between \eqref{eq:normalEq} and \eqref{eq:Sigma.rML} or \eqref{eq:Sigma.rREML} until convergence.

The following special case is important to show equivalence between the reconciled forecasts \eqref{eq:hier} and the regression formulation \eqref{eq:glmForm}: 
\begin{remark}[Special case]\label{the:pars}
  If $\forall$ $(i,j)$ $\boldsymbol{x}_{i,t}=\boldsymbol{x}_{j,t}\in\mathbb{R}^{\bar{p}}$, then the estimate of $\boldsymbol{\beta}$ 
 can be formulated as 
\begin{align}
  \boldsymbol{\hat{\beta}}^m= (\boldsymbol{X}_{1,\cdot}^\top\boldsymbol{X}_{1,\cdot})^{-1}\boldsymbol{X}_{1,\cdot}^\top\boldsymbol{Y},
  \label{eq:betaHat.i}
\end{align}
  where $\vect{Y}=\vectoriz^{-1}(\vect{y})\in\mathbb{R}^{T\times m}$ and $ \boldsymbol{\hat{\beta}}^m=\left[\boldsymbol{\hat{\beta}}_1\quad ...\quad \boldsymbol{\hat{\beta}}_m\right]^\top$ does not depend on $\boldsymbol{\Sigma}_r$. 
\end{remark}
\begin{proof}
Follows directly from (the proof of) Theorem \ref{the:main}.    
\end{proof}

\subsection{Equivalence to forecast reconciliation}
The weight matrix $\vect{P}$ depends on the choice of variance--covariance matrix $\vect{\Sigma}_h$. Consequently, the equivalence between the regression model and forecast reconciliation depends on that choice. The next theorem states the equivalence for a simple choice of $\vect{\Sigma}_h$.
 
\begin{theorem}[Equivalence to forecast reconciliation]\label{the:opt.proj.hier}
The model \eqref{eq:glmForm}, with $\boldsymbol{X}_{\cdot,t}=\boldsymbol{I}\otimes\boldsymbol{\hat{y}}_t^\top$ and linear coherency constraints
$\vectoriz^{-1}\left(\boldsymbol{\beta}\right)^\top\boldsymbol{S}=\boldsymbol{I}_m$, is equivalent to forecast reconciliation. Formally
\begin{align}
 \boldsymbol{P}=\vectoriz^{-1}(\boldsymbol{\hat{\beta}})^\top,\label{eq:18a}
\end{align}
where $\boldsymbol{P}$ is the usual forecast reconciliation weight matrix \eqref{eq:hier} using $$\boldsymbol{\Sigma}_h=\frac{1}{T}\left(\boldsymbol{Y}\boldsymbol{S}^\top-\boldsymbol{\hat{Y}}\right)^\top\left(\boldsymbol{Y}\boldsymbol{S}^\top-\boldsymbol{\hat{Y}}\right).$$ The model, including coherency constraints, can be formulated as
\begin{align}
  \vect{y}_t-\vect{\hat{y}}_{B,t}=\left[\vect{I}_m\otimes (\vect{\hat{y}}_{T,t}^\top-\vect{\hat{y}}_{B,t}^\top\vect{S}_T^\top)\right]\vect{\beta}_T+\vect{\epsilon}_t;\quad \vect{\epsilon}_t\sim N(\vect{0},\vect{\Sigma}_r),\label{eq:ConstrReg}
\end{align}
with $\vect{S}=\left[\vect{S}_T^\top\quad \vect{I}\right]^\top$.

The ML and REML estimates of $\boldsymbol{\Sigma}_r$ are
\begin{align}
            \boldsymbol{\hat{\Sigma}}_{r,ML}=& \frac{1}{T}\sum_{i=1}^{T}\boldsymbol{e}_t\boldsymbol{e}_t^\top;\quad 
                  \boldsymbol{\hat{\Sigma}}_{r,REML}= \frac{1}{T-(n-m)}\sum_{i=1}^{T}\boldsymbol{e}_t\boldsymbol{e}_t^\top.\label{eq:18}
\end{align}
\end{theorem}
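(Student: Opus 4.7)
My plan is to derive the equivalence in four stages: unpack the vec--Kronecker structure of the design to identify $\vect{P}$ with $\vect{\beta}$, impose the coherency constraint to reach \eqref{eq:ConstrReg}, apply Corollary \ref{the:pars} to obtain an OLS estimator, and then identify that OLS problem with the minT trace criterion for the stated choice of $\vect{\Sigma}_h$.

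First, because $\vect{X}_{\cdot,t}=\vect{I}_m\otimes\vect{\hat{y}}_t^\top$ places $\vect{\hat{y}}_t^\top$ in a single block of each row, setting $\vect{\beta}=\vectoriz(\vect{P}^\top)$ gives $\vect{X}_{\cdot,t}\vect{\beta}=\vect{P}\vect{\hat{y}}_t$, and the linear constraint $\vectoriz^{-1}(\vect{\beta})^\top\vect{S}=\vect{I}_m$ is exactly $\vect{P}\vect{S}=\vect{I}_m$. Writing $\vect{P}=[\vect{P}_T\ \vect{P}_B]$, the constraint reads $\vect{P}_B=\vect{I}_m-\vect{P}_T\vect{S}_T$, and substituting into the predictor produces $\vect{P}\vect{\hat{y}}_t=\vect{\hat{y}}_{B,t}+\vect{P}_T\vect{v}_t$ with $\vect{v}_t:=\vect{\hat{y}}_{T,t}-\vect{S}_T\vect{\hat{y}}_{B,t}$. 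Applying the identity $\vect{A}\vect{v}=(\vect{I}_m\otimes\vect{v}^\top)\vectoriz(\vect{A}^\top)$ row-wise rewrites this exactly as \eqref{eq:ConstrReg} with $\vect{\beta}_T=\vectoriz(\vect{P}_T^\top)$.

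Second, in the reduced model the same vector $\vect{v}_t$ appears in every block of the design, so Corollary \ref{the:pars} applies, the estimator is independent of $\vect{\Sigma}_r$, and each row of $\vect{P}_T$ is the ordinary least-squares regression of the corresponding component of $\vect{e}_{B,t}:=\vect{y}_t-\vect{\hat{y}}_{B,t}$ on $\vect{v}_t$; collectively $\vect{\hat{P}}_T$ minimises $\sum_{t=1}^{T}\|\vect{e}_{B,t}-\vect{P}_T\vect{v}_t\|^{2}$.

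The main obstacle is matching this OLS criterion to the minT trace criterion $\text{Tr}(\vect{P}\vect{\Sigma}_h\vect{P}^\top)$ for the stated $\vect{\Sigma}_h=\tfrac{1}{T}(\vect{Y}\vect{S}^\top-\vect{\hat{Y}})^\top(\vect{Y}\vect{S}^\top-\vect{\hat{Y}})$. The key identity uses coherency of the observations: writing $\vect{e}_{F,t}:=\vect{S}\vect{y}_t-\vect{\hat{y}}_t$ for the full--hierarchy base forecast error, the constraint $\vect{P}\vect{S}=\vect{I}_m$ yields $\vect{P}\vect{e}_{F,t}=\vect{y}_t-\vect{P}\vect{\hat{y}}_t=\vect{e}_{B,t}-\vect{P}_T\vect{v}_t$. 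Summing squares gives $T\cdot\text{Tr}(\vect{P}\vect{\Sigma}_h\vect{P}^\top)=\sum_{t}\|\vect{P}\vect{e}_{F,t}\|^{2}=\sum_{t}\|\vect{e}_{B,t}-\vect{P}_T\vect{v}_t\|^{2}$. Since the classical minT formula \eqref{eq:hier} is precisely the minimiser of $\text{Tr}(\vect{P}\vect{\Sigma}_h\vect{P}^\top)$ under $\vect{P}\vect{S}=\vect{I}_m$, it must coincide with the GLM OLS estimator, proving \eqref{eq:18a}. Finally, the ML formula in \eqref{eq:18} is the direct specialisation of \eqref{eq:Sigma.rML}, and since the reduced model has $\bar{p}=n-m$ parameters per equation with identical design across equations, \eqref{eq:Sig.rSimp} supplies the REML divisor $T-(n-m)$.
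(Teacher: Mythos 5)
Your proposal is correct, and the first stage (unpacking the Kronecker structure, substituting the constraint $\vect{P}_B=\vect{I}_m-\vect{P}_T\vect{S}_T$ to arrive at \eqref{eq:ConstrReg}, and invoking Corollary \ref{the:pars} for the $\vect{\Sigma}_r$-free OLS solution) matches the paper's Appendix \ref{proof:opt.proj.hier} essentially line for line, as does the final appeal to \eqref{eq:Sigma.rML} and \eqref{eq:Sig.rSimp} for the ML and REML divisors. Where you genuinely diverge is the identification of the constrained OLS estimator with $\vect{P}$: the paper uses the algebraic representation $\vect{P}=\vect{J}-\vect{J}\vect{\Sigma}_h\vect{U}(\vect{U}^\top\vect{\Sigma}_h\vect{U})^{-1}\vect{U}^\top$ from \citet[Theorem 1]{wickramasuriya2019optimal} and verifies term by term that, for the empirical $\vect{\Sigma}_h$, the products $\vect{U}^\top(\vect{Y}\vect{S}^\top-\vect{\hat{Y}})^\top$ and $\vect{J}(\vect{Y}\vect{S}^\top-\vect{\hat{Y}})^\top$ reproduce exactly the normal-equations solution $\vect{\hat{\beta}}_T^m$. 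You instead argue variationally: using coherency of the observations and $\vect{P}\vect{S}=\vect{I}_m$ you show $T\,\text{Tr}(\vect{P}\vect{\Sigma}_h\vect{P}^\top)=\sum_t\|\vect{e}_{B,t}-\vect{P}_T\vect{v}_t\|^2$, so the minT objective and the OLS objective are the same quadratic form over the same constraint set and their (unique) minimisers must coincide. Your route is shorter and more conceptual --- it explains \emph{why} the equivalence holds (both problems minimise the in-sample sum of squared reconciled errors) --- at the price of invoking that \eqref{eq:hier} solves the constrained trace minimisation (a different piece of Wickramasuriya et al.\ than the paper uses) and of needing uniqueness, i.e.\ positive definiteness of $\vect{\Sigma}_h$ and full column rank of $\vect{\hat{Y}}_T-\vect{\hat{Y}}_B\vect{S}_T^\top$; these are the same regularity conditions the paper needs for its inverses to exist, so nothing is lost. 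The paper's computation additionally delivers the explicit closed form of $\vect{\hat{\beta}}_T^m$ and the $\vect{J},\vect{U}$ machinery that is reused in the proof of Theorem \ref{the:shrink}, which your argument does not provide.
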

\begin{proof}
    See Appendix \ref{proof:opt.proj.hier}.
\end{proof}
 
Equation \eqref{eq:ConstrReg} is formulated in terms of the GLM given in \eqref{eq:GLMformFull} or \eqref{eq:glmForm} with $\boldsymbol{X}_{\cdot,t}=\vect{I}_m\otimes (\vect{\hat{y}}_{T,t}^\top-\vect{\hat{y}}_{B,t}^\top\vect{S}_T^\top)$, where the full design matrix $\vect{X}$ is constructed by \eqref{eq:fullDesign}. The model \eqref{eq:ConstrReg} defines the reconciliation problem as a linear regression problem and, consequently, as a linear projection of the bottom-level base-forecast errors. The dimension of the model is $m\cdot (n-m)$ (the dimension of $\vect{\beta}_T$). 

The residual error $\vect{\epsilon}_t$ is (in the case of a perfectly specified model) referred to as the irreducible error by \cite{pritularga2021stochastic}, while variance components related to estimation of the base-forecast variance--covariance matrix appear as the variance of the estimator $\vect{\hat{\beta}}_T$. A bias correction can be included as $\boldsymbol{X}_{\cdot,t}=\vect{I}_m\otimes \left[1,\vect{\hat{y}}_{T,t}^\top-\vect{\hat{y}}_{B,t}^\top\vect{S}_T^\top\right]$ at the expense of $m$ extra parameters. 

\cite{wickramasuriya2019optimal} showed that if $\vect{\Sigma}_h$ were known, the variance of the (bottom-level) reconciled forecasts would be given by 
\begin{align}
  V[\vect{y}_t-\vect{\tilde{y}}_t]=\boldsymbol{P} \boldsymbol{\Sigma}_h \boldsymbol{P}^\top.\label{eq:19}
\end{align}
The following corollary shows the equivalence between \eqref{eq:19} and the ML estimate of $\vect{\Sigma}_r$ in \eqref{eq:18}, which is a bit optimistic compared to the REML estimate.
\begin{remark}[Reconciled variance] \label{remark:reconvar}
If $\vect{\Sigma}_h$ is chosen as $\vect{\Sigma}_h=\frac{1}{T}(\vect{Y}\vect{S}^\top-\vect{\hat{Y}})^\top(\vect{Y}\vect{S}^\top-\vect{\hat{Y}})$, then the maximum likelihood estimate of the residual variance in model \eqref{eq:ConstrReg} is equal to the variance of the reconciled forecasts \eqref{eq:19}: 
\begin{align}
  \boldsymbol{P} \boldsymbol{\Sigma}_h \boldsymbol{P}^\top =\boldsymbol{\hat{\Sigma}}_{r,\text{ML}}.
\end{align}
\end{remark}
\begin{proof}
See Appendix \ref{proof:reconvar}.    
\end{proof}

The next lemma states the GLM as a projection of the base-forecast errors and shows that, due to the coherency constraints, the projection matrix is the same for base-forecast errors at all levels of the hierarchy.

\begin{lemma}[Projection and separation of variation]\label{the:proj}
Assume $\texttt{rank}(\vect{X}_{1,\cdot})=n-m$ and define $\vect{H}=\vect{X}_{1,\cdot}(\vect{X}_{1,\cdot}^\top\vect{X}_{1,\cdot})^{-1}\vect{X}_{1,\cdot}^\top$, then $\vect{H}$ defines an orthogonal projection 
 for any selection of columns $I\subseteq \{1,2,...,n\}$ of $\vect{\hat{Y}}$ and corresponding selection of rows of $\vect{S}$. In particular, using the notation $\vect{\hat{Y}}_I:=\vect{\hat{Y}}_{\cdot,I}$ and $\vect{S}_I:=\vect{S}_{I,\cdot}$, $\vect{X}_{1,\cdot}$ defines an orthogonal projection and
 \begin{subequations}
  \begin{align}
  \begin{split}
        \left(\vect{Y}\vect{S}^\top_{I}-\vect{\hat{Y}}_I\right)^\top\left(\vect{Y}\vect{S}^\top_I-\vect{\hat{Y}}_I\right)     =&
 \vect{S}_I\left(\vect{Y}-\vect{\tilde{Y}}\right)^\top\left(\vect{Y}-\vect{\tilde{Y}}\right)\vect{S}_I^\top+ \\ & \left(\vect{\tilde{Y}}\vect{S}^\top-\vect{\hat{Y}}_I\right)^\top\left(\vect{\tilde{Y}}\vect{S}^\top-\vect{\hat{Y}}_I\right)\label{eq:projBa}
 \end{split}\\
 \begin{split}
 =& \left(\vect{Y}\vect{S}^\top_I-\vect{\hat{Y}}_I\right)^\top\left(\vect{I}-\vect{H}\right)
    \left(\vect{Y}\vect{S}^\top_I-\vect{\hat{Y}}_I\right)+  \\
    &\left(\vect{Y}\vect{S}^\top_I-\vect{\hat{Y}}_I\right)^\top\vect{H}\left(\vect{Y}\vect{S}^\top_I-\vect{\hat{Y}}_I\right).\label{eq:projB}
    \end{split}
    \end{align}
 \end{subequations}
\end{lemma}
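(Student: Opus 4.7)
The plan is to reduce the lemma to standard Pythagorean facts for orthogonal projectors, applied columnwise to the matrix $\vect{Y}\vect{S}_I^\top - \vect{\hat{Y}}_I$. First I would record that $\vect{H}$ and $\vect{I}-\vect{H}$ are complementary orthogonal projectors (symmetric and idempotent, since $\vect{X}_{1,\cdot}^\top\vect{X}_{1,\cdot}$ is invertible by the rank assumption). Applied to the reconciliation model \eqref{eq:ConstrReg} of Theorem \ref{the:opt.proj.hier}, Corollary \ref{the:pars} gives the bottom-level fit $\vect{\tilde{Y}} - \vect{\hat{Y}}_B = \vect{H}(\vect{Y}-\vect{\hat{Y}}_B)$, and hence the bottom-level residual satisfies $\vect{Y}-\vect{\tilde{Y}} = (\vect{I}-\vect{H})(\vect{Y}-\vect{\hat{Y}}_B)$.

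The crux is a coherency observation: for every $I\subseteq\{1,\dots,n\}$, the incoherency matrix $\vect{\hat{Y}}_I - \vect{\hat{Y}}_B\vect{S}_I^\top$ lies in the column span of $\vect{X}_{1,\cdot} = \vect{\hat{Y}}_T - \vect{\hat{Y}}_B\vect{S}_T^\top$. To verify this I would split $I$ into its bottom indices (for which the corresponding row of $\vect{S}$ is a standard basis vector, so the difference column vanishes) and its top indices (for which the difference column is literally a column of $\vect{X}_{1,\cdot}$). Consequently $\vect{H}(\vect{\hat{Y}}_I - \vect{\hat{Y}}_B\vect{S}_I^\top) = \vect{\hat{Y}}_I - \vect{\hat{Y}}_B\vect{S}_I^\top$, which yields the two identifications
\[
(\vect{I}-\vect{H})(\vect{Y}\vect{S}_I^\top - \vect{\hat{Y}}_I) = (\vect{I}-\vect{H})(\vect{Y}-\vect{\hat{Y}}_B)\vect{S}_I^\top = (\vect{Y}-\vect{\tilde{Y}})\vect{S}_I^\top,
\]
and, by complementarity, $\vect{H}(\vect{Y}\vect{S}_I^\top - \vect{\hat{Y}}_I) = \vect{\tilde{Y}}\vect{S}_I^\top - \vect{\hat{Y}}_I$.

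Both displayed identities then follow from a Pythagorean split. For \eqref{eq:projB} I would insert $\vect{I} = \vect{H} + (\vect{I}-\vect{H})$ inside the Gram matrix $(\vect{Y}\vect{S}_I^\top - \vect{\hat{Y}}_I)^\top(\vect{Y}\vect{S}_I^\top - \vect{\hat{Y}}_I)$ and use symmetry and idempotency of the two projectors to cancel the cross terms. For \eqref{eq:projBa} I would substitute the two identifications above into the two quadratic forms of \eqref{eq:projB}: the $(\vect{I}-\vect{H})$ block collapses to $\vect{S}_I(\vect{Y}-\vect{\tilde{Y}})^\top(\vect{Y}-\vect{\tilde{Y}})\vect{S}_I^\top$, and the $\vect{H}$ block collapses to $(\vect{\tilde{Y}}\vect{S}_I^\top - \vect{\hat{Y}}_I)^\top(\vect{\tilde{Y}}\vect{S}_I^\top - \vect{\hat{Y}}_I)$.

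The main obstacle is spotting and justifying the coherency observation — the essential content of the lemma is really that \emph{the same} projector $\vect{H}$ governs the residuals at every level of the hierarchy. Once that is in place, the level-$I$ residual is simply the bottom-level residual aggregated by $\vect{S}_I$, and everything else reduces to textbook orthogonal-projection arithmetic.
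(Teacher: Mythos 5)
Your proposal is correct and follows essentially the same route as the paper's proof: both rest on the bottom-level fit $\vect{\tilde{Y}}=\vect{\hat{Y}}_B+\vect{H}(\vect{Y}-\vect{\hat{Y}}_B)$ and on the fact that $\vect{H}$ fixes the incoherency columns (the paper expresses your coherency observation as $\vect{X}_{1,\cdot}=\vect{H}\vect{X}_{1,\cdot}$), followed by the symmetric--idempotent split $\vect{I}=\vect{H}+(\vect{I}-\vect{H})$. The only cosmetic difference is that you argue columnwise for an arbitrary index set $I$ from the outset, whereas the paper first establishes the identities for the full top and bottom blocks and then restricts to the submatrices indexed by $I$.
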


\begin{proof}
See Appendix \ref{proof:proj}.
\end{proof}

The sum-of-squares decomposition \eqref{eq:projBa} can be written in terms of the variance--covariance matrices discussed so far as
\begin{align}
     \vect{\Sigma}_h= \vect{S}\vect{\hat{\Sigma}}_{r,\text{ML}}\vect{S}^\top+ \frac{1}{T}\left(\vect{\tilde{Y}}\vect{S}^\top-\vect{\hat{Y}}\right)^\top\left(\vect{\tilde{Y}}\vect{S}^\top-\vect{\hat{Y}}\right),
\end{align}
where the first term is the residual sum of squares (of the reconciled forecast) and the second term is the squared difference between the base and reconciled forecast. Meanwhile, \eqref{eq:projB} stresses separation of variation by a series of orthogonal projections of the base-forecast error similar to Choran's theorem \citep[see, e.g.,][]{madsen_thyregod_2011}. 

The next theorem shows that as a consequence of the coherent projection, distance-reducing properties hold across any arbitrary splitting of the column space of the base-forecast errors. 

\begin{theorem}[Distance-reducing properties]\label{col:proj}
A consequence of Lemma \ref{the:proj} is that for any column (and row) selection, $I\subseteq\{1,2,...,n\}$, of $\vect{\hat{Y}}$ (and $\vect{S}$) 
  \begin{align}
      \begin{split}
 \vect{S}_{I}\left(\vect{Y}-\vect{\tilde{Y}}\right)^\top\left(\vect{Y}-\vect{\tilde{Y}}\right)\vect{S}_{I}^\top
 \leq     \left(\vect{Y}\vect{S}_{I}^\top-\vect{\hat{Y}}_{I}\right)^\top\left(\vect{Y}\vect{S}_{I}^\top-\vect{\hat{Y}}_{I}\right)\label{eqVarSep}
 \end{split}
  \end{align}
  where "$\leq$'' should be understood as the rhs - lhs being positive semi-definite. Furthermore, with $\vect{Y}\vect{S}_{I}^\top-\vect{\hat{Y}}_{I}\in \mathbb{R}^{T\times q_I}$, $\vect{y}_I=\vectoriz(\vect{S}_{I}\vect{Y}^\top)$, $\vect{\hat{y}}_I=\vectoriz(\vect{S}_{I}\vect{\hat{Y}}^\top)$,  $\vect{\tilde{y}}_I=\vectoriz(\vect{S}_{I}\vect{\tilde{Y}}^\top)$, and
  $\vect{y}_I-\vect{\hat{y}}_I\in \mathcal{E}_I\subseteq\mathbb{R}^{T q_I}$, model \eqref{eq:ConstrReg} defines an orthogonal projection from $\mathcal{E}_I$ into a $q_I(n-m)$-dimensional subspace where 
\begin{align}
||\vect{y}_I-\vect{\hat{y}}_I||^2=||\vect{y}_I-\vect{\tilde{y}}_I||^2+||\vect{\tilde{y}}_I-\vect{\hat{y}}_I||^2,\label{eq:proj2}
\end{align}
and, therefore,
  \begin{align}
||\vect{y}_I-\vect{\tilde{y}}_I||^2\leq||\vect{y}_I-\vect{\hat{y}}_I||^2.\label{eq:DistReduc1}
\end{align}
\end{theorem}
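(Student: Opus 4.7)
The plan is to deduce the theorem directly from Lemma \ref{the:proj}, which already contains the decisive decomposition; the only remaining work is to read off the three separate claims.

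For the matrix inequality \eqref{eqVarSep} I would take the identity \eqref{eq:projBa} and simply rearrange it as
$$\left(\vect{Y}\vect{S}^\top_{I}-\vect{\hat{Y}}_I\right)^\top\left(\vect{Y}\vect{S}^\top_I-\vect{\hat{Y}}_I\right) - \vect{S}_I\left(\vect{Y}-\vect{\tilde{Y}}\right)^\top\left(\vect{Y}-\vect{\tilde{Y}}\right)\vect{S}_I^\top = \left(\vect{\tilde{Y}}\vect{S}_I^\top-\vect{\hat{Y}}_I\right)^\top\left(\vect{\tilde{Y}}\vect{S}_I^\top-\vect{\hat{Y}}_I\right).$$
The right-hand side is a Gram matrix of the form $\vect{A}^\top\vect{A}$ and is therefore positive semi-definite, which is exactly the meaning attached to ``$\leq$'' in the statement.

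For the Pythagorean identity \eqref{eq:proj2} I would apply the trace operator to both sides of the same decomposition. Using the cyclic property of the trace, the identity $\mathrm{tr}(\vect{M}^\top\vect{M})=\|\vectoriz(\vect{M})\|^2$, and the invariance of the Frobenius norm under transposition, each of the three trace terms turns into one of the squared Euclidean norms in \eqref{eq:proj2}: the left-hand side becomes $\|\vect{y}_I-\vect{\hat{y}}_I\|^2$, the $\vect{S}_I$-sandwiched middle term becomes $\|\vect{y}_I-\vect{\tilde{y}}_I\|^2$, and the remaining Gram term becomes $\|\vect{\tilde{y}}_I-\vect{\hat{y}}_I\|^2$. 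The orthogonal-projection statement is a repackaging of the same fact: viewing $\vect{y}_I-\vect{\hat{y}}_I$ as the level-wise stack of $q_I$ residual vectors in $\mathbb{R}^T$, the map $\vect{I}_{q_I}\otimes\vect{H}$ with $\vect{H}$ from Lemma \ref{the:proj} is an orthogonal projection of $\mathcal{E}_I$ onto a subspace of dimension $q_I(n-m)$ (since $\vect{H}$ has rank $n-m$), and the split is exactly \eqref{eq:proj2}. Finally, \eqref{eq:DistReduc1} follows immediately from \eqref{eq:proj2} because $\|\vect{\tilde{y}}_I-\vect{\hat{y}}_I\|^2\ge 0$.

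The only obstacle I foresee is clerical: keeping the transposes and vectorisation conventions consistent so that the three trace terms line up precisely with the three squared norms defined in the statement, and confirming that the rank of $\vect{I}_{q_I}\otimes\vect{H}$ really is $q_I(n-m)$ under the assumption $\mathrm{rank}(\vect{X}_{1,\cdot})=n-m$ imported from Lemma \ref{the:proj}. Beyond this bookkeeping, no new argument is required.
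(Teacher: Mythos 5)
Your proposal is correct and follows essentially the same route as the paper's proof: both parts rest on the decomposition and projector $\vect{H}$ from Lemma \ref{the:proj}, with \eqref{eqVarSep} obtained by rearranging \eqref{eq:projBa} so the difference is a positive semi-definite Gram matrix, and \eqref{eq:proj2}--\eqref{eq:DistReduc1} from the fact that the reconciled-minus-base residuals are the image of $\vect{y}_I-\vect{\hat{y}}_I$ under a Kronecker extension of $\vect{H}$. The only cosmetic difference is that you obtain the scalar identity by taking traces of the matrix decomposition and stack the residuals level-wise (projector $\vect{I}_{q_I}\otimes\vect{H}$), whereas the paper vectorises with the opposite ordering and explicitly verifies that $\vect{H}\otimes\vect{I}_{q_I}$ is symmetric and idempotent; the two orderings differ only by a permutation of coordinates, so nothing substantive changes.
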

\begin{proof}
    See Appendix \ref{proof:colProj}.
\end{proof}

The result \eqref{eq:DistReduc1} is similar to the results presented in \citet[][Theorems 3.1 and 3.2]{panagiotelis2021forecast}, but formulated for all observations  in the training set rather than for an individual forecast. Notice that \eqref{eqVarSep} generalises the result to variance--covariance matrices and shows that the distance reduction applies to all columns of $\vect{Y}\vect{S}^{\top}-\vect{\hat{Y}}$.

In the GLM, series of orthogonal projections are used to construct hypothesis tests based on Choran's theorem (i.e., $\chi^2$-distributions), which leads to the F-test and the central estimator for the variance. Since we are working with variance--covariance matrices, we need results related to the Whishart distribution (multivariate generalisation of the $\chi^2$-distribution). The following definition and lemma is adapted from \cite{Rao_1973}. We start with the general definition of the Wishart distribution.

\begin{definition}[Whishart distribution]\label{def:Whishart}
   Let the matrix $\vect{Z}=\left[\vect{z}_1\quad ...\quad\vect{z}_T\right]^\top\in\mathbb{R}^{T\times p}$ be a collection of normal random variables s.t.~$\vect{z}_i\sim N(\vect{0},\vect{\Sigma}_z)$ and iid., then 
   $\vect{Q}=\vect{Z}^\top\vect{Z}$ follows a Whishart distribution with $T$ degrees of freedom and scale parameter $\vect{\Sigma}_z$, $\vect{Q}\sim W_p(\vect{\Sigma}_z,T)$. If $T>p-1$ and $\vect{\Sigma}_z$ is positive definite, then the probability density function exists and $E[\vect{Q}]=T\vect{\Sigma}_z$.
\end{definition}

In the treatment here we do not rely on the explicit formulation of the probability density function of the Whishart distribution, but for completeness it is given in Appendix \ref{sec:logLike}.
The following lemma relates projections and the Whishart distribution. 

\begin{lemma}[Some properties of the Whishart distribution]\label{lemma:Whishart}
    Let $\vect{Z}$ be as in Definition \ref{def:Whishart} and $\vect{H}$ a projection matrix with $r=\text{Tr}(\vect{H})$, then a necessary and sufficient condition for $\vect{Z}^\top\vect{H}\vect{Z}$ to follow a Wishart distribution is that $\frac{\vect{v}^\top\vect{Z}^\top\vect{H}\vect{Z}\vect{v}}{\sigma_v^2}\sim \chi^2(r)$, for any fixed vector $\vect{v}$, and in that case $\vect{Z}^\top\vect{H}\vect{Z}\sim W_p(\vect{\Sigma}_z,r)$.
\end{lemma}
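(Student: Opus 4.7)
The plan is to reduce the claim to the defining iid-normal structure of the Wishart via the spectral factorisation of the projection matrix. Since $\vect{H}$ is symmetric and idempotent with $\text{Tr}(\vect{H})=r$, I will write $\vect{H}=\vect{U}\vect{U}^\top$ with $\vect{U}\in\mathbb{R}^{T\times r}$ and $\vect{U}^\top\vect{U}=\vect{I}_r$. Setting $\vect{W}=\vect{U}^\top\vect{Z}\in\mathbb{R}^{r\times p}$ factors the matrix of interest as $\vect{Z}^\top\vect{H}\vect{Z}=\vect{W}^\top\vect{W}$. The key linear-algebra step is to check that the rows of $\vect{W}$ are iid $N(\vect{0},\vect{\Sigma}_z)$: the $i$-th row is $\sum_k U_{ki}\vect{z}_k^\top$, a linear combination of the iid rows of $\vect{Z}$, hence multivariate normal with covariance $\|\vect{U}_{\cdot,i}\|^2\vect{\Sigma}_z=\vect{\Sigma}_z$, while the cross-covariance between distinct rows vanishes because $\vect{U}_{\cdot,i}^\top\vect{U}_{\cdot,j}=0$. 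Definition \ref{def:Whishart} then yields $\vect{Z}^\top\vect{H}\vect{Z}\sim W_p(\vect{\Sigma}_z,r)$ directly.

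With this reduction in hand, the necessity direction is immediate. For any fixed $\vect{v}$ I would expand $\vect{v}^\top\vect{W}^\top\vect{W}\vect{v}=\sum_{i=1}^r(\vect{W}_{i,\cdot}\vect{v})^2$, where each scalar $\vect{W}_{i,\cdot}\vect{v}$ is an independent $N(0,\vect{v}^\top\vect{\Sigma}_z\vect{v})=N(0,\sigma_v^2)$. Dividing by $\sigma_v^2$ turns the right-hand side into a sum of $r$ independent squared standard normals, which is $\chi^2(r)$ by definition.

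For sufficiency I would observe that, under the standing assumption that $\vect{H}$ is a projection of rank $r$, the Wishart conclusion already follows from the factorisation plus the iid-normal verification above; the scaled chi-squared property for every fixed $\vect{v}$ is then a consequence rather than a separate hypothesis. In other words, the lemma records that these two characterisations---$\vect{Z}^\top\vect{H}\vect{Z}$ being Wishart, and $\vect{v}^\top\vect{Z}^\top\vect{H}\vect{Z}\vect{v}/\sigma_v^2$ being $\chi^2(r)$ for every fixed $\vect{v}$---are equivalent whenever $\vect{H}$ is a projection with trace $r$, and both follow from the same factorisation.

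The step requiring genuine care is propagating the orthonormality of the columns of $\vect{U}$ through $\vect{U}^\top\vect{Z}$ to recover $r$ iid rows with covariance $\vect{\Sigma}_z$ unchanged. This is the hinge that ties the geometric structure of the projection to the probabilistic definition of the Wishart; once it is in place, the rest reduces to the elementary fact that a sum of $r$ independent squared standard normals is $\chi^2(r)$.
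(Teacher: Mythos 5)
Your argument is correct, but it takes a genuinely different route from the paper, which does not prove the lemma at all: the stated proof is only a pointer to \citet[8b]{Rao_1973}, where the result is a characterisation theorem for quadratic forms $\vect{Z}^\top\vect{A}\vect{Z}$ with $\vect{A}$ an arbitrary symmetric matrix. Your proof is self-contained and elementary: writing $\vect{H}=\vect{U}\vect{U}^\top$ with $\vect{U}^\top\vect{U}=\vect{I}_r$, checking that the rows of $\vect{U}^\top\vect{Z}$ are iid $N(\vect{0},\vect{\Sigma}_z)$ (equal covariance from unit column norms, independence from orthogonality of distinct columns plus joint normality), and invoking Definition \ref{def:Whishart} gives $\vect{Z}^\top\vect{H}\vect{Z}\sim W_p(\vect{\Sigma}_z,r)$ directly, after which $\vect{v}^\top\vect{Z}^\top\vect{H}\vect{Z}\vect{v}/\sigma_v^2$ is a sum of $r$ independent squared $N(0,1)$'s, hence $\chi^2(r)$. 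You are also right that, once $\vect{H}$ is assumed to be an orthogonal projection, both sides of the equivalence hold unconditionally, so the ``necessary and sufficient'' clause is settled (if vacuously) by proving each side. What the citation to Rao buys is the sharper, non-trivial version: there idempotency is not assumed, and the scalar condition is what forces it, via the classical fact that a quadratic form $\vect{w}^\top\vect{H}\vect{w}$ in $\vect{w}\sim N(\vect{0},\sigma_v^2\vect{I}_T)$ is distributed as $\sigma_v^2\chi^2(r)$ if and only if $\vect{H}$ is idempotent of rank $r$; your factorisation then finishes the sufficiency direction, which is the direction actually leaned on in Corollary \ref{col:CentralSig}. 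If you want your write-up to carry that content rather than the degenerate equivalence, insert exactly that step and drop the standing projection assumption from the ``sufficiency'' half. Two small caveats worth recording: ``projection matrix'' must be read as symmetric idempotent (an oblique projection admits no factorisation $\vect{U}\vect{U}^\top$ and $\vect{Z}^\top\vect{H}\vect{Z}$ need not even be symmetric), which matches the hat matrix used in the paper; and the normalisation by $\sigma_v^2$ tacitly requires $\vect{v}^\top\vect{\Sigma}_z\vect{v}>0$, a nondegeneracy assumption shared with the paper's own statement.
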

\begin{proof}
    See \citet[8b]{Rao_1973}
\end{proof}

With Lemma \ref{lemma:Whishart} we can derive the distribution of the sum of squares of each term in \eqref{eq:projB}. 

\begin{remark}[Central estimator for $\vect{\hat{\Sigma}}_r$]\label{col:CentralSig}
    Assuming model \eqref{eq:ConstrReg} and defining $\vect{H}$ as in Lemma \ref{the:proj}, then 
    \begin{align}
    \left(\vect{Y}-\vect{\tilde{Y}}\right)^\top\left(\vect{Y}-\vect{\tilde{Y}}\right)\sim W_m(\vect{\Sigma}_r,T-(n-m))
    \end{align}
    and, hence, $\vect{\hat{\Sigma}}_{r,REML}$ \eqref{eq:18} is a central estimator for $\vect{\Sigma}_r$. If, in addition $\vect{\beta}_T=\vect{0}$, then 
    \begin{align}
    \left(\vect{\tilde{Y}}-\vect{\hat{Y}}\right)^\top\left(\vect{\tilde{Y}}-\vect{\hat{Y}}\right)\sim W_m(\vect{\Sigma}_r,n-m).
    \end{align}
\end{remark}
\begin{proof}
See Appendix \ref{proof:CentralSig}
\end{proof}

In addition to defining the REML estimator as the central estimator of $\vect{\Sigma}_r$, the corollary also points to a test for total homogeneity (see Section \ref{sec:statTest}).

The results so far use the estimated variance--covariance matrix directly. This is a disadvantage as it is well known that shrinkage is needed in order to stabilise the reconciliation weights \citep[see, e.g.,][]{nystrup2020temporal}. In the next section, we show the equivalence between MAP estimation and forecast reconciliation when shrinkage is applied to the variance--covariance matrix $\vect{\Sigma}_h$ in \eqref{eq:hier}. 

\section{Shrinkage and MAP estimation}\label{sec:shrinkandmap}
In most realistic settings, there is a large number of parameters in the base-forecast variance--covariance matrix $\vect{\Sigma}_h$. Therefore, in order to reduce the parameter variance, shrinkage is usually applied when estimating $\vect{\Sigma}_h$. When applying shrinkage, the projections presented in Lemma \ref{the:proj} are no longer orthogonal. 

The usual shrinkage estimator for the variance--covariance matrix is 
\begin{align}
  \begin{split}
    \boldsymbol{\Sigma}_s
    =& (1-\lambda)\boldsymbol{\Sigma}_h+\lambda\boldsymbol{\Sigma}^{\text{d}}_h,\label{eq:shrinkage}
  \end{split}
\end{align}
with $\boldsymbol{\Sigma}_h= \frac{1}{T} (\boldsymbol{Y}\boldsymbol{S}^\top-\boldsymbol{\hat{Y}})^\top (\boldsymbol{Y}\boldsymbol{S}^\top-\boldsymbol{\hat{Y}})$ \citep{wickramasuriya2019optimal}. The shrinkage parameter $\lambda$ introduces bias in the estimation of the base-forecast variance--covariance. Thus, \eqref{eq:shrinkage} expresses a bias--variance tradeoff. The weight matrix calculated using $ \boldsymbol{\Sigma}_s$ will be denoted $\vect{P}(\lambda)=(\vect{S}^\top\vect{\Sigma}_s^{-1}\vect{S})^{-1}\vect{S}^\top\vect{\Sigma}_s^{-1}$. 
In this section, we use maximum a posteriori (MAP) estimation. We assume that the conditional density for the observations is $f(\vect{y}|\vect{\theta})$ and that the parameters $\vect{\theta}$ follow some prior distribution $g(\vect{\theta})$. The MAP estimate is obtained as
\begin{align}
  \vect{\hat{\theta}}_{\text{MAP}}= \argmax_{\vect{\theta}}f(\vect{y}|\vect{\theta}) g(\vect{\theta}).\label{eq:MAP}
\end{align}
The purpose is to show equivalence between the shrinkage estimator and an appropriate choice of prior distribution. The formulation gives a point estimator, and the parameters for the prior distribution are assumed fixed.

The MAP formulation excludes estimation of parameters related to the prior distribution, which in our case would be $\lambda$. For fixed $\lambda$, the following theorem applies.

\begin{theorem}[MAP and shrinkage]\label{the:shrink}
  Assume the linear regression model 
  \begin{align}
  \boldsymbol{y}_t-\boldsymbol{\hat{y}}_{B,t}=[\boldsymbol{I}_m\otimes\left(\boldsymbol{\hat{y}}^\top_{T,t}- \boldsymbol{\hat{y}}^\top_{B,t}\boldsymbol{S}_T^\top\right)]\boldsymbol{\beta}_T+ \boldsymbol{\epsilon}_t;\quad \boldsymbol{\epsilon}_t\sim N(\boldsymbol{0},\boldsymbol{\Sigma}_r),\label{eq:shrinkDesign}
\end{align}
  with prior distribution $\boldsymbol{\beta}_T\sim N(\boldsymbol{\beta}_{0,T},\boldsymbol{\Sigma}_r\otimes\boldsymbol{\Sigma}_{\beta,0})$ and
  \begin{align}
    \begin{split}
      \boldsymbol{\beta}_{0,T}=&\vectoriz\left[\left(\boldsymbol{\Sigma}^d_{h,T} + \boldsymbol{S}_T\boldsymbol{\Sigma}_{h,B}^d \boldsymbol{S}_T^\top\right)^{-1} \boldsymbol{S}_T\boldsymbol{\Sigma}_{h,B}^d\right]\\
 \boldsymbol{\Sigma}_{\beta,0}=&\frac{1-\lambda}{\lambda T}\left(\boldsymbol{\Sigma}_{h,T}^d+ \boldsymbol{S}_T \boldsymbol{\Sigma}_{h,B}^d\boldsymbol{S}_T^\top\right)^{-1}.\label{eq:priorBeta}
 \end{split}
  \end{align}
  With the choice of design matrix given in \eqref{eq:shrinkDesign}, the MAP estimate of $\vect{\beta}$ is independent of $\boldsymbol{\Sigma}_r$ and can be written as (with $\boldsymbol{\hat{\beta}}_T=\vectoriz(\boldsymbol{\hat{\beta}}_T^m)$)
  \begin{align}
    \begin{split}
    \boldsymbol{\hat{\beta}}_T^m=& \left((1-\lambda)\left(\boldsymbol{\hat{Y}}_T-\boldsymbol{\hat{Y}}_B\boldsymbol{S}^\top_T\right)^\top\left(\boldsymbol{\hat{Y}}_T-\boldsymbol{\hat{Y}}_B\boldsymbol{S}^\top_T\right) +   \lambda T \left(\boldsymbol{\Sigma}_{h,T}^d+\boldsymbol{S}_T\boldsymbol{\Sigma}_{h,B}^d\boldsymbol{S}_T^\top\right)\right)^{-1}\times\\
&\left( (1-\lambda) \left(\boldsymbol{\hat{Y}}_T-\boldsymbol{\hat{Y}}_B\boldsymbol{S}^\top_T\right)^\top\left(\boldsymbol{Y}-\boldsymbol{\hat{Y}}_B\right) +
  \lambda T \boldsymbol{S}_T\boldsymbol{\Sigma}^d_{h,B}\right)\label{eq:betaTm}.
  \end{split}
\end{align}
The solution is equivalent to $\vect{P}(\lambda)$ in the sense that
\begin{align}
\vect{P}(\lambda)^\top=\left[\begin{matrix}
\vect{0}\\ \vect{I}
\end{matrix}\right] - 
\left[\begin{matrix}
\vect{I}\\ \vect{S}_T^\top
\end{matrix}\right]\vect{\hat{\beta}}_T^m.\label{eq:35}
\end{align}
\end{theorem}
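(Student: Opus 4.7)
The plan is to split the argument into two stages: (i) establishing the MAP formula \eqref{eq:betaTm} and its independence from $\vect{\Sigma}_r$, and (ii) relating that estimator to the shrinkage MinT matrix $\vect{P}(\lambda)$ to obtain \eqref{eq:35}.

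For stage (i), I would combine the Gaussian log-likelihood of \eqref{eq:shrinkDesign} with the log-density of the matrix-normal prior $\vect{\beta}_T\sim N(\vect{\beta}_{0,T},\vect{\Sigma}_r\otimes\vect{\Sigma}_{\beta,0})$. Setting the gradient with respect to $\vect{\beta}_T$ to zero gives the ridge-type normal equation
\[
\bigl(\vect{X}^\top(\vect{I}_T\otimes\vect{\Sigma}_r^{-1})\vect{X}+\vect{\Sigma}_r^{-1}\otimes\vect{\Sigma}_{\beta,0}^{-1}\bigr)\vect{\beta}_T =\vect{X}^\top(\vect{I}_T\otimes\vect{\Sigma}_r^{-1})(\vect{y}-\vect{\hat{y}}_B)+(\vect{\Sigma}_r^{-1}\otimes\vect{\Sigma}_{\beta,0}^{-1})\vect{\beta}_{0,T}.
\]
Using $\vect{X}_{\cdot,t}=\vect{I}_m\otimes(\vect{\hat{y}}_{T,t}^\top-\vect{\hat{y}}_{B,t}^\top\vect{S}_T^\top)$ together with the Kronecker identities in Appendix \ref{sec:useful}, the Gram matrix reduces to $\vect{X}^\top(\vect{I}_T\otimes\vect{\Sigma}_r^{-1})\vect{X}=\vect{\Sigma}_r^{-1}\otimes(\vect{\hat{Y}}_T-\vect{\hat{Y}}_B\vect{S}_T^\top)^\top(\vect{\hat{Y}}_T-\vect{\hat{Y}}_B\vect{S}_T^\top)$, and every remaining term on both sides, after writing $\vect{\beta}_T=\vectoriz(\vect{\beta}_T^m)$ and using $(\vect{C}^\top\otimes\vect{A})\vectoriz(\vect{B})=\vectoriz(\vect{A}\vect{B}\vect{C})$, acquires a common right factor $\vect{\Sigma}_r^{-1}$. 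This factor cancels, leaving a matrix equation in $\vect{\beta}_T^m$ alone and confirming the advertised independence from $\vect{\Sigma}_r$. Inserting the specific prior parameters from \eqref{eq:priorBeta}, the contribution $\vect{\Sigma}_{\beta,0}^{-1}\vect{\beta}_{0,T}^m$ telescopes to $\tfrac{\lambda T}{1-\lambda}\vect{S}_T\vect{\Sigma}_{h,B}^d$, and multiplying through by $(1-\lambda)$ recovers \eqref{eq:betaTm}.

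For stage (ii), I would parameterise coherent weight matrices through the constraint $\vect{P}\vect{S}=\vect{I}_m$. Setting
\[
\vect{P}^\top=\vect{c}+\vect{M}\vect{B},\qquad \vect{c}=\left[\begin{matrix}\vect{0}\\ \vect{I}\end{matrix}\right],\ \vect{M}=\left[\begin{matrix}\vect{I}\\ -\vect{S}_T^\top\end{matrix}\right],\ \vect{B}=\vect{P}_T^\top,
\]
one checks that $\vect{M}^\top\vect{S}=\vect{0}$ so the constraint holds automatically, and the MinT criterion $\mathrm{Tr}(\vect{P}\vect{\Sigma}_s\vect{P}^\top)$ becomes a quadratic in $\vect{B}$ with unique minimiser $\vect{B}=-(\vect{M}^\top\vect{\Sigma}_s\vect{M})^{-1}\vect{M}^\top\vect{\Sigma}_s\vect{c}$. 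Expanding $\vect{\Sigma}_s=(1-\lambda)\vect{\Sigma}_h+\lambda\vect{\Sigma}_h^d$ blockwise and introducing $\vect{U}=\vect{Y}\vect{S}_T^\top-\vect{\hat{Y}}_T$ and $\vect{V}=\vect{Y}-\vect{\hat{Y}}_B$, the identity $\vect{U}-\vect{V}\vect{S}_T^\top=-(\vect{\hat{Y}}_T-\vect{\hat{Y}}_B\vect{S}_T^\top)$ collapses every cross term and yields
\[
T\vect{M}^\top\vect{\Sigma}_s\vect{M}=(1-\lambda)(\vect{\hat{Y}}_T-\vect{\hat{Y}}_B\vect{S}_T^\top)^\top(\vect{\hat{Y}}_T-\vect{\hat{Y}}_B\vect{S}_T^\top)+\lambda T\bigl(\vect{\Sigma}_{h,T}^d+\vect{S}_T\vect{\Sigma}_{h,B}^d\vect{S}_T^\top\bigr)
\]
together with $-T\vect{M}^\top\vect{\Sigma}_s\vect{c}=(1-\lambda)(\vect{\hat{Y}}_T-\vect{\hat{Y}}_B\vect{S}_T^\top)^\top(\vect{Y}-\vect{\hat{Y}}_B)+\lambda T\vect{S}_T\vect{\Sigma}_{h,B}^d$. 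Comparison with \eqref{eq:betaTm} then gives $\vect{B}=\vect{\hat{\beta}}_T^m$, and reassembling $\vect{P}^\top=\vect{c}+\vect{M}\vect{B}$ produces \eqref{eq:35}.

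The main obstacle is the blockwise bookkeeping in stage (ii): because $\vect{\Sigma}_h=T^{-1}(\vect{Y}\vect{S}^\top-\vect{\hat{Y}})^\top(\vect{Y}\vect{S}^\top-\vect{\hat{Y}})$ is quadratic in the full coherency residual rather than in the block-specific residuals that appear on the MAP side, one has to expand $\vect{M}^\top\vect{\Sigma}_h\vect{M}$ and $\vect{M}^\top\vect{\Sigma}_h\vect{c}$ through their four sub-blocks and then apply the identity $\vect{U}-\vect{V}\vect{S}_T^\top=-(\vect{\hat{Y}}_T-\vect{\hat{Y}}_B\vect{S}_T^\top)$ before the algebra lines up with the MAP normal equation. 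The shrinkage contribution $\lambda\vect{\Sigma}_h^d$ is easy because its off-diagonal blocks vanish, so after the collapse the comparison is purely mechanical.
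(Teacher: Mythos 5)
Your proposal is correct and follows the paper's proof in all essentials. Stage (i) is exactly the paper's argument: form the ridge-type normal equation from the Gaussian log-posterior, use $\vect{X}^\top(\vect{I}_T\otimes\vect{\Sigma}_r^{-1})\vect{X}=\vect{\Sigma}_r^{-1}\otimes\vect{X}_{1,\cdot}^\top\vect{X}_{1,\cdot}$ and the $\vectoriz$ identities to cancel the common $\vect{\Sigma}_r^{-1}$ factor, then insert the prior from \eqref{eq:priorBeta}. The only substantive difference is in stage (ii): the paper does not re-derive the null-space representation of the constrained GLS solution but imports it from \citet[Theorem 1]{wickramasuriya2019optimal}, namely $\vect{P}(\lambda)=\vect{J}-\vect{J}\vect{\Sigma}_s\vect{U}(\vect{U}^\top\vect{\Sigma}_s\vect{U})^{-1}\vect{U}^\top$ with $\vect{J}=[\vect{0}\;\;\vect{I}_m]$ and $\vect{U}=[\vect{I}_{n-m};-\vect{S}_T^\top]$ (your $\vect{c}^\top$ and $\vect{M}$), and then performs the same blockwise expansion using $\vect{U}^\top(\vect{Y}\vect{S}^\top-\vect{\hat{Y}})^\top=-(\vect{\hat{Y}}_T^\top-\vect{S}_T\vect{\hat{Y}}_B^\top)$. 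Your derivation of that representation by minimising $\mathrm{Tr}(\vect{P}\vect{\Sigma}_s\vect{P}^\top)$ over $\vect{P}^\top=\vect{c}+\vect{M}\vect{B}$ is fine, but since the theorem defines $\vect{P}(\lambda)$ as $(\vect{S}^\top\vect{\Sigma}_s^{-1}\vect{S})^{-1}\vect{S}^\top\vect{\Sigma}_s^{-1}$, you should add the one-line bridge that the GLS projection satisfies the stationarity condition $\vect{M}^\top\vect{\Sigma}_s\vect{P}^\top=\vect{0}$ of that strictly convex problem (or simply cite Wickramasuriya et al.~as the paper does); what your route buys is self-containedness, at the cost of a slightly longer argument. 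One caution: your computation yields $\vect{P}(\lambda)^\top=\vect{c}+\vect{M}\vect{\hat{\beta}}_T^m=[\vect{0};\vect{I}]+[\vect{I};-\vect{S}_T^\top]\vect{\hat{\beta}}_T^m$, which agrees with the paper's own intermediate identity $\vect{P}=\vect{J}+(\vect{\hat{\beta}}_T^m)^\top\vect{U}^\top$ but differs from the printed \eqref{eq:35} by the sign of the top block; the displayed \eqref{eq:35} appears to carry a sign typo (its own proof's final display would otherwise force $\vect{\hat{\beta}}_T^m=-\vect{\hat{\beta}}_T^m$), so your result is the internally consistent one, but you should flag the discrepancy rather than assert that your expression "produces \eqref{eq:35}" verbatim.
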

\begin{proof}
    The proof relies on results presented in \citet[Theorem 1]{wickramasuriya2019optimal}. See Appendix \ref{proof:shrink} for details.
\end{proof}

The problem defined by \eqref{eq:shrinkDesign}--\eqref{eq:priorBeta} is equivalent to generalized Tikhonov regularisation \citep[see, e.g.,][]{Kalivas2012}; that is, the solution to the minimisation problem 
\begin{align}
  ||\boldsymbol{X}\boldsymbol{\beta}_T-(\boldsymbol{y}-\boldsymbol{\hat{y}}_B)||_{\boldsymbol{\Sigma}^{-1}}^2 + 
  ||\boldsymbol{\beta}_T-\boldsymbol{\beta}_{0,T}||^2_{\boldsymbol{\Sigma}_{\beta}^{-1}}
\end{align}
with $\vect{X}$ defined by \eqref{eq:shrinkDesign} and
the shrinkage target defined by \eqref{eq:priorBeta}
. Other targets could be chosen. For example, the usual Ridge regression with $\vect{\beta}_{0,T}=\vect{0}$ and $\vect{\Sigma}_{0,\beta}=\frac{1}{\lambda}\vect{I}$ ($\lambda>0$) would define the bottom-level base forecast as the shrinkage target.

The next corollary and theorem state the estimate of the variance for a given $\lambda$ and a prior for $\boldsymbol{\Sigma}_r$, such that the MAP estimate is equivalent to the result given in \citet[Lemma 1]{wickramasuriya2019optimal}.
\begin{remark}[MAP estimate of $\vect{\Sigma}_r$]\label{remark:MapSig}
  Using the model in Theorem \ref{the:shrink}, the MAP estimate of $\boldsymbol{\Sigma}_r$ is 
  \begin{align}
    \begin{split}    \boldsymbol{\hat{\Sigma}}_{r,\text{MAP}}
    =& \frac{T/(1-\lambda)}{T+n-m}\left(\boldsymbol{P}(\lambda)\boldsymbol{\Sigma}_s\boldsymbol{P}^\top(\lambda)-     \lambda\left(\left(\boldsymbol{\Sigma}_{h,B}^d\right)^{-1}+ \boldsymbol{S}_T^\top\left(\boldsymbol{\Sigma}_{h,T}^d\right)^{-1}\boldsymbol{S}_T
    \right)^{-1}\right).\label{eq:Sig.rMAP} 
    \end{split}
  \end{align}
  \end{remark}
  \begin{proof}
      See Appendix \ref{collo:MapSig}.
  \end{proof}

The result in Corollary \ref{remark:MapSig} shows that the difference in variance depends on $\lambda$ through a multiplicative factor and on the shrinkage target (prior for the weight matrix). For illustrative purposes consider $\lambda=\frac{n-m}{T+n-m}$, then the difference is
\begin{align}
\vect{P}(\lambda)\vect{\Sigma}_s\vect{P}^\top(\lambda)-\vect{\hat{\Sigma}}_{r,\text{MAP}}=\lambda\left(\left(\boldsymbol{\Sigma}_{h,B}^d\right)^{-1}+ \boldsymbol{S}_T^\top\left(\boldsymbol{\Sigma}_{h,T}^d\right)^{-1}\boldsymbol{S}_T
    \right)^{-1}. 
    \end{align}
Now consider the limit $\lambda\rightarrow 1$. Using \eqref{eq:19} we get 
    \begin{align}
   \vect{\hat{V}}_1= \lim_{\lambda\rightarrow 1}\hat{V}[\vect{y}_t-\vect{\tilde{y}}_t]=&\boldsymbol{P}(1) \boldsymbol{\Sigma}_h^d \boldsymbol{P}(1)^\top
    =\left(\left(\boldsymbol{\Sigma}_{h,B}^d\right)^{-1}+ \boldsymbol{S}_T^\top\left(\boldsymbol{\Sigma}_{h,T}^d\right)^{-1}\vect{S}_T\right)^{-1}\label{eq:V1}
    \end{align}
    while \eqref{eq:Sig.rMAP} becomes
    \begin{align}
    \begin{split}
        \lim_{\lambda\rightarrow 1}\boldsymbol{\hat{\Sigma}}_{r,\text{MAP}}=&\frac{T}{T+n-m}\boldsymbol{P}(1) \boldsymbol{\Sigma}_h \boldsymbol{P}(1)^\top\\
        =&\frac{T}{T+n-m}        \vect{V}_1\vect{S}_T^\top\left(\vect{\Sigma}_{h}^d\right)^{-1/2}\vect{R}_h\left(\vect{\Sigma}_{h}^d\right)^{-1/2}\vect{S}\vect{V}_1,\label{eq:SigMapLim}
        \end{split}
    \end{align}
    where $\vect{R}_h$ is the correlation corresponding to $\vect{\Sigma}_h$ (see Appendix \ref{proof:40.41} for details on showing \eqref{eq:V1} and \eqref{eq:SigMapLim}). Hence, the estimator $\boldsymbol{\hat{\Sigma}}_{r,\text{MAP}}$ includes the correlation of the base-forecast errors, even when it is excluded from the estimation of mean-value parameters. Finally, \eqref{eq:Sig.rMAP} can be written as 
    \begin{align}
    \begin{split}    \boldsymbol{\hat{\Sigma}}_{r,\text{MAP}}
    =& \frac{T}{T+n-m}\left(\boldsymbol{P}(\lambda)\boldsymbol{\Sigma}_h\boldsymbol{P}^\top(\lambda)+     \frac{\lambda}{1-\lambda}\left(
    \boldsymbol{P}(\lambda)\boldsymbol{\Sigma}^d_h\boldsymbol{P}^\top(\lambda)-
    \boldsymbol{P}(1)\boldsymbol{\Sigma}^d_h\boldsymbol{P}^\top(1)
    \right)\right). 
    \end{split}    
    \end{align}
    
     The following theorem introduces a prior for the variance that will give a variance estimate equivalent to the usual forecast reconciliation with shrinkage. 
  
  \begin{theorem}[Prior for $\vect{\Sigma}_r$]\label{the:SigMap}
    If, in addition to the prior given in Theorem \ref{the:shrink}, the prior for the variance--covariance matrix $\boldsymbol{\Sigma}_r$ is chosen as the inverse Whishart $\boldsymbol{\Sigma}_r\sim \mathcal{W}^{-1}(\boldsymbol{\Psi},v)$ ($v>m-1$) with 
  \begin{align}
    \boldsymbol{\Psi}=\frac{\lambda T}{1-\lambda}\left(\left(\boldsymbol{\Sigma}_{h,B}^d\right)^{-1}+ \boldsymbol{S}_T^\top\left(\boldsymbol{\Sigma}_{h,T}^d\right)^{-1}\boldsymbol{S}_T
    \right)^{-1}\label{eq:Psi}
  \end{align}
  and $v=\frac{\lambda T}{1-\lambda}-(n+1)$ (implying that $\lambda>\frac{m+n}{T+m+n}$), then the MAP estimate of $\vect{\hat{\Sigma}}_r$ is
  \begin{align}
    \boldsymbol{\hat{\Sigma}}_{r,\text{shrink}}=&\boldsymbol{P}(\lambda)\boldsymbol{\Sigma}_h\boldsymbol{P}^\top(\lambda).\label{eq:Sigma.rMap}
  \end{align}
  \end{theorem}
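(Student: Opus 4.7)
The plan is to extend Corollary \ref{remark:MapSig} by incorporating the new inverse Wishart log-prior
$\log h(\vect{\Sigma}_r)=-\tfrac{v+m+1}{2}\log|\vect{\Sigma}_r|-\tfrac{1}{2}\mathrm{Tr}(\vect{\Psi}\vect{\Sigma}_r^{-1})+\mathrm{const}$
into the joint log-posterior
$\log f(\vect{y}\mid\vect{\beta}_T,\vect{\Sigma}_r)+\log g(\vect{\beta}_T\mid\vect{\Sigma}_r)+\log h(\vect{\Sigma}_r)$.
Because the new prior does not depend on $\vect{\beta}_T$, the profiling step over $\vect{\beta}_T$ is unaffected, Theorem \ref{the:shrink} still produces $\vect{\hat{\beta}}_T^m$, and only the first-order condition for $\vect{\Sigma}_r$ changes.

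Differentiating the joint log-posterior with respect to $\vect{\Sigma}_r^{-1}$ and applying the standard identities $\partial\log|\vect{\Sigma}_r^{-1}|/\partial\vect{\Sigma}_r^{-1}=\vect{\Sigma}_r$ and $\partial\mathrm{Tr}(\vect{A}\vect{\Sigma}_r^{-1})/\partial\vect{\Sigma}_r^{-1}=\vect{A}$, the likelihood contributes the classical $\tfrac{T}{2}\vect{\Sigma}_r-\tfrac{1}{2}\sum_t\vect{e}_t\vect{e}_t^\top$ piece, the Kronecker prior on $\vect{\beta}_T$ contributes $\tfrac{n-m}{2}\vect{\Sigma}_r$ together with a quadratic-in-$\vect{\beta}_T$ trace (using $|\vect{\Sigma}_r\otimes\vect{\Sigma}_{\beta,0}|=|\vect{\Sigma}_r|^{n-m}|\vect{\Sigma}_{\beta,0}|^m$), and the inverse Wishart contributes $\tfrac{v+m+1}{2}\vect{\Sigma}_r-\tfrac{1}{2}\vect{\Psi}$. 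Since the first two groups of terms coincide with those in the normal equation underlying Corollary \ref{remark:MapSig}, I can substitute their sum using \eqref{eq:Sig.rMAP} to obtain
$(T+n-m+v+m+1)\,\vect{\hat{\Sigma}}_r=\tfrac{T}{1-\lambda}\bigl(\vect{P}(\lambda)\vect{\Sigma}_s\vect{P}^\top(\lambda)-\lambda\vect{V}_1\bigr)+\vect{\Psi}$,
where $\vect{V}_1=((\vect{\Sigma}_{h,B}^d)^{-1}+\vect{S}_T^\top(\vect{\Sigma}_{h,T}^d)^{-1}\vect{S}_T)^{-1}$ as in \eqref{eq:V1}.

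The remainder is a calibration check. Plugging in $v+m+1=\tfrac{\lambda T}{1-\lambda}-(n-m)$ collapses the left-hand coefficient to $\tfrac{T}{1-\lambda}$, while plugging in $\vect{\Psi}=\tfrac{\lambda T}{1-\lambda}\vect{V}_1$ exactly cancels the $-\lambda\vect{V}_1$ summand on the right. The residual correction engineered into Corollary \ref{remark:MapSig} therefore vanishes, and one is left with $\tfrac{T}{1-\lambda}\vect{\hat{\Sigma}}_r=\tfrac{T}{1-\lambda}\vect{P}(\lambda)\vect{\Sigma}_s\vect{P}^\top(\lambda)$, which rearranges to \eqref{eq:Sigma.rMap} after using $\vect{\Sigma}_s=(1-\lambda)\vect{\Sigma}_h+\lambda\vect{\Sigma}_h^d$ together with the equivalent $\vect{\Sigma}_h$-form of Corollary \ref{remark:MapSig} already displayed in the text. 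The main obstacle is the degrees-of-freedom bookkeeping across the three sources (likelihood, $\vect{\beta}_T$ prior, inverse Wishart) and the Kronecker-determinant identity that produces the $n-m$ contribution; as a sanity check I would also verify that the stated threshold $\lambda>\tfrac{m+n}{T+m+n}$ is algebraically equivalent to $v>m-1$, which is the standard condition for the inverse Wishart density to be proper.
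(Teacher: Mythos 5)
Your proposal is correct and follows essentially the same route as the paper's own proof: add the inverse-Wishart log-prior to the MAP objective of Corollary \ref{remark:MapSig}, note that the mean estimation is unaffected, and observe that the stated $\vect{\Psi}$ cancels the correction term while $v=\frac{\lambda T}{1-\lambda}-(n+1)$ collapses the normalising constant $T+n+v+1$ to $\frac{T}{1-\lambda}$ (differentiating with respect to $\vect{\Sigma}_r^{-1}$ rather than $\vect{\Sigma}_r$ is an immaterial difference). One caveat: both your derivation and the paper's actually terminate at $\vect{P}(\lambda)\vect{\Sigma}_s\vect{P}^\top(\lambda)$, which is not equal to the $\vect{P}(\lambda)\vect{\Sigma}_h\vect{P}^\top(\lambda)$ displayed in \eqref{eq:Sigma.rMap} (an apparent typo in the theorem statement), so your final "rearrangement" via $\vect{\Sigma}_s=(1-\lambda)\vect{\Sigma}_h+\lambda\vect{\Sigma}_h^d$ does not, and need not, bridge that gap.
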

  \begin{proof}
 See Appendix \ref{proof:MAP-var}.
  \end{proof}

If $\lambda\leq\frac{m+n}{T+m+n}$ ($v\leq m-1$), then the probability density function of the inverse Wishart is not defined and the MAP interpretation of the above is not valid. However, for any fixed $\lambda>0$, implying that $v>-(n+1)$, \eqref{eq:proofMapVar} in Appendix \ref{proof:MAP-var} defines the solution \eqref{eq:Sigma.rMap} in a regularisation setting. 

In Theorem \ref{the:SigMap}, the prior for the forecast variance is chosen independently of the prior for the mean. It is possible to choose other priors for $\boldsymbol{\Sigma}_r$ depending on the specific application, e.g., independence between observations, AR(1) correlation structure, or not to have any prior for the residual variance. The choice above simply shows the equivalence with the formulation of the forecast variance given by \cite{wickramasuriya2019optimal}. In the formulation of Theorems \ref{the:shrink} and \ref{the:SigMap}, $\lambda$ is the only hyperparameter. 

In the previous derivation we have included REML estimates for each of the models. A similar result for the model presented in Theorems \ref{the:shrink} and \ref{the:SigMap} is given in the corollary below.
\begin{remark}[REML and MAP]\label{remark:RemlMap}
    Using the model in Theorems \ref{the:shrink}--\ref{the:SigMap}, the REML correction term (defined as the log determinant of the Hessian of the log-likelihood wrt.~$\vect{\beta}$) is 
    \begin{align}
     \log |\vect{\vect{X}}^\top\vect{\Sigma}^{-1}\vect{\vect{X}}+\vect{\Sigma}_{\beta}^{-1}|.
    \end{align}
    If $\vect{x}_{i,t}=\vect{x}_{j,t}\in \mathbb{R}^{n-m}$, then the MAP REML estimate of $\vect{\Sigma}_r$ is 
    \begin{align}
  \begin{split}
  \boldsymbol{\hat{\Sigma}}_{r,\text{sREML}} =&\frac{T}{T-(n-m)(1-\lambda)}\boldsymbol{P}(\lambda)\boldsymbol{\Sigma}_s\boldsymbol{P}^\top(\lambda).
    \end{split}\label{eq:MapReml}
    \end{align}
\end{remark}
\begin{proof}
    See Appendix \ref{proof:RemlMap}. 
\end{proof}

In the general case where $\vect{x}_{i,t}\neq\vect{x}_{j,t}$, the MAP REML estimate of $\vect{\Sigma}_r$ is complicated since the priors should be recalculated. The application would be that some columns of $\vect{X}$ are removed due to non-significance, in which case the prior needs to be recalculated and the derivative of the determinant wrt.~$\vect{\Sigma}_r$ should be recalculated in line with the result in \eqref{eq:Sigma.rREML}.
We will not explore this further here. 

The methods used in this section imply that parameter and forecast variance can be written on the form that we discuss in the next section. 

\section{Parameter and forecast variance}\label{sec:Par_Fore_Var}
Using the general theory for the GLM (univariate and iid.~residuals), the variance of a forecast is given by
\begin{align}
  \hat{V}[y_{t+h}]=\boldsymbol{x}_{t+h}^\top\hat{V}[\boldsymbol{\hat{\beta}}]\boldsymbol{x}_{t+h} +\hat{\sigma}^2.\label{eq:39}
\end{align}

In the cases covered in this article, \eqref{eq:39} can be formulated as 
\begin{align}
  \hat{V}[\boldsymbol{y}_{t+h}]=\hat{V}[\vect{X}_{\cdot,t+h}\vect{\hat{\beta}}+\vect{\epsilon}_{t+h}]=
  \boldsymbol{X}_{\cdot,t+h}\hat{V}[\boldsymbol{\hat{\beta}}] \boldsymbol{X}_{\cdot,t+h}^\top +\boldsymbol{\hat{\Sigma}}_r.\label{eq:forcastCov}
\end{align}
The first term relates to the uncertainty of the estimated parameters (referred to as reconciliation-matrix estimation error by \cite{pritularga2021stochastic}). The second term relates to the usual stochastic uncertainty (referred to as the irreducible error in \cite{panagiotelis2021forecast}).

The exact form of $\hat{V}[\boldsymbol{\hat{\beta}}]$ depends on the model formulation, but in the general case it is 
\begin{align}
  \hat{V}[\boldsymbol{\hat{\beta}}] = \left(
  \boldsymbol{X}^\top\left(\boldsymbol{I}_T\otimes
    \boldsymbol{\hat{\Sigma}}_r^{-1}\right)\boldsymbol{X}
    \right)^{-1}  .\label{eq:41}
\end{align}
In case $\boldsymbol{X}_{\cdot,t}=\boldsymbol{I}_m\otimes\boldsymbol{x}_t^T$, this simplifies to
\begin{align}
  \hat{V}[\boldsymbol{\hat{\beta}}] = 
  \boldsymbol{\hat{\Sigma}}_r\otimes 
  \left(\boldsymbol{X}^\top_{1,\cdot} \boldsymbol{X}_{1,\cdot}\right)^{-1}.\label{eq:42}
\end{align}

Any of the  estimates of $\boldsymbol{\hat{\Sigma}}_r$ can be applied (usually the REML estimate would be preferred). When \eqref{eq:forcastCov} is used with $\boldsymbol{\hat{\Sigma}}_{r,\text{sREML}}$, it will be referred to as
\begin{align}
\boldsymbol{\hat{\Sigma}}_{r,\text{par}}(\boldsymbol{X}_{\cdot,t+h})=\boldsymbol{X}_{\cdot,t+h}\hat{V}[\boldsymbol{\hat{\beta}}] \boldsymbol{X}_{\cdot,t+h}^\top +\boldsymbol{\hat{\Sigma}}_{r,\text{sREML}},\label{eq:ForeVar}
\end{align}
and we will use the short hand notation $\boldsymbol{\hat{\Sigma}}_{r,\text{par}}:=\boldsymbol{\hat{\Sigma}}_{r,\text{par}}(\boldsymbol{X}_{\cdot,t+h})$.

In the presence of prior distributions, we can write the MAP estimator \eqref{eq:betaTm} as an affine transformation  of the ML estimate
\begin{align}
  \vect{\hat{\beta}}_T=
\boldsymbol{W}_{1,\lambda}\vect{W}_2\vect{\hat{\beta}}_{T,\text{ML}} + \boldsymbol{W}_{1,\lambda}\boldsymbol{\Sigma}_{\beta}^{-1}\boldsymbol{\beta}_{0,T},\label{eq:W1}
\end{align}
where $\boldsymbol{W}_{1,\lambda}=\left(\boldsymbol{X}^\top(\boldsymbol{I}_T\otimes\boldsymbol{\Sigma}_r)^{-1}\boldsymbol{X} +   \boldsymbol{\Sigma}_{\beta}^{-1}\right)^{-1}$,
  $\vect{W}_2=\boldsymbol{X}^\top (\boldsymbol{I}_T\otimes\boldsymbol{\Sigma}_r)^{-1}\vect{X}$, and $\vect{\hat{\beta}}_{T,\text{ML}}=(\boldsymbol{X}^\top(\boldsymbol{I}_T\otimes\boldsymbol{\Sigma}_r)^{-1}\boldsymbol{X})^{-1}\boldsymbol{X}^\top (\boldsymbol{I}_T\otimes\boldsymbol{\Sigma}_r)^{-1}(\boldsymbol{y}-\boldsymbol{\hat{y}}_B)$ is the ML estimator of $\vect{\hat{\beta}}_T$ \citep{vanWieringen2023}. Ignoring the variance of the priors (i.e., $\vect{\Sigma}_h^{\text{d}}$), the variance of $\vect{\hat{\beta}}_T$ is 
\begin{align}
 \begin{split} \hat{V}[\vect{\hat{\beta}}_T]=& \boldsymbol{W}_{1,\lambda}\vect{W}_2V[\vect{\hat{\beta}}_{T,\text{ML}}]\vect{W}_2\boldsymbol{W}_{1,\lambda}\\
 =&\vect{\hat{\Sigma}}_r\otimes\left((\vect{X}_{1,\cdot}^\top\vect{X}_{1,\cdot} + \vect{\Sigma}_{\beta,0})^{-1} \vect{X}_{1,\cdot}^\top\vect{X}_{1,\cdot} (\vect{X}_{1,\cdot}^\top\vect{X}_{1,\cdot}+\vect{\Sigma}_{\beta,0})^{-1}\right)\label{eq:RidgeVar}.
   \end{split}
\end{align}

Using \eqref{eq:35} and \eqref{eq:vect1} we can find the variance--covariance of the weight matrix as 
\begin{align}
\begin{split}
\hat{V}\left[\vectoriz\left(\vect{P}^\top\right)\right] =&
\hat{V}\left[\vectoriz\left(\left[\begin{matrix}
\vect{0}\\ \vect{I}
\end{matrix}\right] - 
\left[\begin{matrix}
\vect{I}\\ \vect{S}_T^\top
\end{matrix}\right]\vect{\hat{\beta}}_T^m\right)\right]\\
=&
\left(\vect{I}_m\otimes\left[\begin{matrix}
\vect{I}\\ \vect{S}_T^\top
\end{matrix}\right]\right)
\hat{V}\left[\vect{\hat{\beta}}_T\right]
\left(\vect{I}_m\otimes\left[\begin{matrix}
\vect{I}\\ \vect{S}_T^\top
\end{matrix}\right]\right)^\top.
\end{split}
\end{align}
This can be used for calculating standard errors of the weight matrix and pairwise correlations. The standard errors are the basis for the usual Wald test.  

\subsection{Model reduction}\label{sec:statTest}
The theory introduced in the previous sections supports general test strategies, such as ANOVA-type (likelihood-ratio) tests for specific hypothesis and partial (Wald) tests for specific parameters. The latter supports an exploratory approach where the least significant parameters are removed one by one. 

The obvious null hypothesis is that some parameters could be equal to zero. Given the parameter variance, it is straight forward to construct the Wald statistics as 
\begin{align}
  z_{\text{obs},i}=\frac{\vect{\hat{\beta}}_{T,i}}{\sqrt{\hat{V}[\vect{\beta}_{T}]_{ii}}}.
\end{align}
This can be compared to a Student $t$-distribution with appropriate number of degrees of freedom, which can be approximated by a standard normal in most realistic examples.

In light of Lemma \ref{the:proj}, Lemma \ref{lemma:Whishart}, and Corollary \ref{col:CentralSig}, a reasonable test statistic for the null hypothesis $\vect{\beta}=\vect{0}$ against the alternative would be 
\begin{align}
F_I = \frac{\vect{1}^\top \left(\vect{\tilde{Y}}\vect{S}^\top-\vect{\hat{Y}}_I\right)^\top\left(\vect{\tilde{Y}}\vect{S}^\top-\vect{\hat{Y}}_I\right)\vect{1}/(n-m)}{\vect{1}^\top \vect{S}_I\left(\vect{Y}-\vect{\tilde{Y}}\right)^\top\left(\vect{Y}-\vect{\tilde{Y}}\right)\vect{S}_I^\top\vect{1}/(T-n+m)}=\frac{\sum_{i,j\in I}\vect{SS}_{mod,ij}/(n-m)}{\sum_{i,j\in I} \vect{SSE}_{ij}/(T-n+m)}\label{eq:F.test}
\end{align}
where the second equality is notation and $F_I$ should be compared to an $F$-distribution with $n-m$ and $T-n+m$ degrees of freedom. 

When shrinkage is applied, the parameter variance will depend on the shrinkage parameter. We will not explore this point further here.

\section{Simulation study}\label{sec:simStu}
The purpose of this simulation study is to compare different estimators for the forecast-error variance--covariance matrix and discuss the impact of parameter uncertainty.  We simulate the data-generating process at the bottom level ($\vect{y}_{B,t}\in\mathbb{R}^4$) using a multivariate AR(1) process 
\begin{align}
  \boldsymbol{y}_{B,t}=\boldsymbol{A}\boldsymbol{y}_{B,t-1}+\vect{\epsilon}_t;\quad \vect{\epsilon}_t\sim N(\vect{0},\vect{\Sigma}_{\epsilon})\quad\textrm{and iid.,}
\end{align}
with $\vect{A}_{ii}=0.6$, $\vect{A}_{ij}=0.1$ ($i\neq j$),
and $\vect{\Sigma}_{\epsilon}=\vect{I}$. 

The simulated bottom level is aggregated using the summation matrix \eqref{eq:SumExample}.
Independent, univariate AR(1) models are estimated on each level and the forecast errors are calculated based on the estimated models. 

As an example, assume that the diagonal elements of the error variance--covariance matrix are estimated as $\text{diag}(\vect{\Sigma}_h)=[4,2,2,1,1,1,1]$, which is not far from the observed values. In that case the prior weight matrix (calculated from $\vect{\beta}_{0,T}$ \eqref{eq:priorBeta} and the linear constraints \eqref{eq:linCon}) is
\begin{align}
\vect{P}(0)=&\left[
\begin{matrix}
 0.09 & 0.20 & -0.04 & 0.72 &-0.28 &-0.05 &-0.05\\
 0.09 & 0.20 &-0.04& -0.28 & 0.72 &-0.05 &-0.05\\
 0.09 &-0.04 & 0.20 &-0.05 &-0.05 & 0.72 &-0.28\\
 0.09 &-0.04 & 0.20 &-0.05 &-0.05 &-0.28 & 0.72
\end{matrix}\right],\label{eq:weightEx}
\end{align}
where the first three columns of $\vect{P}_0$ correspond to $\vect{\beta}_{0,T}$. The prior variance of $\vect{\beta}_T$ (see \eqref{eq:priorBeta}) is 
\begin{align}
\vect{\Sigma}_{\beta}=&\vect{\Sigma}_r\otimes\left[
\begin{matrix}
    0.16 &-0.08& -0.08\\
 -0.08 & 0.28 & 0.04\\
 -0.08 & 0.04 & 0.28\\
\end{matrix}\right] , 
  \end{align}
and the prior for the variance $\vect{\Sigma}_r$ (see \eqref{eq:Psi}) is 
\begin{align}
\vect{\Psi}=&\left[
\begin{matrix}
  0.72 &-0.28 &-0.05& -0.05\\
 -0.28&  0.72 &-0.05 &-0.05\\
 -0.05& -0.05 & 0.72& -0.28\\
 -0.05& -0.05 &-0.28 & 0.72\\
\end{matrix}\right]  .
  \end{align}

Considering the priors and the prior variance, it is reasonable to assume that statistical tests would lead to $\vect{P}_{1,3},\vect{P}_{2,3},\vect{P}_{3,3}$, $\vect{P}_{4,3}$, and possibly parameters from the first column being removed before other parameters. 
The prior $\vect{\Psi}$ for $\vect{\Sigma}_r$ introduces a covariance structure that dominates in small-sample cases ($\lambda$ large) through the (arbitrary) structure of $\vect{S}$ and the diagonal elements of $\vect{\Sigma}_h$. In our simulation example, a better prior would be $\vect{\Psi}=\sigma_0^2\vect{I}$.

We test three different estimates of the forecast variance:

\begin{enumerate}
  \item The usual MAP shrinkage estimate $\hat{V}[\vect{y}_{t+1}]=\vect{\hat{\Sigma}}_{r,\text{shrink}}$ (see Theorem \ref{the:SigMap}). 
    \item The MAP--REML shrinkage estimate $\hat{V}[\vect{y}_{t+1}]=\vect{\hat{\Sigma}}_{r,\text{sREML}}$ (see Corollary \ref{remark:RemlMap}). 
    \item The forecast variance $\vect{\hat{\Sigma}}_{r,\text{par}}$ defined by \eqref{eq:ForeVar} using the MAP--REML shrinkage estimate of $\vect{\Sigma}_r$.
\end{enumerate}

In all three cases we use the optimal value for $\lambda_{\text{opt}}$ given by \cite{schafer2005shrinkage} and implemented in the R-package {\tt corpcor}.

\begin{figure}
\includegraphics[width=0.48\textwidth,height=0.3\textheight]{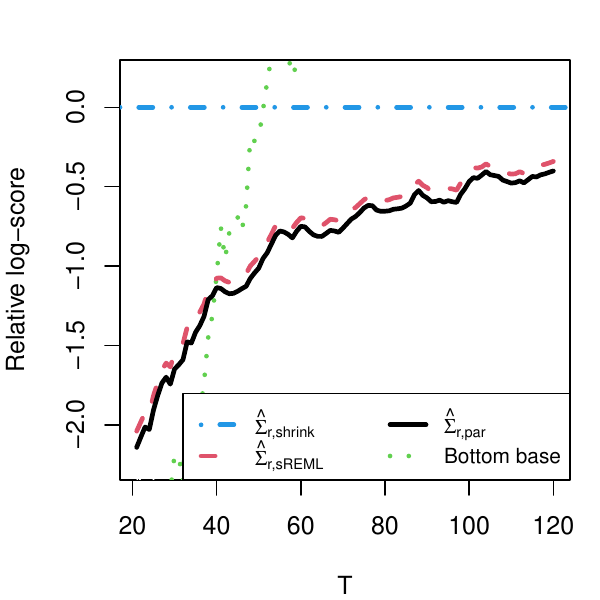} 
 \includegraphics[width=0.51\linewidth,height=0.3\textheight]{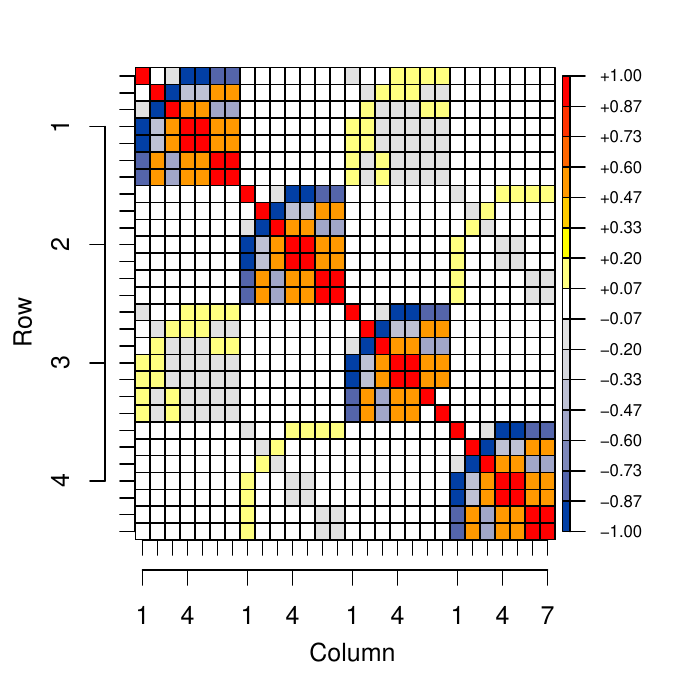} 
\caption{
Left: Relative (to $\vect{\hat{\Sigma}}_{r,\text{shrink}}$) log-score for different estimates of the forecast variance. 
Lines indicate average relative log-score (over $\{T-4,...,T+4\}$).  
 Right: Example of the parameter correlation matrix ($T=120$). Rows and columns refer to the weight matrix \eqref{eq:weightEx}. 
}\label{fig:LogSsim}
\end{figure}

The quality of the forecasts is evaluated out of sample using the log-score \citep{gneiting2007,bjerregaard_et_al2021} for the bottom level
\begin{align}
  LogS(\vect{y},\vect{\tilde{y}},\vect{\hat{\Sigma}}_r) = -\sum_{t=1}^N \log(\phi(\vect{y}_{t+1};\vect{\tilde{y}}_{t+1},\hat{V}[\vect{y}_{t+1}])),\label{eq:logS}
\end{align}
where $\phi(\vect{y};\vect{\mu},\vect{\Sigma})$ is the density of the multivariate normal with mean $\vect{\mu}$ and variance $\vect{\Sigma}$. The relative log-score is the difference between the log-score of two different models.

We simulate 50 times for each length of training and test period. The result is shown in Figure \ref{fig:LogSsim}. We see that the base forecasts perform better than all three methods for small samples (up to about 40 observations). There is a large and consistent gain in using the REML estimator for $\vect{\Sigma}_r$. The forecast variance including the parameter uncertainty consistently produces better results than the other methods, although the difference is small. The effect of the choice of variance model becomes smaller as the number of observations in the training set increases. This is not surprising as the REML correction gets smaller with increasing sample size, as does the correction for parameter uncertainty.

For illustration purposes, we present the result for one of the realisations behind the results in Figure \ref{fig:LogSsim}. The result is for $T=120$ using the optimal value for $\lambda$, which in this case was 0.056. The estimated weight matrix is
\begin{align}
\vect{P}(\lambda)=&
\left[
\begin{matrix}
 0.14  & 0.13 & 0.03 & 0.72 &-0.28& -0.18& -0.18\\
 0.08 & 0.16 &-0.01 &-0.24&  0.76 &-0.07& -0.07\\
 0.19 &-0.01 & 0.14 &-0.18& -0.18 & 0.66& -0.34\\
 0.22 & 0.02 & 0.01 &-0.24& -0.24 &-0.23&  0.77
\end{matrix}\right].\label{eq:ExampP}
  \end{align}
The REML estimate of $\vect{\Sigma}_r$ is
\begin{align}
\vect{\hat{\Sigma}}_{r,sREML}=&
\left[
\begin{matrix}
  1.23 & 0.14 &-0.15& -0.07\\
  0.14 & 1.11 &-0.05& -0.02\\
 -0.15 &-0.05 & 1.03 &-0.14\\
 -0.07 &-0.02 &-0.14 & 0.99
\end{matrix}\right]\label{eq:ExampSig}
  \end{align}
and the standard errors related to $\vect{P}(\lambda)$ are
\begin{align}
se(\vect{P}(\lambda))=&
\left[
  \begin{matrix}
 0.075& 0.078& 0.025& 0.088& 0.064& 0.085& 0.070\\
 0.026& 0.093& 0.066& 0.088& 0.071& 0.067& 0.070\\
 0.070& 0.093& 0.074& 0.069& 0.071& 0.023& 0.084\\
 0.078& 0.071& 0.074& 0.024& 0.085& 0.063& 0.084  
\end{matrix}\right].
\end{align}

Only the first three columns of the weight matrix are actual parameters. The rest are determined by linear constraints. We see that only three parameters (from the first column) are more than two standard errors away from zero. Hence, from a partial test perspective most parameters could be set to zero. As the linear constraints introduce correlation, the testing should be done stepwise by recalculating everything or consider the correlation when testing multiple parameters.

The correlation matrix corresponding to the parameters is presented in Figure \ref{fig:LogSsim}. There is a high (negative) correlation between $\vect{P}_{i,2}$ and $\vect{P}_{i,3}$ and there are high correlations between the top- and bottom-level weights (introduced by the linear constraints). In this example, there is a weak correlation between different rows in the weight matrix. This is due to the weak correlation in the estimated variance matrix \eqref{eq:ExampSig}.

\begin{table}[t]
\centering
\begingroup\small
\begin{tabular}{l|r|rr|r|rrr}
  & $\frac{||\vect{y}_I-\vect{\hat{y}}_I||^2}{T}$ & $\frac{||\vect{y}_I-\vect{\tilde{y}}_I||^2}{T}$ & $\frac{||\vect{\tilde{y}}_I-\vect{\hat{y}}_I||^2}{T}$ & $\frac{ \frac{\sum\vect{SS}_{mod}}{n-m}}{\frac{\sum \vect{SSE}}{T-n+m}}$ & $\frac{||\vect{y}_I-\vect{\tilde{y}}_I^{\lambda}||^2}{T}$ & $\frac{||\vect{\tilde{y}}_I^{\lambda}-\vect{\hat{y}}_I||^2}{T}$ & $\frac{2(\vect{\tilde{e}}_I^{\lambda})^T\vect{{\tilde{\hat{e}}}}_I^{\lambda}}{T}$ \\ 
  \hline
24h & 220.2 & 113.25 & 106.91 & 8.63 & 131.03 & 81.58 & 7.55 \\ 
  12h & 101.3 & 67.31 & 33.95 & 4.85 & 78.09 & 20.71 & 2.46 \\ 
  8h & 66.7 & 49.95 & 16.75 & 3.13 & 57.90 & 6.91 & 1.88 \\ 
  6h & 52.9 & 39.88 & 13.00 & 2.96 & 46.23 & 5.07 & 1.58 \\ 
  4h & 38.2 & 27.88 & 10.29 & 3.63 & 32.33 & 4.64 & 1.21 \\ 
  3h & 29.8 & 21.52 & 8.27 & 3.85 & 24.91 & 3.89 & 0.99 \\ 
  2h & 19.6 & 14.72 & 4.87 & 3.10 & 17.05 & 1.89 & 0.65 \\ 
  1h & 10.6 & 7.63 & 2.97 & 3.64 & 8.83 & 1.42 & 0.34 \\ 
   \hline
Total & 539.1 & 342.14 & 197.00 & 3.27 & 396.36 & 126.13 & 16.66 \\ 
  \end{tabular}
\endgroup
\caption{\label{tab:projTabSE} Variance separation \eqref{eq:proj2} on the training set for area SE. Column one is the sum of squared base-forecast errors on the training set. Columns two and three illustrate the orthogonal projection by adding up to column one. Column four is the test statistic \eqref{eq:F.test}. Columns five to seven  show the results using shrinkage ($\vect{\tilde{e}}_I^{\lambda}=\vect{y}_I-\vect{\tilde{y}}_I^{\lambda}$ and $\vect{{\tilde{\hat{e}}}}_I^{\lambda}=\vect{\tilde{y}}_I^{\lambda}-\vect{\hat{y}}_I$).} 
\end{table}

\section{Forecasting electricity load}\label{sec:case}
As a case study, we consider forecasts of electricity load in Sweden. The load is divided into four areas plus the total (sum of the four areas). This gives a total of five series. The considered data spans a period of five years from 2016 to 2020. The years 2016--2019 are used for estimating a mean-value model including seasonal and diurnal variation. The 2019 residuals from that model are used to estimate double-seasonal AR-models with weekly and daily variation. Further details on the model and data can be found in \cite{moller2023}. We will only briefly outline the parts that are central for illustrating the methods presented in this article.

Base forecasts are made using AR models, as described above, once every 24 hours for the next 24 hours. The levels of the models are 1, 2, 3, 4, 6, 8, 12, and 24 hours. This implies that $n=60$ and $m=24$. The purpose is to illustrate the estimation methods for the base-forecast variance--covariance matrix, the parameter variance--covariance, and the orthogonal projections. We focus on the bottom level (hourly forecast 1--24 hours ahead) in the accuracy evaluation.


\begin{table}[]
\centering
\begingroup\small
\begin{tabular}{l|rrrrrr}
  & SE & SE1 & SE2 & SE3 & SE4  \\ 
  \hline
RMSE(base) & 0.60 & 0.11 & 0.19 & 0.39 & 0.15  \\ 
  RRMSE(reconciled) & -5.78 & -2.74 & -6.31 & -5.69 & -8.93  \\ 
   \hline
LogS(base) & -615.11 & -8361.17 & -6761.02 & -6297.84 & -13808.00 \\ 
  relLogS($\vect{\hat{\Sigma}}_{r,\text{shrink}}$) & {\bf -116.85} & 128.23 & -322.59 & {\bf 197.63} & {\bf 251.12} \\ 
  relLogS($\vect{\hat{\Sigma}}_{r,\text{sREML}}$) & -104.46 & -246.21 & -454.52 & 308.47 & 382.18 \\ 
  relLogS($\vect{\hat{\Sigma}}_{r,\text{par}}$) & -102.82 & {\bf -333.07} & {\bf -470.30} & 321.75 & 393.28 \\ 
   \hline
Vs(base) & 36779.81 & 118.98 & 361.77 & 6616.91 & 241.28  \\ 
  relVs($\vect{\hat{\Sigma}}_{r,\text{shrink}}$) & {\bf -17.40} & 0.24 & {\bf -1.52} & {\bf -21.88} & \bf{-17.56} \\ 
  relVs($\vect{\hat{\Sigma}}_{r,\text{sREML}}$) & -16.56 &  -0.92 & 0.98 & -20.22 & -16.68  \\ 
  relVs($\vect{\hat{\Sigma}}_{r,\text{par}}$) & -16.54 & {\bf -1.23} & 1.36 & -20.08 & -16.69 \\ 
   \hline
\end{tabular}
\endgroup
\caption{\label{tab:summCase} Summary statistics for out-of-sample load forecasts at the bottom level. The first two rows are the accuracy of the base and reconciled forecasts. The log-score evaluates the full distribution and the last rows evaluate the covariance estimates using the variogram score. 
}
\end{table}

Table \ref{tab:projTabSE} illustrates variance separation \eqref{eq:proj2} on the training set for area SE. Column one is the sum of squared base-forecast errors on the training set. Columns two and three illustrate the orthogonal projection by adding up to column one. Comparing column four of Table \ref{tab:projTabSE} to an F-distribution with $n-m=36$ and $T-n+m=365-36=329$ degrees of freedom (95\%-quantile is 1.45) shows that the improvement in sum of squared errors is significant on all levels of the hierarchy, with the largest improvement occurring at the top level. When shrinkage is applied, the sums of squared errors increase and the projection is no longer orthogonal (last column is not equal to zero). Similar tables for the individual areas are given in Appendix \ref{sec:projTabs}.

If the assumptions of the model were true, then the obvious metric for evaluation would be the log-score. As the log-score is very sensitive to deviations in the distribution assumption \citep{bjerregaard_et_al2021}, we will also consider the variogram score \citep[see, e.g.,][]{Scheuerer_Hamill2015}
\begin{align}
  Vs_p(F,\vect{y})=& \sum_{i,j}^{m}w_{ij}(|y_i-y_j|^p-E_F[|Y_i-Y_j|^p])^2,\label{eq:VarS}
  \end{align}
where $F$ is the forecast distribution, $\vect{y}$ is a set of observations, $w_{ij}$ is a weight function,
and $p$ is the order of the variogram score. We choose $w_{ij}=1$ and $p=2$, implying that there is a closed-form solution for $E[|Y_i-Y_j|^p$ under the Gaussian assumption \citep[see][]{Scheuerer_Hamill2015}.

\begin{knitrout}
\definecolor{shadecolor}{rgb}{0.969, 0.969, 0.969}\color{fgcolor}
\begin{figure}
{\centering \includegraphics[width=\maxwidth,height=0.25\textheight]{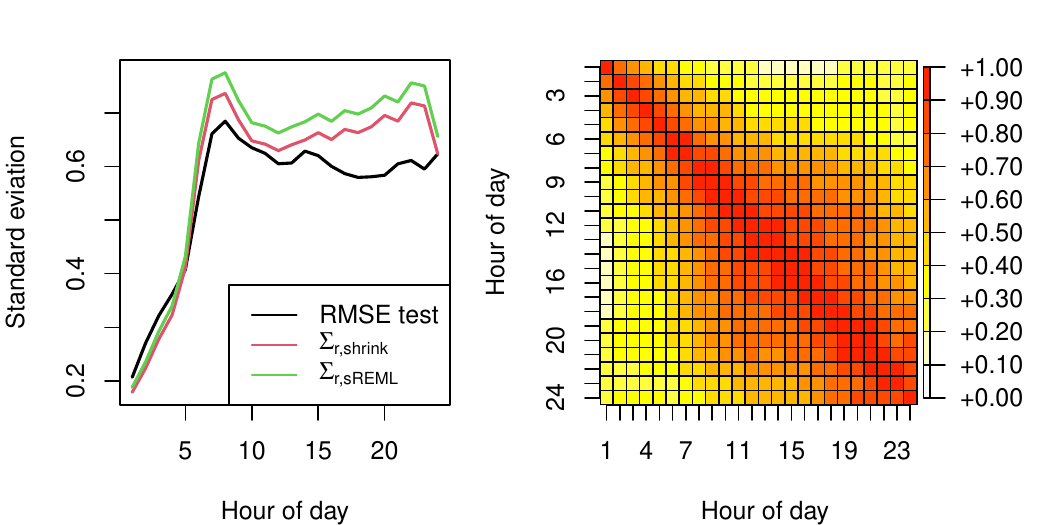} 
}
\vspace{-1cm}
\caption{Left: Out-of-sample RMSE and standard deviation for different estimates of the variance--covariance matrix. Right: Correlation matrix corresponding to $\vect{\hat{\Sigma}}_{r,\text{sREML}}$.\label{fig:SweSigr}}
\end{figure}
\end{knitrout}

In addition to the variogram and log-score, we consider the relative root mean square error defined by 
\begin{align}
  RRMSE =\frac{RMSE_{\text{Reconciled}}-RMSE_{\text{Base}}}{RMSE_{\text{Base}}} \cdot 100\%.\label{eq:RRMSE}
\end{align}
In a similar way, we calculate a relative variogram score as 
\begin{align}
    relVs =\frac{Vs_2(F,\vect{y})-Vs_2(F_{\text{Base}},\vect{y})}{Vs_2(F_{\text{Base}},\vect{y})} \cdot 100\%.\label{eq:relVs}
\end{align}

All evaluated models have the same mean-value prediction and, hence, also the same RRMSE. Using the observed bottom-level base-forecast error variance--covariance matrix as reference, we evaluate different estimators for the forecast variance--covariance of the reconciled forecast error. A summary of the results is shown in Table \ref{tab:summCase}. RRMSE is improved in all areas.  
For most models the log-score is very good in some areas and very bad in other areas, which shows the sensitivity of the log-score.  
The relative variogram score is more consistent across the models.

\begin{knitrout}
\definecolor{shadecolor}{rgb}{0.969, 0.969, 0.969}\color{fgcolor}\begin{figure}
{\centering \includegraphics[width=1\linewidth]{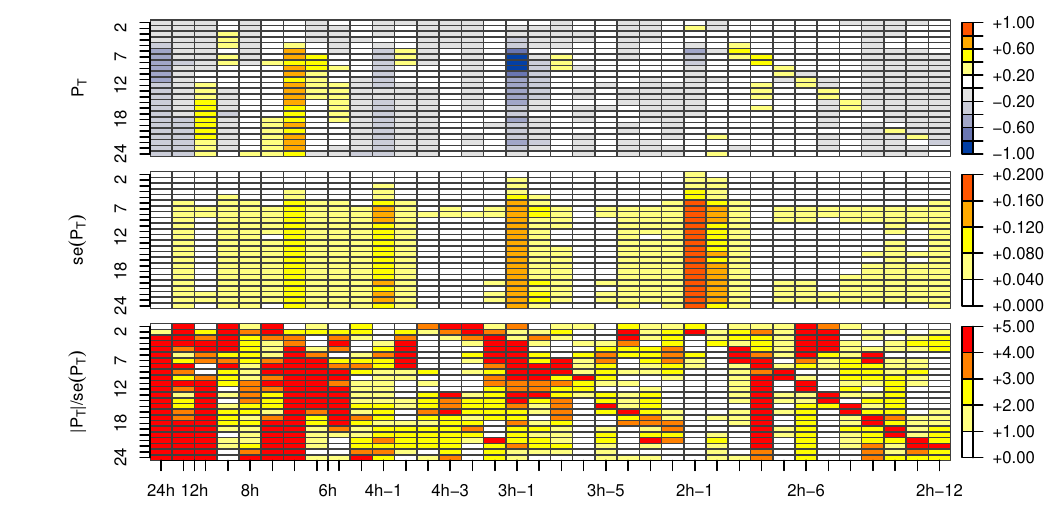} }
\vspace{-1cm}
\caption[]{Top: Volume-weighted weight matrix. Middle: Corresponding standard errors. Bottom: Absolute value of the weights divided by standard errors.}\label{fig:PlotWeights}
\end{figure}
\end{knitrout}

The estimated correlation matrix and standard deviation for the entire Sweden (SE) is shown in Figure \ref{fig:SweSigr}. The autocorrelation and increasing variance in the residuals is very clear. 
In this particular case the observed variance on the test set is a bit smaller than the estimate. 
From similar plots for the other areas (Figure \ref{fig:SweSigrSE1}) it is clear that it is not a general picture that the observed variance on the test set is smaller. In SE2 and SE4, the estimated variance on horizon 22--23 is very high, which is an indication of outliers.

\begin{wrapfigure}{R}{0.5\textwidth}
\begin{center}
\vspace{-1cm}
\includegraphics[width=\linewidth,height=0.35\textheight]{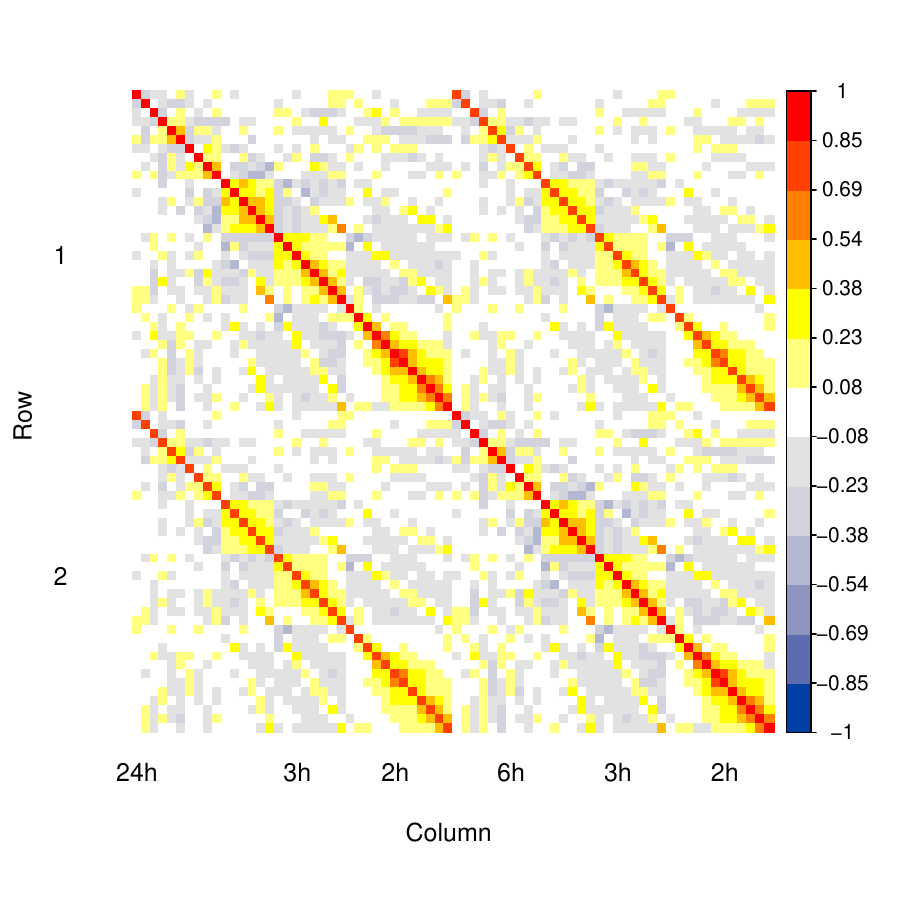} 
\vspace{-2cm}
\end{center}
\caption{Parameter correlation for the first two rows in the estimated weight matrix.\label{fig:SweCorrPlot}}
\end{wrapfigure}

The weights, their standard errors, and the ratio between the absolute values of the weights and their standard errors (absolute value of the Wald test statistic for the hypothesis that the weight is zero) are shown in Figure \ref{fig:PlotWeights} for area SE (see Appendix \ref{app:case}, Figure \ref{fig:PlotWeightsSE1_4} for other areas). It is difficult to distinguish clear patterns in the weights and standard errors; however, it is clear that the Wald statistics are high for two-hour forecasts as explanatory variables for the two corresponding one-hour forecasts. There are  many weights that from a testing perspective should be removed.

The correlation matrix for the first two rows and the first 36 columns  in the weight matrix (first $72$ of 864 elements of $\vect{\hat{\beta}}_T$) is shown in Figure \ref{fig:SweCorrPlot}. It is clear that there is correlation between some weights from the same row and that the correlation is decreasing with the distance in time. The correlation between forecasts on different levels (e.g., weights related to two- and three-hour forecasts) are generally low both within and between rows. Further, the correlation between row one and two is strong for the same predictor but decreases for rows further away (see Figure \ref{fig:SweCorrPlotFull} in Appendix \ref{app:case} for the full correlation matrix). This decreasing correlation is a consequence of the decreasing correlation in the residual correlation matrix shown in Figure \ref{fig:SweSigr}. 

\section{Conclusion}\label{sec:con}
We have formulated the forecast reconciliation problem as a GLM to enable uncertainty quantification and shown the equivalence between usual forecast reconciliation and the MAP estimates for specific choices of priors. This formulation allowed us to build on well-known results from regression analysis to formulate the REML estimate of the forecast-error variance--covariance matrix. The GLM formulation implied that forecast reconciliation could be viewed as orthogonal projections for all levels of the hierarchy, which we used to prove distance-reducing properties.

A further benefit of the proposed GLM formulation is the possibility of estimating the parameter (weight matrix) variance--covariance matrix and calculating standard errors for those. The simulation study indicated that REML estimation is important for a precise estimate of the uncertainty of the reconciled forecasts. Additionally, including the parameter uncertainty in the forecast variance--covariance matrix gave some improvement in forecast skill. Orthogonality and the effect of shrinkage were illustrated along with parameter uncertainty and correlation for high-dimensional parameter vectors in the case study on electricity load forecasting in Sweden.  

The presented framework provides a clear path for model reduction. In the case study, the results indicated that many weights could be set to zero. It should be investigated how this would affect the overall accuracy of the reconciled forecasts. How such tests should be conducted will be the subject of future research. In particular, some guidelines on the order of testing should be developed. The impact of shrinkage on the statistical tests also needs investigation.

In this article, we have focused on showing the equivalence between forecast reconciliation and the GLM formulation, while leaving the estimation of the forecast variance--covariance as variants of the observed squared deviation between the reconciled forecasts and observations. Using the presented framework, parameterised versions of the bottom-level reconciled variance--covariance matrix could be estimated using maximum likelihood or REML estimation.

In summary, we believe that the formulation proposed in this article contributes to the understanding of forecast reconciliation. Specifically, the question of the uncertainty of the estimated reconciliation weights has been answered, at least in the Gaussian case. The proposed GLM formulation provides many opportunities for future research.

\bibliographystyle{PNY.bst}
\bibliography{references.bib}

\begin{thebibliography}{30}
\newcommand{\enquote}[1]{``#1''}
\providecommand{\natexlab}[1]{#1}
\providecommand{\url}[1]{\texttt{#1}}
\providecommand{\urlprefix}{URL }

\bibitem[{Athanasopoulos et~al.(2009)Athanasopoulos, Ahmed, and
  Hyndman}]{athanasopoulos2009hierarchical}
Athanasopoulos, G., R.\,A. Ahmed, and R.\,J. Hyndman.
\newblock \enquote{Hierarchical forecasts for {A}ustralian domestic tourism.}
\newblock \emph{International Journal of Forecasting}, vol.~25, no.~1 (2009),
  pp. 146--166.

\bibitem[{Athanasopoulos et~al.(2017)Athanasopoulos, Hyndman, Kourentzes, and
  Petropoulos}]{athanasopoulos2017forecasting}
Athanasopoulos, G., R.\,J. Hyndman, N.~Kourentzes, and F.~Petropoulos.
\newblock \enquote{Forecasting with temporal hierarchies.}
\newblock \emph{European Journal of Operational Research}, vol. 262, no.~1
  (2017), pp. 60--74.

\bibitem[{Bergsteinsson et~al.(2021)Bergsteinsson, M{\o}ller, Nystrup,
  \'{O}lafur P.~P\'{a}lsson, Guericke, and Madsen}]{bergsteinsson2021heat}
Bergsteinsson, H.\,G., J.\,K. M{\o}ller, P.~Nystrup, \'{O}lafur P.~P\'{a}lsson,
  D.~Guericke, and H.~Madsen.
\newblock \enquote{Heat load forecasting using adaptive temporal hierarchies.}
\newblock \emph{Applied Energy}, vol. 292 (2021), p. 116872.

\bibitem[{Bjerreg{\aa}rd et~al.(2021)Bjerreg{\aa}rd, M{\o}ller, and
  Madsen}]{bjerregaard_et_al2021}
Bjerreg{\aa}rd, M., J.~M{\o}ller, and H.~Madsen.
\newblock \enquote{An introduction to multivariate probabilistic forecast
  evaluation.}
\newblock \emph{Energy and AI}, vol.~4 (2021), p. 100058.

\bibitem[{Di~Fonzo and Girolimetto(2022)}]{di2021forecast}
Di~Fonzo, T. and D.~Girolimetto.
\newblock \enquote{Forecast combination-based forecast reconciliation: Insights
  and extensions.}
\newblock \emph{International Journal of Forecasting} (2022).
\newblock \urlprefix\url{http://dx.doi.org/10.1016/j.ijforecast.2022.07.001}.

\bibitem[{{Di Fonzo} and Girolimetto(2023)}]{di_fonzo2023}
{Di Fonzo}, T. and D.~Girolimetto.
\newblock \enquote{Cross-temporal forecast reconciliation: Optimal combination
  method and heuristic alternatives.}
\newblock \emph{International Journal of Forecasting}, vol.~39, no.~1 (2023),
  pp. 39--57.

\bibitem[{Eckert et~al.(2021)Eckert, Hyndman, and
  Panagiotelis}]{eckert2021forecasting}
Eckert, F., R.\,J. Hyndman, and A.~Panagiotelis.
\newblock \enquote{Forecasting {S}wiss exports using {B}ayesian forecast
  reconciliation.}
\newblock \emph{European Journal of Operational Research}, vol. 291, no.~2
  (2021), pp. 693--710.

\bibitem[{Gneiting and Raftery(2007)}]{gneiting2007}
Gneiting, T. and A.\,E. Raftery.
\newblock \enquote{Strictly proper scoring rules, prediction, and estimation.}
\newblock \emph{Journal of the American Statistical Association}, vol. 102, no.
  477 (2007), pp. 359--378.

\bibitem[{Hansen et~al.(2023)Hansen, Nystrup, M{\o}ller, and
  Madsen}]{hansen_et_al2023}
Hansen, M., P.~Nystrup, J.~M{\o}ller, and H.~Madsen.
\newblock \enquote{Reconciliation of wind power forecasts in spatial
  hierarchies.}
\newblock \emph{Wind Energy}, vol.~26, no.~6 (2023), pp. 615--632.

\bibitem[{Hollyman et~al.(2021)Hollyman, Petropoulos, and
  Tipping}]{hollyman2021understanding}
Hollyman, R., F.~Petropoulos, and M.\,E. Tipping.
\newblock \enquote{Understanding forecast reconciliation.}
\newblock \emph{European Journal of Operational Research}, vol. 294, no.~1
  (2021), pp. 149--160.

\bibitem[{Hyndman et~al.(2011)Hyndman, Ahmed, Athanasopoulos, and
  Shang}]{hyndman2011optimal}
Hyndman, R.\,J., R.\,A. Ahmed, G.~Athanasopoulos, and H.\,L. Shang.
\newblock \enquote{Optimal combination forecasts for hierarchical time series.}
\newblock \emph{Computational Statistics {\&} Data Analysis}, vol.~55, no.~9
  (2011), pp. 2579--2589.

\bibitem[{Iranmanesh et~al.(2010)Iranmanesh, Arashi, and
  Tabatabaey}]{Iranmanesh_2010}
Iranmanesh, A., M.~Arashi, and S.\,M.\,M. Tabatabaey.
\newblock \enquote{On conditional applications of matrix variate normal
  distribution.}
\newblock \emph{Iranian Journal of Mathematical Sciences and Informatics},
  vol.~5, no.~2 (2010), pp. 33--43.

\bibitem[{Jeon et~al.(2019)Jeon, Panagiotelis, and
  Petropoulos}]{jeon2019probabilistic}
Jeon, J., A.~Panagiotelis, and F.~Petropoulos.
\newblock \enquote{Probabilistic forecast reconciliation with applications to
  wind power and electric load.}
\newblock \emph{European Journal of Operational Research}, vol. 279, no.~2
  (2019), pp. 364--379.

\bibitem[{Kalivas(2012)}]{Kalivas2012}
Kalivas, J.\,H.
\newblock \enquote{Overview of two-norm {(L2)} and one-norm {(L1)} {Tikhonov}
  regularization variants for full wavelength or sparse spectral multivariate
  calibration models or maintenance.}
\newblock \emph{Journal of Chemometrics}, vol.~26, no.~6 (2012), pp. 218--230.

\bibitem[{Kourentzes and Athanasopoulos(2019)}]{kourentzes2019cross}
Kourentzes, N. and G.~Athanasopoulos.
\newblock \enquote{Cross-temporal coherent forecasts for {A}ustralian tourism.}
\newblock \emph{Annals of Tourism Research}, vol.~75 (2019), pp. 393--409.

\bibitem[{Ledoit and Wolf(2003)}]{ledoit2003improved}
Ledoit, O. and M.~Wolf.
\newblock \enquote{Improved estimation of the covariance matrix of stock
  returns with an application to portfolio selection.}
\newblock \emph{Journal of Empirical Finance}, vol.~10, no.~5 (2003), pp.
  603--621.

\bibitem[{Madsen(2008)}]{Madsen07}
Madsen, H.
\newblock \emph{Time Series Analysis}.
\newblock Chapman and Hall (2008).

\bibitem[{Madsen and Thyregod(2011)}]{madsen_thyregod_2011}
Madsen, H. and P.~Thyregod.
\newblock \emph{Introduction to general and generalized linear models}.
\newblock Texts in statistical science. CRC Press (2011).

\bibitem[{M{\o}ller et~al.(2023)M{\o}ller, Nystrup, and Madsen}]{moller2023}
M{\o}ller, J.\,K., P.~Nystrup, and H.~Madsen.
\newblock \enquote{Likelihood-based inference in temporal hierarchies.}
\newblock \emph{International Journal of Forecasting} (2023).
\newblock \urlprefix\url{http://dx.doi.org/10.1016/j.ijforecast.2022.12.005}.

\bibitem[{Nystrup et~al.(2021)Nystrup, Lindstr\"{o}m, M{\o}ller, and
  Madsen}]{nystrup2021dimensionality}
Nystrup, P., E.~Lindstr\"{o}m, J.\,K. M{\o}ller, and H.~Madsen.
\newblock \enquote{Dimensionality reduction in forecasting with temporal
  hierarchies.}
\newblock \emph{International Journal of Forecasting}, vol.~37, no.~3 (2021),
  pp. 1127--1146.

\bibitem[{Nystrup et~al.(2020)Nystrup, Lindstr\"{o}m, Pinson, and
  Madsen}]{nystrup2020temporal}
Nystrup, P., E.~Lindstr\"{o}m, P.~Pinson, and H.~Madsen.
\newblock \enquote{Temporal hierarchies with autocorrelation for load
  forecasting.}
\newblock \emph{European Journal of Operational Research}, vol. 280, no.~3
  (2020), pp. 876--888.

\bibitem[{Panagiotelis et~al.(2021)Panagiotelis, Athanasopoulos, Gamakumara,
  and Hyndman}]{panagiotelis2021forecast}
Panagiotelis, A., G.~Athanasopoulos, P.~Gamakumara, and R.\,J. Hyndman.
\newblock \enquote{Forecast reconciliation: A geometric view with new insights
  on bias correction.}
\newblock \emph{International Journal of Forecasting}, vol.~37, no.~1 (2021),
  pp. 343--359.

\bibitem[{Petersen and Pedersen(2006)}]{petersen_Matrix2006}
Petersen, K.\,B. and M.\,S. Pedersen.
\newblock \emph{The Matrix Cookbook}.
\newblock Technical University of Denmark (2006).
\newblock \urlprefix\url{http://www2.compute.dtu.dk/pubdb/pubs/3274-full.html}.
\newblock Version 20051003.

\bibitem[{Pritularga et~al.(2021)Pritularga, Svetunkov, and
  Kourentzes}]{pritularga2021stochastic}
Pritularga, K.\,F., I.~Svetunkov, and N.~Kourentzes.
\newblock \enquote{Stochastic coherency in forecast reconciliation.}
\newblock \emph{International Journal of Production Economics}, vol. 240
  (2021), p. 108221.

\bibitem[{Rao(1973)}]{Rao_1973}
Rao, C.
\newblock \emph{Linear Statistical Inference and its Applications}.
\newblock John Wiley \& Sons (1973).
\newblock Second edition.

\bibitem[{Sch{\"a}fer and Strimmer(2005)}]{schafer2005shrinkage}
Sch{\"a}fer, J. and K.~Strimmer.
\newblock \enquote{A shrinkage approach to large-scale covariance matrix
  estimation and implications for functional genomics.}
\newblock \emph{Statistical Applications in Genetics and Molecular Biology},
  vol.~4, no.~1 (2005), p.~32.

\bibitem[{Scheuerer and Hamill(2015)}]{Scheuerer_Hamill2015}
Scheuerer, M. and T.\,M. Hamill.
\newblock \enquote{Variogram-based proper scoring rules for probabilistic
  forecasts of multivariate quantities.}
\newblock \emph{Monthly Weather Review}, vol. 143, no.~4 (2015), pp.
  1321--1334.

\bibitem[{van Wieringen(2023)}]{vanWieringen2023}
van Wieringen, W.\,N.
\newblock \emph{Lecture notes on ridge regression} (2023).
\newblock \urlprefix\url{https://arxiv.org/pdf/1509.09169.pdf}.
\newblock Version 0.60, June 27, 2023.

\bibitem[{Wickramasuriya et~al.(2019)Wickramasuriya, Athanasopoulos, and
  Hyndman}]{wickramasuriya2019optimal}
Wickramasuriya, S.\,L., G.~Athanasopoulos, and R.\,J. Hyndman.
\newblock \enquote{Optimal forecast reconciliation for hierarchical and grouped
  time series through trace minimization.}
\newblock \emph{Journal of the American Statistical Association}, vol. 114, no.
  526 (2019), pp. 804--819.

\bibitem[{Yang et~al.(2017)Yang, Quan, Disfani, and
  Rodr{\'{\i}}guez-Gallegos}]{yang2017temporal}
Yang, D., H.~Quan, V.\,R. Disfani, and C.\,D. Rodr{\'{\i}}guez-Gallegos.
\newblock \enquote{Reconciling solar forecasts: Temporal hierarchy.}
\newblock \emph{Solar Energy}, vol. 158 (2017), pp. 332--346.

\end{thebibliography}

\newpage
\appendix
\renewcommand{\theequation}{\thesection.\arabic{equation}}
\numberwithin{equation}{section}
\numberwithin{table}{section}
\numberwithin{figure}{section}

\section{Nomenclature}\label{sec:momen}

\begin{longtable}{| p{.34\textwidth} | p{.59\textwidth} |} 
  Symbol & Explanation  \\
  \hline
  ML & Maximum likelihood \eqref{eq:Sigma.rML}\\
  REML & Restricted maximum likelihood \eqref{eq:Sigma.rREML}\\
  MAP & Maximum a posteriori \eqref{eq:MAP}\\
  GLM & General linear model \\
  RRMSE & Relative root mean square error \eqref{eq:RRMSE}\\
  LogS & Log-score \eqref{eq:logS}\\
  $Vs_p$ & Variogram score \eqref{eq:VarS}\\
  relVs & Relative variogram score \eqref{eq:relVs}\\
    $T\in \mathbb{N}$ & Number of observations in the training set \eqref{eq:glmForm}\\
 $n \in\mathbb{N}$  & Dimension of the base forecast \eqref{eq:linCon}\\
    $m\in \mathbb{N}$ & Dimension of the bottom level observations \eqref{eq:linCon}\\
    $\lambda\in [0,1]$ & Shrinkage parameter \eqref{eq:shrinkage}\\
    $v\in\mathbb{R}_+$ & Degrees of freedom for prior distribution of $\vect{\Sigma}_r$ \eqref{eq:Psi}\\
    $\vect{H}\in\mathbb{R}^{T\times T}$ & Projection matrix Lemma (\ref{the:proj})\\
        $\boldsymbol{Y}=\vectoriz^{-1}(\boldsymbol{y})\in \mathbb{R}^{T\times m}$ & Collection of all bottom-level observations \eqref{eq:ConstrReg}\\
          $\boldsymbol{\hat{Y}}=[\boldsymbol{\hat{Y}}_T\quad \boldsymbol{\hat{Y}}_B]\in\mathbb{R}^{T\times n}$ & Collection of all base forecasts (Theorem \ref{the:opt.proj.hier})\\
  $\boldsymbol{\hat{Y}}_B=\vectoriz^{-1}(\boldsymbol{\hat{y}}_B) \in\mathbb{R}^{T\times m}$ & Collection of all bottom-level base forecasts \eqref{eq:betaTm}\\    
    $\boldsymbol{\hat{Y}}_T \in\mathbb{R}^{T\times (n-m)}$ & Collection of base forecasts excluding the   bottom level \eqref{eq:betaTm}. \\
    $\boldsymbol{\tilde{Y}} \in\mathbb{R}^{T\times m}$ & Collection of all bottom-level reconciled forecast \eqref{eq:projBa} \\
    $\boldsymbol{\tilde{y}}_t\in\mathbb{R}^{m}$ & Reconciled forecast on the bottom level at time $t$ \eqref{eq:initMod}. \\
    $\boldsymbol{\tilde{y}}_{F,t}\in\mathbb{R}^{n}$ & Reconciled forecast on all levels at time $t$ \eqref{eq:y.tildeF}. \\
    $\boldsymbol{y}_{t}\in\mathbb{R}^m$ & Bottom-level observation at time $t$ \eqref{eq:glmForm}\\
      $\boldsymbol{y}_{i,\cdot}\in\mathbb{R}^T$ & Collection of all observations at bottom level $i$ \eqref{eq:betaHat.i}\\
  $\boldsymbol{y}=\vectoriz(\boldsymbol{Y}^\top)\in \mathbb{R}^{m\cdot T}$ & Collection of all bottom-level observations \eqref{eq:GLMformFull}\\
  $\boldsymbol{\hat{y}}_t\in\mathbb{R}^{n}$ & Base forecast on all levels at time $t$  \eqref{eq:hier}\\ 
$\boldsymbol{\hat{y}}_B=\vectoriz(\boldsymbol{\hat{Y}}_B^\top) \in\mathbb{R}^{T m}$ & Collection of all bottom-level base forecasts \eqref{eq:GLMformFull}\\
$\boldsymbol{\hat{y}}_{B,t}$ & Bottom-level base forecast at time $t$ \eqref{eq:ConstrReg}. \\
     $\boldsymbol{\hat{y}}_{T,t}$ & Top-level (i.e., excluding bottom-level) base forecast at time $t$ \eqref{eq:ConstrReg}. \\
  $\boldsymbol{S}=[\boldsymbol{S}_T^\top\quad \boldsymbol{I}]^\top \in\mathbb{R}^{n\times m}$& Summation matrix \eqref{eq:initMod}\\
  $\boldsymbol{S}_T\in \mathbb{R}^{(n-m)\times m}$& Top-level summation matrix (i.e. excluding the
     bottom level) \eqref{eq:ConstrReg}\\
$\boldsymbol{\Sigma}_{r}\in\mathbb{R}^{m\times m}$ & Variance--covariance of the regression error\\  & (at a specific time point) \eqref{eq:glmForm}\\
$\boldsymbol{\hat{\Sigma}}_{r,\text{ML}},\boldsymbol{\hat{\Sigma}}_{r,\text{REML}}\in\mathbb{R}^{m\times m}$ & ML and REML estimates/estimators for $\vect{\Sigma}_r$ \eqref{eq:Sigma.rML} and \eqref{eq:Sigma.rREML}\\
$\boldsymbol{\hat{\Sigma}}_{r,\text{MAP}}, \boldsymbol{\hat{\Sigma}}_{r,\text{shrink}}$ & Versions of the MAP estimator for $\vect{\Sigma}_r$ \\ 
$\boldsymbol{\hat{\Sigma}}_{r,sREML},\boldsymbol{\hat{\Sigma}}_{r,\text{par}}$ & \eqref{eq:Sig.rMAP},\eqref{eq:Sigma.rMap},\eqref{eq:MapReml}, Table \ref{tab:summCase}\\ 
          $\sigma^{ij}$ & element $(i,j)$ of $\boldsymbol{\Sigma}_{r}^{-1}$ \eqref{eq:sparse}\\
     $\boldsymbol{\Sigma}\in\mathbb{R}^{T\cdot m\times T\cdot m}$ & Variance--covariance of the regression error\\  & ($\boldsymbol{\Sigma}=\boldsymbol{I}_T\otimes\boldsymbol{\Sigma}_r$) \eqref{eq:GLMformFull}\\  
     $\boldsymbol{\Sigma}_{h} \in\mathbb{R}^{n\times n}$ & Variance--covariance of the base-forecast error \eqref{eq:initMod}\\  $\boldsymbol{\Sigma}_{h}^d=\diag(\boldsymbol{\Sigma}_{h})\in\mathbb{R}^{n\times n}$ & Diagonal matrix with the variances of the base-forecast error \eqref{eq:shrinkage} \\          $\boldsymbol{\Sigma}_{s}=(1-\lambda)\boldsymbol{\Sigma}_{h}+\lambda\boldsymbol{\Sigma}_{h}^d\in\mathbb{R}^{n\times n}$ & Variance--covariance of the base-forecast error \eqref{eq:shrinkage}\\
         $\boldsymbol{\Sigma}^d_{h,B}\in\mathbb{R}^{m\times m}$ & $\boldsymbol{\Sigma}_{h}^d$ for bottom level  \eqref{eq:priorBeta}\\
    $\boldsymbol{\Sigma}^d_{h,T}\in\mathbb{R}^{(n-m)\times (n-m)}$ & $\boldsymbol{\Sigma}_{h}^d$ for  top level  \eqref{eq:priorBeta}\\
     $\boldsymbol{P}=[\boldsymbol{P}_T\quad \boldsymbol{P}_B]\in\mathbb{R}^{m\times n}$ & The weight matrix s.t. $\boldsymbol{\tilde{Y}}=\boldsymbol{P}\boldsymbol{\hat{Y}}$ \eqref{eq:hier}\\
        $\boldsymbol{P}(\lambda)$ & Weight matrix based on shrinkage using $\lambda$ \eqref{eq:shrinkage}\\
        $\vect{P}_0$ & Prior for the weight matrix \eqref{eq:weightEx}\\
  $p_i$ & Number of parameters for modeling bottom-level observation $i$ \eqref{eq:partXX}\\
  $\bar{p}=\frac{1}{m}\sum_{i=1}^mp_i$ & Average number or regression parameters \eqref{eq:partXX}\\
  $\boldsymbol{\beta},\boldsymbol{\hat{\beta}}\in \mathbb{R}^{p}$ & Coefficients in the regression model \eqref{eq:normalEq}\\
    $\boldsymbol{\beta}_i,\boldsymbol{\hat{\beta}}_i\in \mathbb{R}^{p_i}$ & Coefficients in the regression model for bottom-level observation $i$ \eqref{eq:betaHat.i} \\
       $\boldsymbol{\beta}_T,\boldsymbol{\hat{\beta}}_T\in \mathbb{R}^{m(n-m)}$ & Coefficients in the regression model corresponding to top-level weights \eqref{eq:ConstrReg}\\
     $\vect{\beta}_{0,T}$ & Prior mean for $\boldsymbol{\beta}_T$ \eqref{eq:priorBeta}.\\     
           $\boldsymbol{\beta}^m=\vectoriz^{-1}(\vect{\beta})$ & Matrix version of the regression parameters (may appear with a subscript) \eqref{eq:reconvar} \\
     $\boldsymbol{\Sigma}_{\beta}=\boldsymbol{\Sigma}_r\otimes\boldsymbol{\Sigma}_{\beta,0}$ & Prior variance for $\boldsymbol{\beta}_T$ \eqref{eq:F9}\\
     $\boldsymbol{\Sigma}_{\beta,0}\in \mathbb{R}^{(n-m)\times (n-m)}$& Related to prior variance for $\boldsymbol{\beta}_T$ \eqref{eq:priorBeta}\\
     $\vect{\Psi}$ & Prior parameter for $\vect{\Sigma}_r$ \eqref{eq:Psi}\\
      $\boldsymbol{X}\in \mathbb{R}^{Tm\times p}$ & Design matrix \eqref{eq:fullDesign}\\
    $\boldsymbol{X}_{\cdot,t}\in \mathbb{R}^{m\times p}$ & Design matrix at time $t$ \eqref{eq:fullDesign}\\
    $\boldsymbol{X}_{i,\cdot}\in \mathbb{R}^{T\times p_i}$ & Design matrix  for bottom observation $i$ \eqref{eq:betaHat.i}\\
     $\boldsymbol{x}_{i,t}\in \mathbb{R}^{p_i}$ & One row in the design matrix $\boldsymbol{X}_{i,\cdot}$ \eqref{eq:desingGen}\\
      $\boldsymbol{\epsilon}_t$, $\boldsymbol{e}_t$ & Collection of errors with dimension $n$ or $m$ (should
     be clear from the context) \eqref{eq:initMod}, (Theorem \ref{the:main})\\
     $\vect{W}_{1,\lambda}=\left(\boldsymbol{X}^\top\boldsymbol{\Sigma}^{-1}\boldsymbol{X} +   \boldsymbol{\Sigma}_{\beta}^{-1}\right)^{-1}$ & Used to derive the variance of MAP estimator \eqref{eq:W1}\\
     $\vect{W}_{2}=\boldsymbol{X}^\top \boldsymbol{\Sigma}^{-1}\vect{X}$ &Used to derive the variance of MAP estimator \eqref{eq:RidgeVar}\\
  \hline
\caption{Symbols used in the article. The first usage is marked in parenthesis. If first usage is not in an equation the appearance will be around the referred equation.}
\end{longtable}

\section{Some useful relations}\label{sec:useful}
As the proofs in this appendix depend heavily on properties of Kronecker products and vectorisation, we have collected the most important ones here. All expression are collected from \cite{petersen_Matrix2006}, where many useful relations used in this article can be found.

The vectorisation of a matrix $\boldsymbol{P}\in\mathbb{R}^{m\times n}$ is given by the invertible map $\mathbb{R}^{m\times n}\rightarrow \mathbb{R}^{mn}$
\begin{align}
  \vectoriz(\boldsymbol{P})=[P_{11},\hdots,P_{m1},P_{12},\hdots,P_{m2},\hdots,P_{1n},\hdots,P_{mn}]^\top.
\end{align}
The inverse vectorisation of the vector $\vect{y}\in\mathbb{R}^{m n}$ is given by the invertible map $\mathbb{R}^{mn}\rightarrow \mathbb{R}^{m\times n}$
\begin{align}
\vectoriz_{m\times n}^{-1}(\vect{y})=\left[
\begin{matrix}
y_1 & y_{m+1}& \cdots & y_{n\cdot (m-1)+1}\\
\vdots & \vdots & &\vdots\\
y_m &  y_{2m}&\cdots & y_{n\cdot m}
\end{matrix}
\right].
\end{align}
As the target space ($\mathbb{R}^{m\times n}$) should be clear from the context we will omit the subscript and use the simpler notation $\vectoriz^{-1}(\cdot)$.

The vectorisation of a product of matrices $\boldsymbol{A}\in\mathbb{R}^{k\times l}$, $\boldsymbol{B}\in\mathbb{R}^{l\times m}$, and $\boldsymbol{C}\in\mathbb{R}^{m\times n}$ can be written as 
\begin{align}
  \vectoriz(\boldsymbol{A}\boldsymbol{B}\boldsymbol{C})=&\left(\boldsymbol{C}^\top\otimes \boldsymbol{A}\right)\vectoriz(\boldsymbol{B})\label{eq:vect2}\\
  =&\left(\boldsymbol{I}_n\otimes(\boldsymbol{A}\boldsymbol{B})\right)\vectoriz(\boldsymbol{C})\\
  =&\left((\boldsymbol{C}^\top\boldsymbol{B}^\top)\otimes \boldsymbol{I}_k\right)\vectoriz(\boldsymbol{A})\\
  \vectoriz(\boldsymbol{A}\boldsymbol{B})=&\left(\boldsymbol{I}_m\otimes\boldsymbol{A}\right)\vectoriz(\boldsymbol{B})\label{eq:vect1}\\
  =&\left(\boldsymbol{B}\otimes\boldsymbol{I}_k\right)\vectoriz(\boldsymbol{A}).
  \end{align}
Further, the following relations apply to matrices of appropriate dimensions and invertiability
\begin{align}
\left(\boldsymbol{A}\otimes\boldsymbol{B}\right)^{-1}=&\boldsymbol{A}^{-1}\otimes\boldsymbol{B}^{-1}\\
  (\boldsymbol{A}\otimes \boldsymbol{B})(\boldsymbol{C}\otimes \boldsymbol{D})=&
  (\boldsymbol{A}\boldsymbol{C})\otimes (\boldsymbol{B} \boldsymbol{D}).\label{eq:kronecker.matrix}
\end{align}

If $\boldsymbol{A}\in\mathbb{R}^{k\times k}$ and $\boldsymbol{B}\in\mathbb{R}^{l\times l}$ then \begin{align}
  |\boldsymbol{A}\otimes\boldsymbol{B}|=|\boldsymbol{A}|^l|\boldsymbol{B}|^k.
\end{align}

For derivatives wrt.~variance--covariance matrices the following relations hold \citep[][(61), (57), (100)]{petersen_Matrix2006}
\begin{align}
\frac{\partial \vect{a}^\top\vect{\Sigma}^{-1}\vect{b}}{\partial \vect{\Sigma}}=&-\vect{\Sigma}^{-1}\vect{a}\vect{b}^\top\vect{\Sigma}^{-1}\\
\frac{\partial \log|\vect{\Sigma}|}{\partial \vect{\Sigma}}=&\vect{\Sigma}^{-1}\\
\frac{\partial \text{Tr}(\vect{\Psi}\vect{\Sigma})}{\partial \vect{\Sigma}}=&\vect{\Psi}\label{eq:derivTrace}.
\end{align}
Combining \eqref{eq:derivTrace} and \citet[][(59), (124)]{petersen_Matrix2006} while assuming that $\vect{\Sigma}$ and $\vect{\Psi}$ are symmetric yields
\begin{align}
\begin{split}
    \frac{\partial \text{Tr}(\vect{\Psi}\vect{\Sigma}^{-1})}{\partial \vect{\Sigma}}=&-\vect{\Sigma}^{-1}\vect{\Psi}\vect{\Sigma}^{-1}\label{eq:derivInvTrace}\\
    \frac{\partial \text{Tr}(\vect{\Psi}_1\vect{\Sigma}^{-1}\vect{\Psi}_2)}{\partial \vect{\Sigma}}=&-\vect{\Sigma}^{-1}\vect{\Psi}_1^\top\vect{\Psi}_2^\top\vect{\Sigma}^{-1}.
    \end{split}
\end{align}

For products of vectors the following \citep[eq. (521)]{petersen_Matrix2006} holds
\begin{align}
 \vectoriz(\vect{A})^\top \vectoriz(\vect{B}) = \text{Tr}(\vect{A}^\top\vect{B})\label{eq:B14}
\end{align}

The following relation, which is related to the matrix normal distribution \citep{Iranmanesh_2010}, will also be useful. Using \eqref{eq:vect2} and \eqref{eq:B14}, we can write
\begin{align}
\begin{split}
\vect{y}^\top(\vect{\Sigma}^{-1}\otimes\vect{\Omega}^{-1})\vect{y}=&\vect{y}^\top\vectoriz(\vect{\Omega}^{-1}\vectoriz^{-1}(\vect{y})\vect{\Sigma}^{-1})\\
=&\text{Tr}\left(\vectoriz^{-1}(\vect{y})^\top\vect{\Omega}^{-1}\vectoriz^{-1}(\vect{y})\vect{\Sigma}^{-1}\right)\\
=&\text{Tr}\left(\vect{Y}^\top\vect{\Omega}^{-1}\vect{Y}\vect{\Sigma}^{-1}\right).\label{eq:MatrixNorm}
\end{split}
\end{align}

\subsection{Probability densities, log-likelihood functions and log-priors}\label{sec:logLike}

In this work we use the following multivariate distributions: the multivariate normal (Gaussian), the inverse Wishart, and the Wishart distribution. Only the first two are used for likelihood/MAP estimation, but for completeness we include all three in the discussion below.   

If a random variable ($\vect{y}\in\mathbb{R}^n$) follows a multivariate normal distribution with mean $\vect{\mu}$ and variance--covariance $\vect{\Sigma}$, we write $\vect{y}\sim N(\vect{\mu},\vect{\Sigma})$, with $\vect{\mu}\in\mathbb{R}^n$,  $\vect{\Sigma}\in\mathbb{R}^{n\times n}$, and $\vect{\Sigma}$ positive semi--definite. The probability density function (pdf), $\phi(\cdot)$, of the multivariate normal is defined when $\vect{\Sigma}$ is positive definite and is given by 
\begin{align}
\begin{split}
\phi(\vect{y};\vect{\mu},\vect{\Sigma})=&\frac{1}{(2\pi)^{n/2}\sqrt{|\vect{\Sigma}|}}e^{-\frac{1}{2}(\vect{y}-\vect{\mu})^\top\vect{\Sigma}^{-1}(\vect{y}-\vect{\mu})}\\
=&\frac{1}{(2\pi)^{n/2}\sqrt{|\vect{\Sigma}|}}e^{-\frac{1}{2}\text{Tr}\left((\vect{y}-\vect{\mu})\vect{\Sigma}^{-1}(\vect{y}-\vect{\mu})^\top\right)}.
\end{split}
\end{align}
For likelihood estimation we use parameterised versions of the mean and variance, i.e., $\vect{\mu}:=\vect{\mu}(\vect{\beta})$ and $\vect{\Sigma}:=\vect{\Sigma}(\vect{\psi})$. The log-likelihood is 
\begin{align}
\begin{split}
l(\vect{\beta},\vect{\psi}) =& \log(\phi(\vect{y};\vect{\mu},\vect{\Sigma}))\\
\propto & -\frac{1}{2} \log(|\vect{\Sigma}(\vect{\psi})|)- \frac{1}{2}(\vect{y}-\vect{\mu}(\vect{\beta}))^\top\vect{\Sigma}^{-1}(\vect{\psi})(\vect{y}-\vect{\mu}(\vect{\beta})).
\end{split}
\end{align}
For fixed $\vect{\psi}$ or $\vect{\beta}$ we will use the notation
\begin{align}
\begin{split}
l(\vect{\beta})\propto& - \frac{1}{2}(\vect{y}-\vect{\mu}(\vect{\beta}))^\top\vect{\Sigma}^{-1}(\vect{\psi})(\vect{y}-\vect{\mu}(\vect{\beta}))\\
l(\vect{\psi})\propto& -\frac{1}{2} \log(|\vect{\Sigma}(\vect{\psi})|)- \frac{1}{2}(\vect{y}-\vect{\mu}(\vect{\beta}))^\top\vect{\Sigma}^{-1}(\vect{\psi})(\vect{y}-\vect{\mu}(\vect{\beta})).
    \end{split}
\end{align}
When used as a prior distribution, i.e., $\vect{y}|\vect{\beta}\sim N(\vect{\mu}(\vect{\beta}),\vect{\Sigma}(\vect{\psi}))$, $\vect{\beta}\sim N(\vect{\mu}_{\beta},\vect{\Sigma}_{\beta}(\vect{\psi}))$,  we will use the notation
\begin{align}
l_{MAP}(\vect{\beta},\vect{\psi})=\log\left(\phi(\vect{y};\vect{\mu}(\vect{\beta}),\vect{\Sigma}(\vect{\psi})\right) +
\log\left(\phi(\vect{\beta};\vect{\mu}_{\beta},\vect{\Sigma}(_{\beta}\vect{\psi})\right).
\end{align}

If a random matrix $\vect{\Sigma}_r\in \mathbb{R}^{m\times m}$ ($\vect{\Sigma}_r$ positive definite) follows an inverse Wishart distribution with scale matrix $\vect{\Psi}>0$ and degrees of freedom $v>m-1$, we write $\vect{\Sigma}_r\sim \mathcal{W}^{-1}(\vect{\Psi},v)$ and the density is given by
\begin{align}
  f_{\Sigma_r}(\vect{\Sigma}_r)=
  \frac{
  |\vect{\Psi|}^{v/2}}{2^{vm/2}\Gamma_m\left(\frac{v}{2}\right)} |\vect{\Sigma}_r|^{-\frac{m+v+1}{2}} e^{-\frac{1}{2}\text{Tr}(\vect{\Psi}\vect{\Sigma}_r^{-1})},
\end{align}
where $\Gamma_m(\cdot)$ is the multivariate Gamma function. The inverse Wishart is the conjugate prior for the variance--covariance matrix in the multivariate normal. When used as a prior distribution (i.e., $\vect{\Psi}$ and $v$ fixed) we will use the form 
\begin{align}
    \log(f_{\Sigma_r}(\vect{\Sigma}_r))\propto
    -\frac{m+v+1}{2}\log(|\vect{\Sigma}_r|) -\frac{1}{2}\text{Tr}(\vect{\Psi}\vect{\Sigma}_r^{-1}).
\end{align}

In this work we use the inverse Wishart as a prior in the setting $\vect{y}|\vect{\beta},\vect{\Sigma}_r\sim \allowbreak N(\vect{\mu}(\vect{\beta}),\allowbreak\vect{\Sigma}(\vect{\Sigma}_r))$, $\vect{\beta}|\vect{\Sigma}_r\sim N(\vect{\mu}_{\beta},\vect{\Sigma}_{\beta}(\vect{\Sigma}_r))$, $\vect{\Sigma}_r\sim\mathcal{W}^{-1}(\vect{\Psi},v)$, and we use the notation 
\begin{align}
l_{MAP}(\vect{\beta},\vect{\Sigma}_r)=\log\left(\phi\left(\vect{y};\vect{\mu}(\vect{\beta}),\vect{\Sigma}(\vect{\Sigma}_r)\right)\right) +
\log\left(\phi(\vect{\beta};\vect{\mu}_{\beta},\vect{\Sigma}_{\beta}(\vect{\Sigma}_r))\right) + \log(f_{\Sigma_r}(\vect{\Sigma}_r)).
\end{align}

The general definition of the Wishart distribution used in this article is given in Definition \ref{def:Whishart}. For completeness we also state the pdf. If $\vect{Q}\in\mathbb{R}^{m\times m}$ ($\vect{Q}$ positive definite) follows a Wishart distribution with positiv definite scale matrix $\vect{\Psi}$ and degrees of freedom $v$, we write $\vect{Q}\sim \mathcal{W}(\vect{\Psi},v)$. The pdf is defined for $v>m-1$ and is given by 
\begin{align}
  f_{Q}(\vect{Q})=
  \frac{
  |\vect{\Psi|}^{-v/2}}{2^{vm/2}\Gamma_m\left(\frac{v}{2}\right)} |\vect{Q}|^{\frac{m+v+1}{2}} e^{-\frac{1}{2}\text{Tr}(\vect{\Psi}\vect{Q}^{-1})}.
\end{align}

\section{Proof of Theorem \ref{the:main}}\label{proof:main}

The log-likelihood is given by 
\begin{align}
  \begin{split}
  l(\boldsymbol{\Sigma}_r;\boldsymbol{e}) \propto&
  -\frac{1}{2}\log(|\boldsymbol{I}_T\otimes\boldsymbol{\Sigma}_r|) - \frac{1}{2}\boldsymbol{e}^\top (\boldsymbol{I}_T\otimes\boldsymbol{\Sigma}_r^{-1})\boldsymbol{e}\\
  &=  -\frac{1}{2}\log(|\boldsymbol{\Sigma}_r|^T) - \frac{1}{2}\sum_{t=1}^T\boldsymbol{e}_t^\top \boldsymbol{\Sigma}_r^{-1}\boldsymbol{e}_t,
  \end{split}
\end{align}
with 
\begin{align}
  \boldsymbol{e}_t=\boldsymbol{y}_t-\boldsymbol{X}_{\cdot,t}\boldsymbol{\hat{\beta}}.
\end{align}
$\boldsymbol{\hat{\beta}}$ is the ML (or weighted least square) estimate of $\vect{\beta}$ \citep[see][]{madsen_thyregod_2011}, which is given by 
\begin{align}
  \boldsymbol{\hat{\beta}}=& (\boldsymbol{X}^\top(\boldsymbol{I}_T\otimes\boldsymbol{\Sigma}^{-1}_r)\boldsymbol{X})^{-1} \boldsymbol{X}^\top(\boldsymbol{I}_T\otimes\boldsymbol{\Sigma}_r^{-1})\boldsymbol{y}.
  \end{align}
By direct matrix multiplications, using the diagonal like structure in \eqref{eq:desingGen}, we find the matrix $  \boldsymbol{X}^\top\boldsymbol{\Sigma}^{-1}\boldsymbol{X}$ as
\begin{align}
  \begin{split}
\boldsymbol{X}^\top\boldsymbol{\Sigma}^{-1}\boldsymbol{X}
&=\left[\begin{matrix}
    \sigma^{11}\sum_{t=1}^T\boldsymbol{x}_{1,t}\boldsymbol{x}_{1,t}^\top & \sigma^{12}\sum_{t=1}^T\boldsymbol{x}_{1,t}\boldsymbol{x}_{2,t}^\top &\cdots & \sigma^{1m}\sum_{t=1}^T\boldsymbol{x}_{1,t}\boldsymbol{x}_{m,t}^\top\\
    \sigma^{12}\sum_{t=1}^T\boldsymbol{x}_{2,t}\boldsymbol{x}_{1,t}^\top & \sigma^{22}\sum_{t=1}^T\boldsymbol{x}_{2,t}\boldsymbol{x}_{2,t}^\top &\cdots & \sigma^{2m}\sum_{t=1}^T\boldsymbol{x}_{2,t}\boldsymbol{x}_{m,t}^\top\\
    \vdots &  & \ddots &\\
    \sigma^{1m}\sum_{t=1}^T\boldsymbol{x}_{m,t}\boldsymbol{x}_{1,t}^\top &\cdots & &\sigma^{mm}\sum_{t=1}^T\boldsymbol{x}_{m,t}\boldsymbol{x}_{m,t}^\top
  \end{matrix}\right]\\
&=\left[\begin{matrix}
    \sigma^{11}\boldsymbol{X}^\top_{1,\cdot}\boldsymbol{X}_{1,\cdot} & \sigma^{12}\boldsymbol{X}^\top_{1,\cdot}\boldsymbol{X}_{2,\cdot} &\cdots & \sigma^{1m}\boldsymbol{X}^\top_{1,\cdot}\boldsymbol{X}_{m,\cdot}\\
    \sigma^{12}\boldsymbol{X}^\top_{2,\cdot}\boldsymbol{X}_{1,\cdot} & \sigma^{22}\boldsymbol{X}_{2,\cdot}^\top\boldsymbol{X}_{2,\cdot} &\cdots & \sigma^{2m}\boldsymbol{X}_{2,\cdot}^\top\boldsymbol{X}_{m,\cdot}\\
    \vdots &  & \ddots &\\
    \sigma^{1m}\boldsymbol{X}_{m,\cdot}^\top\boldsymbol{X}_{1,\cdot} &\cdots & &\sigma^{mm}\boldsymbol{X}_{m,\cdot}^\top\boldsymbol{X}_{m,\cdot}
  \end{matrix}\right].\label{eq:sparse}
 \end{split}
\end{align}

Using \eqref{eq:vect2} we get 
\begin{align}
(\boldsymbol{I}_T\otimes\boldsymbol{\Sigma}_r^{-1})\boldsymbol{y}=\vectoriz(\vect{\Sigma}_r^{-1}\vect{Y}^\top).
\end{align}

We treat the special cases below.

{\bf Special case:} $\boldsymbol{x}_{i,t}=\boldsymbol{x}_{j,t}=\boldsymbol{x}_{t}$, $\forall$ $(i,j)$. In this case we get 
\begin{align}
  \boldsymbol{X}^\top\boldsymbol{\Sigma}_r^{-1}\boldsymbol{X}=&\boldsymbol{\Sigma}_r^{-1}\otimes \sum_{i=1}^T\boldsymbol{x}_{t}\boldsymbol{x}_{t}^\top=\boldsymbol{\Sigma}_r^{-1}\otimes \left(\boldsymbol{X}_{1,\cdot}^\top\boldsymbol{X}_{1,\cdot}\right),\label{eq:XSigX}
\end{align}
and
\begin{align}
  (\boldsymbol{X}^\top\boldsymbol{\Sigma}_r^{-1}\boldsymbol{X})^{-1}=&\boldsymbol{\Sigma}_r\otimes \left(\sum_{i=1}^T\boldsymbol{x}_{t}\boldsymbol{x}_{t}^\top\right)^{-1} =\boldsymbol{\Sigma}_r\otimes \left(\boldsymbol{X}_{1,\cdot}^\top\boldsymbol{X}_{1,\cdot}\right)^{-1}.\label{eq:C.7}
\end{align}
Setting $\vect{\xi}=\vect{\Sigma}_r^{-1}\vect{Y}^\top$ we can write 
\begin{align}
\begin{split}
    \vect{X}^\top\vectoriz(\vect{\xi})
    =&\left[    \begin{matrix}
      \vect{x}_1 & \vect{0} & \vect{x}_2 & \cdots & \vect{x}_T &  & 
      \vect{0}\\
      \vect{0} &\ddots & \vect{0}&\ddots & & \ddots\\
      \vect{0} & \vect{0} &\vect{x}_1 & \vect{0}  &\vect{x}_2 & \cdots &\vect{x}_T
      \end{matrix}
      \right]\left[
      \begin{matrix}
      \xi_{11}\\ \vdots \\ \xi_{1m}\\ \xi_{21}\\ \vdots\\ \xi_{Tm}
      \end{matrix}
      \right]\\
    =& \left[
      \begin{matrix}
      \vect{x}_1\xi_{11} + \vect{x}_2\xi_{21} +\cdots 
      \vect{x}_T\xi_{T1}\\
      \hdots\\
      \vect{x}_1\xi_{1m} + \vect{x}_2\xi_{2m} +\cdots 
      \vect{x}_T\xi_{Tm}\\
      \end{matrix}
      \right].
      \end{split}
 \end{align}
 Hence, the order can be rearranged to
 \begin{align}
 \begin{split}
  \vect{X}^\top\vectoriz(\vect{\xi})    =&    \left[
      \begin{matrix}
      \vect{X}^\top_{1,\cdot} & \vect{0} & \cdots & \vect{0}\\
      \vect{0} &\vect{X}^\top_{1,\cdot} &  \cdots & \vect{0}\\
      \vdots & & \ddots & \vect{0} &\\
      \vect{0} & \vect{0} & & \vect{X}^\top_{1,\cdot} 
      \end{matrix}
      \right]\left[
      \begin{matrix}
      \xi_{11}\\ \vdots \\ \xi_{T1}\\ \xi_{12}\\ \vdots\\ \xi_{Tm}
      \end{matrix}
      \right]\\
   =&\left(\vect{I}_m\otimes
     \vect{X}_{1,\cdot}^\top\right)\vectoriz(\vect{\xi}^\top)\\
     =&\left(\vect{I}_m\otimes 
     \vect{X}_{1,\cdot}^\top\right)\vectoriz(\vect{Y}\vect{\Sigma}_r^{-1})\\ 
    =& \vectoriz\left(\vect{X}_{1,\cdot}^\top\vect{Y} 
      \vect{\Sigma}_r^{-1}\right),
    \end{split}
\end{align}
and, consequently, (again using \eqref{eq:vect2})
\begin{align}
  \begin{split}
  \boldsymbol{\hat{\beta}}=& \left[\boldsymbol{\Sigma}_r\otimes\left(\boldsymbol{X}_{1,\cdot}^\top\boldsymbol{X}_{1,\cdot}\right)^{-1}\right]\boldsymbol{X}^\top(\boldsymbol{I}_T \otimes\boldsymbol{\Sigma}_r^{-1})\boldsymbol{y}\\
    =&\vectoriz\left(\left(\boldsymbol{X}_{1,\cdot}^\top\boldsymbol{X}_{1,\cdot}\right)^{-1} \boldsymbol{X}_{1,\cdot}^\top\boldsymbol{Y}\right).\label{eq:C.9}
    \end{split}
\end{align}

This proves that $\boldsymbol{\hat{\beta}}$ is independent of $\boldsymbol{\Sigma}_r$.



\subsubsection*{Maximum likelihood estimate of $\boldsymbol{\Sigma}_r$}

The log-likelihood wrt.~$\boldsymbol{\Sigma}_r$ is given by 
\begin{align}
  \begin{split}
  l(\boldsymbol{\Sigma}_r;\boldsymbol{e}) \propto&
  -\frac{1}{2}\log(|\boldsymbol{I}_T\otimes\boldsymbol{\Sigma}_r|) - \frac{1}{2}\boldsymbol{e}^\top (\boldsymbol{I}_T\otimes\boldsymbol{\Sigma}_r^{-1})\boldsymbol{e}\\
  &=  -\frac{1}{2}\log(|\boldsymbol{\Sigma}_r|^T) - \frac{1}{2}\sum_{t=1}^T\boldsymbol{e}_t^\top \boldsymbol{\Sigma}_r^{-1}\boldsymbol{e}_t.
  \end{split}
\end{align}
In the general case, $\vect{e}_t$ is a function of $\vect{\Sigma}_r$; but as shown above, it is independent of $\vect{\Sigma}_r$ in case 1. 


The derivative wrt.~$\vect{\Sigma}_r$ of the log-likelihood is 
\begin{align}
  \frac{\partial l}{\partial \boldsymbol{\Sigma}_r}=&
    -\frac{T}{2}\boldsymbol{\Sigma}_r^{-1} + \frac{1}{2}\boldsymbol{\Sigma}_r^{-1}\sum_{t=1}^T\boldsymbol{e_t} \boldsymbol{e}_t^\top\boldsymbol{\Sigma}_r^{-1},
\end{align}
and, hence, for fixed $\vect{\beta}$, the maximum likelihood estimate of $\boldsymbol{\Sigma}_r$ is 
\begin{align}
  \boldsymbol{\hat{\Sigma}}_{ML,r}=\frac{1}{T}\sum_{t=1}^T\boldsymbol{e}_t \boldsymbol{e}_t^\top.
\end{align}
In the general case where the errors $\boldsymbol{e}_t$ are functions of $\boldsymbol{\hat{\Sigma}}_r$, iterations are needed in order to find the best estimate. 

\subsubsection*{REML estimate}\label{sec:proofREML}
The restricted log-likelihood wrt.~$\vect{\Sigma}_r$ is given by 
\begin{align}
    l_{\text{REML}}(\boldsymbol{\Sigma}_r;\boldsymbol{e}) \propto&
    -\frac{1}{2}\log(|\boldsymbol{\Sigma}_r|^T) - \frac{1}{2}\sum_{t=1}^T\boldsymbol{e}_t^\top \boldsymbol{\Sigma}_r^{-1}\boldsymbol{e}_t - \frac{1}{2}\log(|\boldsymbol{X}^\top\boldsymbol{\Sigma}^{-1}\boldsymbol{X}^\top|).
\end{align}

We divide the answer into a special case and the general case:\\
{\bf Special case:} If $\boldsymbol{x}_{i,t}=\boldsymbol{x}_{j,t}=\boldsymbol{x}_{t}\in\mathbb{R}^{\bar{p}}$, then $\boldsymbol{\hat{\beta}}$ and therefore $\boldsymbol{e}$ are independent from $\boldsymbol{\Sigma}_r$, which (using \eqref{eq:XSigX}) leads to
\begin{align}
  \begin{split}
    l_{\text{REML}}(\boldsymbol{\Sigma}_r;\boldsymbol{e}) 
     \propto & -\frac{1}{2}\log(|\boldsymbol{\Sigma}_r|^T) - \frac{1}{2}\sum_{t=1}^T\boldsymbol{e}_t^\top \boldsymbol{\Sigma}_r^{-1}\boldsymbol{e}_t -
     \frac{1}{2}\log\Big|\boldsymbol{\Sigma}^{-1}_r\otimes\left(\sum_{t=1}^{T}\boldsymbol{x}_t\boldsymbol{x}_t^\top\right)\Big|\\
          =& -\frac{T}{2}\log(|\boldsymbol{\Sigma}_r|) - \frac{1}{2}\sum_{t=1}^T\boldsymbol{e}_t^\top \boldsymbol{\Sigma}_r^{-1}\boldsymbol{e}_t - \frac{1}{2}\log\Big|\boldsymbol{\Sigma}^{-1}_r\Big|^{\bar{p}}
          \label{eq:proof_llRE},
          \end{split}
\end{align}
and
\begin{align}
    \frac{\partial l_{\text{REML}}}{\partial \boldsymbol{\Sigma}_r}=&
    -\frac{T}{2}\boldsymbol{\Sigma}_r^{-1} + \frac{1}{2}\boldsymbol{\Sigma}_r^{-1}\sum_{t=1}^T\boldsymbol{e}_t \boldsymbol{e}_t^\top\boldsymbol{\Sigma}_r^{-1}+\frac{\bar{p}}{2}\boldsymbol{\Sigma}_r^{-1}.
\end{align}
The resulting REML estimate of $\boldsymbol{\Sigma}_r$ is
\begin{align}
  \boldsymbol{\hat{\Sigma}}_{r,\text{REML}}=\frac{1}{T-\bar{p}}\sum_{t=1}^T\boldsymbol{e_t} \boldsymbol{e}_t^\top.
\end{align}
{\bf General case:}
In the general case we need the derivative of the REML term
\begin{align}
\frac{\partial \log|\boldsymbol{X}^\top\boldsymbol{\Sigma}^{-1}\boldsymbol{X}|}{\partial\boldsymbol{\Sigma}_{r}} =& 
\frac{\partial \log|\boldsymbol{X}^\top\boldsymbol{\Sigma}^{-1}\boldsymbol{X}|}{\partial\boldsymbol{\Sigma}_{r}^{-1}}
\frac{\partial \boldsymbol{\Sigma}^{-1}}{\partial\boldsymbol{\Sigma}_{r}}\label{proof:GenReml}.
\end{align}
To that end we calculate \citep[using][(46)]{petersen_Matrix2006}
\begin{align}
\frac{\partial \log|\boldsymbol{X}^\top\boldsymbol{\Sigma}^{-1}\boldsymbol{X}|}{\partial\sigma^{ij}}  &= Tr\left((\boldsymbol{X}^\top\boldsymbol{\Sigma}^{-1}\boldsymbol{X})^{-1}
\frac{\partial \boldsymbol{X}^\top\boldsymbol{\Sigma}^{-1}\boldsymbol{X}}{\partial\sigma^{ij}}\right).\label{eq:C21}
\end{align}
With $\boldsymbol{X}^\top\boldsymbol{\Sigma}^{-1}\boldsymbol{X}$ as in \eqref{eq:sparse} we get
\begin{align}
    \frac{\partial \boldsymbol{X}^\top\boldsymbol{\Sigma}^{-1}\boldsymbol{X}}{\partial\sigma^{ij}} = 
    \left[
    \begin{matrix}
        \vect{0} &\vect{0} & \vect{0}\\
        \vect{0} &\vect{X}_{i,\cdot}^\top\vect{X}_{j,\cdot} & \vect{0}\\
        \vect{0} &\vect{0} & \vect{0}
    \end{matrix}
    \right].
\end{align}
Therefore, \eqref{eq:C21} can be written as
\begin{align}
  \begin{split}
  \frac{\partial \log|\boldsymbol{X}^\top\boldsymbol{\Sigma}^{-1}\boldsymbol{X}|}{\partial\sigma^{ij}} 
  =&\text{Tr}\left((\boldsymbol{X}^\top\boldsymbol{\Sigma}^{-1}\boldsymbol{X})^{-1}_{I_j,I_i}\boldsymbol{X}^\top_{i,\cdot}\boldsymbol{X}_{j,\cdot}\right),
  \end{split}
\end{align}
with $I_{i}=\{P_{i-1}+1,P_{i}+2,...,P_{i}+p_i\}$ and  $P_i=\sum_{l=0}^ip_l$, with the convention that $p_0=0$. From that we have 
\begin{align}
  \begin{split}
    \left(\frac{\partial \log|\boldsymbol{X}^\top\boldsymbol{\Sigma}^{-1}\boldsymbol{X}|}{\partial\boldsymbol{\Sigma}_{r}}\right)_{ij}&= \text{Tr}\left( \frac{\partial \log|\boldsymbol{X}^\top\boldsymbol{\Sigma}^{-1}\boldsymbol{X}|}{\partial\boldsymbol{\Sigma}_{r}^{-1}}\boldsymbol{\Sigma}_r^{-1}\frac{\partial\boldsymbol{\Sigma}_r}{\sigma_{ij}} \boldsymbol{\Sigma}_r^{-1}\right)\\
   &= \text{Tr}\left( \boldsymbol{\Sigma}_r^{-1}\frac{ \partial \log|\boldsymbol{X}^\top\boldsymbol{\Sigma}_r^{-1}\boldsymbol{X}|}{\partial\boldsymbol{\Sigma}_{r}^{-1}}\boldsymbol{\Sigma}_r^{-1}\frac{\partial\boldsymbol{\Sigma}_r}{\sigma_{ij}} \right).
    \end{split}
\end{align}
Since $\left(\frac{\partial\boldsymbol{\Sigma}_r}{\sigma_{ij}}\right)_{kl}=1$ for $k=i,l=j$ and zero otherwise we have  
\begin{align}
  \frac{\partial \log|\boldsymbol{X}^\top\boldsymbol{\Sigma}^{-1}\boldsymbol{X}|}{\partial\boldsymbol{\Sigma}_{r}}= \boldsymbol{\Sigma}_r^{-1}\frac{\partial \log|\boldsymbol{X}^\top\boldsymbol{\Sigma}^{-1}\boldsymbol{X}|}{\partial\boldsymbol{\Sigma}_{r}^{-1}}\boldsymbol{\Sigma}_r^{-1},
\end{align}
and the REML estimate of $\boldsymbol{\Sigma}_r$ is given as the solution to 
\begin{align}
    -\frac{T}{2}\boldsymbol{\Sigma}_r^{-1} + \frac{1}{2}\boldsymbol{\Sigma}_r^{-1}\sum_{t=1}^T\boldsymbol{e_t} \boldsymbol{e}_t^\top\boldsymbol{\Sigma}_r^{-1}+\frac{1}{2}\frac{\partial \log|\boldsymbol{X}^\top\boldsymbol{\Sigma}^{-1}\boldsymbol{X}|}{\partial\boldsymbol{\Sigma}_{r}}=0,
\end{align}
or
\begin{align}
  \begin{split}
      \boldsymbol{\hat{\Sigma}}_{\text{REML},r}^{-1} =&
      \frac{1}{T}\left(
      \boldsymbol{\hat{\Sigma}}_{\text{REML},r}^{-1}\sum_{t=1}^T\boldsymbol{e_t} \boldsymbol{e}_t^\top\boldsymbol{\hat{\Sigma}}_{\text{REML},r}^{-1}+
      \frac{
        \partial\log|\boldsymbol{X}^\top\boldsymbol{\hat{\Sigma}}^{-1}_{\text{REML}}\boldsymbol{X}|
      }{
        \partial\boldsymbol{\hat{\Sigma}}_{\text{REML},r}}\right)\\
      =&      \frac{1}{T} \boldsymbol{\hat{\Sigma}}_{\text{REML},r}^{-1}\left(
     \sum_{t=1}^T\boldsymbol{e_t} \boldsymbol{e}_t^\top+
      \frac{
        \partial\log|\boldsymbol{X}^\top\boldsymbol{\hat{\Sigma}}^{-1}_{\text{REML}}\boldsymbol{X}|
      }{\partial\boldsymbol{\hat{\Sigma}}^{-1}_{\text{REML},r}}\right)\boldsymbol{\hat{\Sigma}}_{\text{REML},r}^{-1}.
      \end{split}
\end{align}
The solution can be written in terms of $\vect{\Sigma}_r$ as
\begin{align}
      \boldsymbol{\hat{\Sigma}}_{\text{REML},r} =
      \frac{1}{T}\left(
      \sum_{t=1}^T\boldsymbol{e_t} \boldsymbol{e}_t^\top+
      \frac{
        \partial\log|\boldsymbol{X}^\top\boldsymbol{\hat{\Sigma}}^{-1}_{\text{REML}}\boldsymbol{X}|
      }{
        \partial\boldsymbol{\hat{\Sigma}}^{-1}_{\text{REML},r}}\right).
\end{align}
This is \eqref{eq:Sigma.rREML} and concludes the proof of Theorem \ref{the:main}.\qed 
\section{Proof of Theorem \ref{the:opt.proj.hier}}\label{proof:opt.proj.hier}
Consider the regression problem 
\begin{align}
  \boldsymbol{y}=\boldsymbol{X}\boldsymbol{\beta}+\boldsymbol{\epsilon};\quad \boldsymbol{\epsilon}_t\sim N(\boldsymbol{0},\boldsymbol{I}_T\otimes\vect{\Sigma}_r),
\end{align}
where $\boldsymbol{X}_{\cdot,t}=\boldsymbol{I}_m\otimes\boldsymbol{\hat{y}}_t^\top$, with the linear constraints $\boldsymbol{S}^\top\vectoriz^{-1}(\boldsymbol{\beta})=\boldsymbol{I}_m$. We use the following notation
\begin{align}
  \boldsymbol{\hat{Y}}=\left[\boldsymbol{\hat{Y}}_T\quad \boldsymbol{\hat{Y}}_B\right];\quad\boldsymbol{S}=\left[\begin{matrix}\boldsymbol{S}_T\\ \boldsymbol{I}_m\end{matrix}\right];\quad \boldsymbol{\beta}^m=\vectoriz^{-1}(\boldsymbol{\beta})=\left[\begin{matrix}\boldsymbol{\beta}_T^m\\ \boldsymbol{\beta}_B^m\end{matrix}\right].
\end{align}
The linear constraints can be written as 
\begin{align}
  \boldsymbol{S}^\top\boldsymbol{\beta}^m=\left[\begin{matrix}\boldsymbol{S}_T^\top & \boldsymbol{I}_m\end{matrix}\right]\left[\begin{matrix}\boldsymbol{\beta}^m_T\\ \boldsymbol{\beta}^m_B\end{matrix}\right]=\boldsymbol{S}_T^\top\boldsymbol{\beta}^m_T+\boldsymbol{\beta}_B^m = \boldsymbol{I}_m,
\end{align}
i.e.,
\begin{align}
  \boldsymbol{\beta}_B^m=\boldsymbol{I}_m-\boldsymbol{S}_T^\top\boldsymbol{\beta}_T^m.
\end{align}
Using the model definition we get 
\begin{align}
  \begin{split}
  \vect{y}_t=&(\vect{I}_m\otimes\vect{\hat{y}}^\top_t)\vect{\beta}+\vect{\epsilon}_t\\
  =&(\vect{I}_m\otimes\vect{\hat{y}}_t^\top)\vectoriz\left(\left[\begin{matrix}\vect{\beta}_T^m\\ \vect{\beta}_B^m\end{matrix}\right]\right)+\vect{\epsilon}_t\\
  =&(\vect{I}_m\otimes\vect{\hat{y}}_t^\top)\vectoriz\left(\left[\begin{matrix}\vect{0}\\ \vect{I}_m\end{matrix}\right]+\left[\begin{matrix}\vect{I}_{n-m}\\ -\vect{S}_T^\top\end{matrix}\right]\vect{\beta}^m_T\right)+\vect{\epsilon}_t.
  \end{split}
\end{align}
Using \eqref{eq:vect1} we get 
\begin{align}
  \begin{split}
  \vect{y}_t=&\vectoriz\left(\vect{\hat{y}}_t^\top\left(\left[\begin{matrix}\vect{0}\\ \vect{I}_m\end{matrix}\right]+\left[\begin{matrix}\vect{I}_{n-m}\\ -\vect{S}_T^\top\end{matrix}\right]\vect{\beta}^m_T\right)\right)+\vect{\epsilon}_t\\
  =&\vectoriz(\vect{\hat{y}}^\top_{B,t})+\vectoriz\left((\vect{\hat{y}}^\top_{T,t}- \vect{\hat{y}}^\top_{B,t}\vect{S}_T^\top)\vect{\beta}^m_T\right)+\vect{\epsilon}_t.
  \end{split}
\end{align}
Rearranging, we get 
\begin{align}
  \begin{split}
  \vect{y}_t-\vect{\hat{y}}_{B,t}  =&\left(\vect{I}_m\otimes(\vect{\hat{y}}^\top_{T,t}- \vect{\hat{y}}^\top_{B,t}\vect{S}_T^\top)\right)\vect{\beta}_T+\vect{\epsilon}_t\\
  =&\vect{X}_{\cdot,t}\vect{\beta}_T+\vect{\epsilon}_t,
  \end{split}
\end{align}
which is the postulated model. Using Corollary \ref{the:pars} the parameter estimates can be written as
\begin{align}
  \begin{split}
  \vect{\beta}^m_T=&\left[\begin{matrix} 
      \left(\vect{X}_{1,\cdot}^\top\vect{X}_{1,\cdot}^\top\right)^{-1}\vect{X}_{1,\cdot}^\top\vect{y}_{1,\cdot} & \hdots & \left(\vect{X}_{1,\cdot}^\top\vect{X}_{1,\cdot}^\top\right)^{-1}\vect{X}_{1,\cdot}^\top\vect{y}_{m,\cdot}
  \end{matrix}\right]\\
  =&\left(\vect{X}_{1,\cdot}^\top\vect{X}_{1,\cdot}^\top\right)^{-1}\vect{X}_{1,\cdot}^\top\vect{Y}.
  \end{split}
\end{align}
By inserting $\vect{\tilde{X}}_{1,\cdot}$ we get
\begin{align}
  \begin{split}
  \boldsymbol{\hat{\beta}}_T^m=&\left(\left(\boldsymbol{\hat{Y}}_T- \boldsymbol{\hat{Y}}_B\boldsymbol{S}_T^\top\right)^\top
  \left(\boldsymbol{\hat{Y}}_T- \boldsymbol{\hat{Y}}_B\boldsymbol{S}_T^\top\right)\right)^{-1}
  \left(\boldsymbol{\hat{Y}}_T- \boldsymbol{\hat{Y}}_B\boldsymbol{S}_T^\top\right)^\top
 \left(\boldsymbol{Y}-\boldsymbol{\hat{Y}}_B\right).
  \end{split}
\end{align}

The next step is to show that this is equivalent to $\vect{P}$. Using Theorem 1 of \cite{wickramasuriya2019optimal} the usual projection can be written as
\begin{align}
  \begin{split}
  \boldsymbol{P}=&\left(\boldsymbol{S}^\top\boldsymbol{\Sigma}_h^{-1}\boldsymbol{S}\right)^{-1} \boldsymbol{S}^\top\boldsymbol{\Sigma}_h^{-1}\\
  =& \boldsymbol{J}-\boldsymbol{J}\boldsymbol{\Sigma}_h\boldsymbol{U}
  \left(\boldsymbol{U}^\top\boldsymbol{\Sigma}_h\boldsymbol{U}\right)^{-1} \boldsymbol{U}^\top,\label{eq:P}
  \end{split}
\end{align}
where $\boldsymbol{J}=\left[\boldsymbol{0}_{m\times n-m}\quad \boldsymbol{I}_m\right]$, $\boldsymbol{U}=  \left[\begin{matrix}\boldsymbol{I}_{n-m}\\ -\boldsymbol{S}_T^\top \end{matrix}\right]$.

With $\boldsymbol{\Sigma}_h=\frac{1}{T}\left(\boldsymbol{YS}^\top-\boldsymbol{\hat{Y}}\right)^\top \left(\boldsymbol{YS}^\top-\boldsymbol{\hat{Y}}\right)$, we calculate each term of \eqref{eq:P}
\begin{align}
  \begin{split}
  \boldsymbol{U}^\top\left(\boldsymbol{YS}^\top-\boldsymbol{\hat{Y}}\right)^\top=&
  \left[\begin{matrix}\boldsymbol{I}_{n-m} & -\boldsymbol{S}_T \end{matrix}\right]\left(\boldsymbol{S}\boldsymbol{Y}^\top-\boldsymbol{\hat{Y}}^\top\right) \\
  =& \left[\begin{matrix}\boldsymbol{I}_{n-m} & -\boldsymbol{S}_T \end{matrix}\right]\left(\left[\begin{matrix}\boldsymbol{S}_{T}\\\boldsymbol{I}_{m}\end{matrix}\right]\boldsymbol{Y}^\top-\boldsymbol{\hat{Y}}^\top\right)\\
  =&-\left[\begin{matrix}\boldsymbol{I}_{n-m} & -\boldsymbol{S}_T \end{matrix}\right]\boldsymbol{\hat{Y}}^\top\\
  =&-\left(\boldsymbol{\hat{Y}}^\top_T-\boldsymbol{S}_T\boldsymbol{\hat{Y}}^\top_B\right).\label{eq:D.11}
  \end{split}
\end{align}
Further, 
\begin{align}
  \begin{split}
  \boldsymbol{J}(\boldsymbol{YS}^\top-\boldsymbol{\hat{Y}})^\top=&
  \left[\boldsymbol{0}_{m\times n-m}\quad \boldsymbol{I}_m\right] \left(\left[\begin{matrix}\boldsymbol{S}_{T}\\\boldsymbol{I}_{m}\end{matrix}\right]\boldsymbol{Y}-\boldsymbol{\hat{Y}}^\top\right)\\
 =& \boldsymbol{Y}^\top-\boldsymbol{\hat{Y}}_B^\top\label{eq:D.12}
 \end{split}
\end{align}
and 
\begin{align}
  \begin{split}
  \boldsymbol{P}=&\boldsymbol{J}+
  \left(\boldsymbol{Y}-\boldsymbol{\hat{Y}}_B\right)^\top \left(\boldsymbol{\hat{Y}}_T-\boldsymbol{\hat{Y}}_B\boldsymbol{S}_T^\top\right)
  \left(\left(\boldsymbol{\hat{Y}}_T-\boldsymbol{\hat{Y}}_B\boldsymbol{S}_T^\top\right)^\top
  \left(\boldsymbol{\hat{Y}}_T-\boldsymbol{\hat{Y}}_B\boldsymbol{S}_T^\top\right)\right)^{-1}
  \boldsymbol{U}^\top\\
  =&\boldsymbol{J}+\left(\boldsymbol{\hat{\beta}}_T^m\right)^\top\boldsymbol{U}^\top.
  \end{split}
  \end{align}
Since 
\begin{align}
  \left(\boldsymbol{\hat{\beta}}^m\right)^\top=&\boldsymbol{J}+\left(\boldsymbol{\hat{\beta}}_T^m\right)^\top\boldsymbol{U}^\top,
\end{align}
we can conclude that if $\boldsymbol{\Sigma}_h=\left(\boldsymbol{YS}^\top-\boldsymbol{\hat{Y}}\right)^\top \left(\boldsymbol{YS}^\top-\boldsymbol{\hat{Y}}\right)$ then 
\begin{align}
  \boldsymbol{P}=\left(\boldsymbol{\hat{\beta}}^m\right)^\top,
\end{align}
where $\boldsymbol{\hat{\beta}}^m$ is calculated using usual linear regression with linear constraint $\boldsymbol{S}^\top\boldsymbol{\hat{\beta}}^m=\boldsymbol{I}$. This proves the equivalence between \eqref{eq:18} and \eqref{eq:ConstrReg}. The ML and REML estimators in \eqref{eq:18a} follow from \eqref{eq:Sigma.rML} and \eqref{eq:Sig.rSimp} in Theorem \ref{the:main}.\qed 
\section{Proof of Corollary \ref{remark:reconvar}}\label{proof:reconvar}
We need to show that
\begin{align}
  \begin{split}
     \frac{1}{T}\boldsymbol{P}\left(\boldsymbol{Y}\boldsymbol{S}^\top-\boldsymbol{\hat{Y}}\right)^\top \left(\boldsymbol{Y}\boldsymbol{S}^\top-\boldsymbol{\hat{Y}}\right) \boldsymbol{P}^\top=&
\frac{1}{T} \left(  \boldsymbol{Y}-\boldsymbol{\hat{Y}}_B-\left(\boldsymbol{Y}_T-\boldsymbol{\hat{Y}}_B
  \boldsymbol{S}_T^\top\right)\boldsymbol{\beta}_T^m\right)^\top\times\\ & \left( \boldsymbol{Y}-\boldsymbol{\hat{Y}}_B-\left(\boldsymbol{Y}_T-\boldsymbol{\hat{Y}}_B
  \boldsymbol{S}_T^\top\right)\boldsymbol{\beta}_T^m\right).
  \end{split}
\end{align}
It suffices to show that
\begin{align}
   \left(\boldsymbol{Y}\boldsymbol{S}^\top-\boldsymbol{\hat{Y}}\right) \boldsymbol{P}^\top=
  \boldsymbol{Y}-\boldsymbol{\hat{Y}}_B-\left(\boldsymbol{Y}_T-\boldsymbol{\hat{Y}}_B
  \boldsymbol{S}_T^\top\right)\boldsymbol{\beta}_T^m\label{eq:reconvar}.
\end{align}
Using Theorem \ref{the:opt.proj.hier}, the left-hand side can be written as 
\begin{align}
  \begin{split}
  (\boldsymbol{Y}\boldsymbol{S}^\top-\boldsymbol{\hat{Y}}) \boldsymbol{P}^\top =&
  \left[ \boldsymbol{Y}\boldsymbol{S}_T^\top-\boldsymbol{\hat{Y}}_T\quad \boldsymbol{Y}-\boldsymbol{\hat{Y}}_B\right]
    \left[\begin{matrix}
    \boldsymbol{\beta}_T^m \\ \boldsymbol{\beta}_B^m\end{matrix}\right]\\
     =&
  \left[ \boldsymbol{Y}\boldsymbol{S}_T^\top-\boldsymbol{\hat{Y}}_T\quad \boldsymbol{Y}-\boldsymbol{\hat{Y}}_B\right]
    \left[\begin{matrix}
    \boldsymbol{\beta}_T^m \\ \boldsymbol{I}-\boldsymbol{S}_T^\top\boldsymbol{\beta}_T^m\end{matrix}\right]\\
    =&\left(\boldsymbol{Y}\boldsymbol{S}_T^\top-\boldsymbol{\hat{Y}}_T - 
    (\boldsymbol{Y}-\boldsymbol{\hat{Y}}_B)\boldsymbol{S}_T^\top\right)\boldsymbol{\beta}_T^m + 
    \boldsymbol{Y}-\boldsymbol{\hat{Y}}_B\\
        =&\boldsymbol{Y}-\boldsymbol{\hat{Y}}_B-\left(\boldsymbol{\hat{Y}}_T -
    \boldsymbol{\hat{Y}}_B\boldsymbol{S}_T^\top\right)\boldsymbol{\beta}_T^m,
    \end{split}
\end{align}
which completes the proof.\qed
\section{Proof of Lemma \ref{the:proj}}\label{proof:proj}
We assume that $\vect{X}_{1,\cdot}$ has full rank (i.e., $n-m$).
In that case  the reconciled forecast is 
\begin{align}
\begin{split}
  \vect{\tilde{Y}}=&\vect{\hat{Y}}_B+\vect{X}_{1,\cdot}\vect{\hat{\beta}}^m_T\\
  =&\vect{\hat{Y}}_B+\vect{X}_{1,\cdot}(\vect{X}_{1,\cdot}^\top\vect{X}_{1,\cdot})^{-1}\vect{X}_{1,\cdot}^\top(\vect{Y}-\vect{\hat{Y}}_B)\\
    =&\vect{\hat{Y}}_B+\vect{H}(\vect{Y}-\vect{\hat{Y}}_B),\label{eq:H2}
\end{split}
\end{align}
with $\vect{H}=\vect{X}_{1,\cdot}(\vect{X}_{1,\cdot}^\top\vect{X}_{1,\cdot})^{-1}\vect{X}_{1,\cdot}^\top$. Both $\vect{H}$ and $\vect{I}-\vect{H}$ are projection matrices,
and 
\begin{align}
    \begin{split}
\vect{\tilde{Y}}-\vect{\hat{Y}}_B=&
\vect{H}(\vect{Y}-\vect{\hat{Y}}_B)\\
\vect{Y}-\vect{\tilde{Y}}=&\vect{Y}-\vect{H}(\vect{Y}-\vect{\hat{Y}}_B)-\vect{\hat{Y}}_B\\
=&(\vect{I}-\vect{H})(\vect{Y}-\vect{\hat{Y}}_B)\label{eq:H3}.
    \end{split}
\end{align}

For the top levels note that $\vect{X}_{1,\cdot}=\vect{\hat{Y}}_T-\vect{\hat{Y}}_B\vect{S}_T^\top$, and 
\begin{align}
  \begin{split}
    \vect{\tilde{Y}}\vect{S}_T^\top-\vect{\hat{Y}}_T=&  
    \vect{\hat{Y}}_B\vect{S}_T^\top+\vect{H}(\vect{Y}-\vect{\hat{Y}}_B)\vect{S}_T^\top-\vect{\hat{Y}}_T\\
    =& \vect{H}(\vect{Y}-\vect{\hat{Y}}_B)\vect{S}_T^\top-\vect{X}_{1,\cdot}.
    \end{split}
\end{align}
Using that  $\vect{X}_{1,\cdot}=\vect{H}\vect{X}_{1,\cdot}$, we can write
\begin{align}
  \begin{split}
    \vect{\tilde{Y}}\vect{S}_T^\top-\vect{\hat{Y}}_T
    =& \vect{H}(\vect{Y}\vect{S}_T^\top-\vect{\hat{Y}}_B\vect{S}_T^\top-\vect{X}_{1,\cdot})\\
    =& \vect{H}(\vect{Y}\vect{S}_T^\top-\vect{\hat{Y}}_B\vect{S}_T^\top-\vect{\hat{Y}}_T+\vect{\hat{Y}}_B\vect{S}_T^\top)\\
    =& \vect{H}(\vect{Y}\vect{S}_T^\top-\vect{\hat{Y}}_B\vect{S}_T^\top-\vect{\hat{Y}}_T+\vect{\hat{Y}}_B\vect{S}_T^\top)\\
    =& \vect{H}(\vect{Y}\vect{S}_T^\top-\vect{\hat{Y}}_T).
    \end{split}
\end{align}
As an immediate consequence we can write 
\begin{align}
  \begin{split}
    \vect{Y}\vect{S}_T^\top-\vect{\tilde{Y}}\vect{S}_T^\top=&
    \vect{Y}\vect{S}_T^\top-\vect{H}(\vect{Y}\vect{S}_T^\top-\vect{\hat{Y}}_T)-\vect{\hat{Y}}_T\\
    =&(\vect{I}-\vect{H})(\vect{Y}\vect{S}_T^\top-\vect{\hat{Y}}_T).
    \end{split}
\end{align}
Hence,
\begin{align}
  \begin{split}
    \vect{\tilde{Y}}\vect{S}^\top-\vect{\hat{Y}}=&\left[\vect{H}(\vect{Y}\vect{S}^\top_T-\vect{\hat{Y}}_T)
      \quad
    \vect{H}(\vect{Y}-\vect{\hat{Y}}_B)\right]\\
    =&
    \vect{H}(\vect{Y}\vect{S}^\top-\vect{\hat{Y}})\\
    (\vect{Y}-\vect{\tilde{Y}})\vect{S}^\top=&\left[(\vect{I}-\vect{H})(\vect{Y}-\vect{\tilde{Y}})\vect{S}^\top_T
      \quad
    (\vect{I}-\vect{H})(\vect{Y}-\vect{\tilde{Y}})\vect{S}^\top_B\right]\\
    =&    (\vect{I}-\vect{H})(\vect{Y}-\vect{\tilde{Y}})\vect{S}^\top\label{eq:projAll},
   \end{split}
\end{align}
and
\begin{align}
(\vect{Y}\vect{S}^\top-\vect{\hat{Y}})= (\vect{I}-\vect{H})(\vect{Y}\vect{S}^\top-\vect{\hat{Y}}) + \vect{H}(\vect{Y}\vect{S}^\top-\vect{\hat{Y}}).\label{eq:proofProj}
\end{align}
Orthogonality follows from $\vect{H}=\vect{H}^\top$ and $\vect{H}^2=\vect{H}$. This is true elementwise as we can choose an arbitrary index set $I$ and get 
\begin{align}
  \begin{split}
    \left( \left(\vect{Y}\vect{S}^\top-\vect{\hat{Y}}\right)^\top\left(\vect{Y}\vect{S}^\top-\vect{\hat{Y}}\right)\right)_{I,I}=\left(\vect{Y}\vect{S}_{I,\cdot}^\top-\vect{\hat{Y}}_{\cdot,I}\right)^\top\left(\vect{Y}\vect{S}_{I,\cdot}^\top-\vect{\hat{Y}}_{\cdot,I}\right),
  \end{split}
\end{align}
and similarly for the terms on the right-hand side of \eqref{eq:proofProj}. This concludes the proof of Lemma  \ref{the:proj}.\qed
\section{Proof of Theorem \ref{col:proj}}\label{proof:colProj}
First, \eqref{eqVarSep} follows directly from \eqref{eq:vect2}. For \eqref{eq:proj2} we use \eqref{eq:projAll} and \eqref{eq:vect1}
\begin{align}
  \begin{split}
    \vect{\tilde{y}}_I-\vect{\hat{y}}_I=&\vectoriz(\vect{S}_I\vect{\tilde{Y}}^\top-\vect{\hat{Y}}_I^\top)\\
    =& \vectoriz((\vect{S}\vect{Y}^\top-\vect{\hat{Y}}^\top_I)\vect{H})\\
    =&(\vect{H}\otimes\vect{I}_{q_I})\vectoriz(\vect{S}\vect{Y}^\top-\vect{\hat{Y}}^\top_I)\\
    =&(\vect{H}\otimes\vect{I}_{q_I})(\vect{y}_I-\vect{\hat{y}}_I)\\
    \vect{y}_I-\vect{\tilde{y}}_I=&\vectoriz(\vect{S}_I\vect{Y}^\top-\vect{S}_I\vect{\tilde{Y}}_I^\top)\\
    =&\vectoriz((\vect{S}_I\vect{Y}^\top-\vect{\hat{Y}}_I^\top)(\vect{I}-\vect{H}))\\
    =& ((\vect{I}-\vect{H})\otimes \vect{I}_{q_I})\vectoriz(\vect{S}_I\vect{Y}^\top-\vect{\hat{Y}}_I^\top)\\
    =& (\vect{I}_{Tq_I}-\vect{H}\otimes \vect{I}_{q_I})(\vect{y}_I-\vect{\hat{y}}_I).
  \end{split}
\end{align}
With $\vect{H}_I=\vect{H}\otimes\vect{I}_{q_I}$ we get
\begin{align}
  \begin{split}
    \vect{y}_I-\vect{\hat{y}}_I=&\vect{H}_I(\vect{y}_I-\vect{\hat{y}}_I)+ (\vect{I}_{Tq_I}-\vect{H}_I)(\vect{y}_I-\vect{\hat{y}}_I).
  \end{split}
\end{align}
Since $\vect{H}$ is a projection matrix we get $\vect{H}_I^2=(\vect{H}\otimes\vect{I}_{q_I})(\vect{H}\otimes\vect{I}_{q_I})=\vect{H}^2\otimes\vect{I}_{q_I}^2=\vect{H}\otimes\vect{I}_{q_I}$ and 
$\vect{H}_I^\top=(\vect{H}\otimes\vect{I}_{q_I})^\top=\vect{H}^\top\otimes\vect{I}_{q_I}^\top=\vect{H}\otimes\vect{I}_{q_I}=\vect{H}_I$. Orhtogonality follows from $(\vect{I}_{Tq_I}-\vect{H}_I)\vect{H}_I=\vect{0}$. Thus, we have 
\begin{align}
 \begin{split} ||\vect{y}_I-\vect{\hat{y}}_I||^2=&(\vect{y}_I-\vect{\hat{y}}_I)^\top\vect{H}_I(\vect{y}_I-\vect{\hat{y}}_I)+(\vect{y}_I-\vect{\hat{y}}_I)^\top(\vect{I}_{Tq_I}-\vect{H}_I)(\vect{y}_I-\vect{\hat{y}}_I)\\
  =&||\vect{y}_I-\vect{\tilde{y}}_I||^2+||\vect{\tilde{y}}_I-\vect{\hat{y}}_I||^2,
  \end{split}
  \end{align}
which is \eqref{eq:proj2}, and \eqref{eq:DistReduc1} follows directly. This concludes the proof of Theorem \ref{col:proj}.\qed
\section{Proof of Corollary \ref{col:CentralSig}}\label{proof:CentralSig}

We consider the model 
\begin{align}
  \vect{y}_t-\vect{\hat{y}}_{B,t}=\left[\vect{I}_m\otimes (\vect{\hat{y}}_{T,t}^\top-\vect{\hat{y}}_{B,t}^\top\vect{S}_T^\top)\right]\vect{\beta}_T+\vect{\epsilon}_t;\quad \vect{\epsilon}_t\sim N(\vect{0},\vect{\Sigma}_r).\label{eq:ConstrRegApp}
\end{align}
Multiplying with a fixed vector $\vect{v}$ and setting $\vect{X}_{1,t}=\vect{\hat{y}}_{T,t}^\top-\vect{\hat{y}}_{B,t}^\top\vect{S}_T^\top$, we get 
\begin{align}
  \vect{v}^\top(\vect{y}_t-\vect{\hat{y}}_{B,t})=[v_1\vect{X}_{1,t}\quad v_2\vect{X}_{1,t}\quad ...\quad v_{m}\vect{X}_{1,t}]\vect{\beta}_T+\vect{\epsilon}_t;\quad \vect{v}^\top\vect{\epsilon}_t\sim N(\vect{0},\vect{\Sigma}_r).
\end{align}
Setting $\vect{\gamma}=\sum_{i=1}^mv_i\vect{\beta}_{i,T}$, $z_t=\vect{v}^\top(\vect{y}_t-\vect{\hat{y}}_{B,t})$, $u_t=\vect{v}^\top\vect{\epsilon}_t$, and $\sigma_v^2=\vect{v}^\top\vect{\Sigma}_r\vect{v}$, we get 
\begin{align}
z_t=\vect{X}_{1,t}\vect{\gamma}+u_t;\quad u_t\sim N(0,\sigma_v^2). 
\end{align}
Written in matrix-vector notation this is 
\begin{align}
\vect{z}=\vect{X}_{1,\cdot}\vect{\gamma}+\vect{u};\quad \vect{u}\sim N(0,\sigma_v^2\vect{I}),  
\end{align}
which is the usual general linear model for which variance decomposition and Cochran's Theorem imply the $\chi^2$-distribution \citep[see e.g.][]{madsen_thyregod_2011}. For completeness we write out the chain of projection. The model can be written as
\begin{align}
\begin{split}
\vect{u}=&\vect{z}-\vect{X}_{1,\cdot}\vect{\gamma}\\
=&\vect{z}-\vect{X}_{1,\cdot}\vect{\hat{\gamma}}+\vect{X}_{1,\cdot}\vect{\hat{\gamma}}-\vect{X}_{1,\cdot}\vect{\gamma}\\
=&(\vect{I}-\vect{H})\vect{z}+\vect{H}\vect{z}-\vect{X}_{1,\cdot}\vect{\gamma}.
\end{split}
\end{align}
By definition $\frac{\vect{u}^\top\vect{u}}{\sigma^2_v}\sim\chi^2(T)$. For the second term, centrality of the the estimator $\vect{\hat{\gamma}}$ implies that $E[(\vect{I}-\vect{H})\vect{z}]=\vect{0}$. Since $(\vect{I}-\vect{H})$ is a projection matrix there exists an orthogonal basis that spans the space of $(\vect{I}-\vect{H})\vect{z}$ and, hence, by Lemma \ref{lemma:Whishart}  $\frac{\vect{z}^\top(\vect{I}-\vect{H})\vect{z}}{\sigma_v^2}\sim \chi^2(T-n+m)$.  
Additionally, if $\vect{\gamma}=\vect{0}$,  then $\frac{\vect{z}^\top\vect{H}\vect{z}}{\sigma_v^2}\sim\chi^2(n-m)$. Since $\vect{v}$ is arbitrary, Lemma \ref{lemma:Whishart} implies that  
    \begin{align}
    \left(\vect{Y}-\vect{\tilde{Y}}\right)^\top\left(\vect{Y}-\vect{\tilde{Y}}\right)\sim W_m(\vect{\Sigma}_r,T-(n-m)).
    \end{align}
This holds under the model assumptions (i.e., also when $\vect{\beta}_T\neq\vect{0}$). If, in addition, $\vect{\gamma}=\vect{0}$ for all $\vect{v}$, implying that $\vect{\beta}_T=\vect{0}$, then 
    \begin{align}
    \left(\vect{\tilde{Y}}-\vect{\hat{Y}}\right)^\top\left(\vect{\tilde{Y}}-\vect{\hat{Y}}\right)\sim W_m(\vect{\Sigma}_r,n-m).
\end{align}
which concludes the proof of Corollary \ref{col:CentralSig}.\qed 

\section{Proof of Theorem \ref{the:shrink}}\label{proof:shrink}

The usual shrinkage estimator for the variance--covariance matrix is 
\begin{align}
  \begin{split}
  \boldsymbol{\Sigma}_s
  =& (1-\lambda)\boldsymbol{\Sigma}_h+\lambda\boldsymbol{\Sigma}^{\text{d}}_h,
  \end{split}
\end{align}
with $\boldsymbol{\Sigma}_h=\frac{1}{T}\left(\boldsymbol{Y}\boldsymbol{S}^\top-\boldsymbol{\hat{Y}}\right)^\top\left(\boldsymbol{Y}\boldsymbol{S}^\top-\boldsymbol{\hat{Y}}\right)$. Using Theorem 1 of \cite{wickramasuriya2019optimal} and \eqref{eq:D.11}--\eqref{eq:D.12} we can write 
\begin{align}
  \boldsymbol{J}\boldsymbol{\Sigma}_s\boldsymbol{U}=& -\frac{1-\lambda}{T}\left(\boldsymbol{Y}-\boldsymbol{\hat{Y}}_B\right)^\top \left(\boldsymbol{\hat{Y}}-\boldsymbol{\hat{Y}}_B\boldsymbol{S}_T^\top\right) + \lambda\boldsymbol{J}\boldsymbol{\Sigma}_h^{\text{d}}\boldsymbol{U},
\end{align}
where 
\begin{align}
  \begin{split}
  \boldsymbol{J}\boldsymbol{\Sigma}_h^{\text{d}}\boldsymbol{U} = &
  \left[\boldsymbol{0}_{m\times n-m}\quad \boldsymbol{I}_m\right] \left[\begin{matrix}\boldsymbol{\Sigma}^{\text{d}}_{h,T} &\boldsymbol{0} \\
      \boldsymbol{0} & \boldsymbol{\Sigma}^{\text{d}}_{h,B}
    \end{matrix}\right]
   \left[\begin{matrix}\boldsymbol{I}_{n-m}\\ -\boldsymbol{S}_T^\top \end{matrix}\right]\\
   =&-\boldsymbol{\Sigma}^{\text{d}}_{h,B}\boldsymbol{S}_T^\top.
   \end{split}
\end{align}

Again, using \eqref{eq:D.11}, we can write
\begin{align}
  \boldsymbol{U}^\top\boldsymbol{\Sigma}_s\boldsymbol{U}=&
  \frac{1-\lambda}{T} \left(\boldsymbol{Y}-\boldsymbol{\hat{Y}}_B\right)^\top \left(\boldsymbol{Y}-\boldsymbol{\hat{Y}}_B\right)+ \lambda\boldsymbol{U}^\top\boldsymbol{\Sigma}_h^{\text{d}}\boldsymbol{U},
\end{align}
and
\begin{align}
  \begin{split}
  \boldsymbol{U}^\top\boldsymbol{\Sigma}^{\text{d}}_h\boldsymbol{U}=&
  \left[\begin{matrix}\boldsymbol{I}_{n-m}& -\boldsymbol{S}_T \end{matrix}\right]
  \left[\begin{matrix}\boldsymbol{\Sigma}^{\text{d}}_{h,T} &\boldsymbol{0} \\
      \boldsymbol{0} & \boldsymbol{\Sigma}^{\text{d}}_{h,B}
    \end{matrix}\right]
   \left[\begin{matrix}\boldsymbol{I}_{n-m}\\ -\boldsymbol{S}_T^\top \end{matrix}\right]\\
   =&\boldsymbol{\Sigma}^{\text{d}}_{h,T}+\boldsymbol{S}_T\boldsymbol{\Sigma}^{\text{d}}_{h,B} \boldsymbol{S}_T^\top.
   \end{split}
\end{align}
Hence,
\begin{align}
  \begin{split}
  \boldsymbol{P}(\lambda)
  =& \boldsymbol{J}-\boldsymbol{J}\boldsymbol{\Sigma}_s\boldsymbol{U}
  \left(\boldsymbol{U}^\top\boldsymbol{\Sigma}_s\boldsymbol{U}\right)^{-1} \boldsymbol{U}^\top\\
  =&\boldsymbol{J}+ \left(\frac{1-\lambda}{T} \left(\boldsymbol{Y}-\boldsymbol{\hat{Y}}_B\right)^\top\left(\boldsymbol{\hat{Y}}_T-\boldsymbol{\hat{Y}}_B\boldsymbol{S}_T^\top\right)+\lambda\boldsymbol{\Sigma}^{\text{d}}_{h,B}\boldsymbol{S}_T^\top\right)\times\\
  &\left(\frac{1-\lambda}{T} \left(\boldsymbol{\hat{Y}}_T-\boldsymbol{\hat{Y}}_B\boldsymbol{S}_T^\top\right)^\top \left(\boldsymbol{\hat{Y}}_T-\boldsymbol{\hat{Y}}_B\boldsymbol{S}_T^\top\right)+ 
  \lambda\left(\boldsymbol{\Sigma}^{\text{d}}_{h,T}+\boldsymbol{S}_T\boldsymbol{\Sigma}^{\text{d}}_{h,B} \boldsymbol{S}_T^\top\right)\right)^{-1}\boldsymbol{U}^\top.\label{eq:F6}
  \end{split}
\end{align}

Consider the model 
\begin{align}
  \boldsymbol{y}_t-\boldsymbol{\hat{y}}_{B,t}=\left[\boldsymbol{I}_m\otimes\boldsymbol{\hat{y}}_{T,t}^\top- \boldsymbol{I}_m\otimes\left(\boldsymbol{\hat{y}}_{B,t}^\top\boldsymbol{S}_T^\top\right)\right]\boldsymbol{\beta}_T+ \boldsymbol{\epsilon}_t
\end{align}
or 
\begin{align}
    \boldsymbol{y}-\boldsymbol{\hat{y}}_{B}=\boldsymbol{X}\boldsymbol{\beta}_T+ \boldsymbol{\epsilon}\label{eq:AppGlm}
  \end{align}
with $\boldsymbol{\epsilon}\sim N(\boldsymbol{0},\boldsymbol{I}_T\otimes\boldsymbol{\Sigma}_r)$ and $\boldsymbol{\beta}_T\sim N_{m(n-m)}(\boldsymbol{\beta}_{0,T},\boldsymbol{\Sigma}_{\beta})$.
Its log-posterior density is 
\begin{align}
  \begin{split}
  l_{\text{MAP}}(\boldsymbol{\beta}_T,\boldsymbol{\Sigma}_r)\propto & -\frac{1}{2}\log|\boldsymbol{\Sigma}| -
  \frac{1}{2}\left(\boldsymbol{y}-\boldsymbol{\hat{y}}_B-\boldsymbol{X}\boldsymbol{\beta}_T\right)^\top\boldsymbol{\Sigma}^{-1} \left(\boldsymbol{y}-\boldsymbol{\hat{y}}_B-\boldsymbol{X}\boldsymbol{\beta}_T\right) -\\
&  \frac{1}{2}\left(\boldsymbol{\beta}_T-\boldsymbol{\beta}_{0T}\right)^\top\boldsymbol{\Sigma}^{-1}_{\beta}\left(\boldsymbol{\beta}_T-\boldsymbol{\beta}_{0,T}\right)-\frac{1}{2}\log|\boldsymbol{\Sigma}_{\beta}|,\label{eq:F9}
  \end{split}
\end{align}
and, consequently, the MAP estimate of $\boldsymbol{\beta}_T$ is 
\begin{align}
  \begin{split}
  \boldsymbol{\hat{\beta}}_T=&\left(\boldsymbol{X}^\top(\boldsymbol{I}_T\otimes\boldsymbol{\Sigma}_r)^{-1}\boldsymbol{X} + 
  \boldsymbol{\Sigma}_{\beta}^{-1}\right)^{-1}
  (\boldsymbol{X}^\top (\boldsymbol{I}_T\otimes\boldsymbol{\Sigma}_r)^{-1}(\boldsymbol{y}-\boldsymbol{\hat{y}}_B)+\boldsymbol{\Sigma}_{\beta}^{-1}\boldsymbol{\beta}_{0,T}).\label{eq:F10}
  \end{split}
\end{align}

Choosing $\boldsymbol{\Sigma}_{\beta}=\boldsymbol{\Sigma}_r\otimes \boldsymbol{\Sigma}_{0,\beta}$, we go through each term. Since $\vect{x}_{t,i}=\vect{x}_{t,j}$, using \eqref{eq:C.7} we get
\begin{align}
  \begin{split}
\left(\boldsymbol{X}^\top(\boldsymbol{I}_T\otimes\boldsymbol{\Sigma}_r)^{-1}\boldsymbol{X} + 
  \boldsymbol{\Sigma}_{\beta}^{-1}\right)^{-1}=& 
  \vect{\Sigma}_r\otimes\left(\boldsymbol{X}_{1,\cdot}^\top\boldsymbol{X}_{1,\cdot} + 
  \boldsymbol{\Sigma}_{0,\beta}^{-1}\right)^{-1},
  \end{split}
  \end{align}
  and using \eqref{eq:C.9}
  \begin{align}
  \begin{split}
    (\boldsymbol{X}^\top (\boldsymbol{I}_T\otimes\boldsymbol{\Sigma}_r)^{-1}(\boldsymbol{y}-\boldsymbol{\hat{y}}_B)=&
    \left(\boldsymbol{X}^\top\vectoriz(\boldsymbol{\Sigma}_r^{-1} \left(\boldsymbol{Y}-\boldsymbol{\hat{Y}}_B\right)^\top\right)\\
  =&\left(\boldsymbol{I}_m\otimes\boldsymbol{X}_{1,\cdot}^\top\right)\vectoriz\left( (\boldsymbol{Y}-\boldsymbol{\hat{Y}}_B)\boldsymbol{\Sigma}_r^{-1}\right)\\
    =&\vectoriz\left(\boldsymbol{X}_{1,\cdot}^\top (\boldsymbol{Y}-\boldsymbol{\hat{Y}}_B)\boldsymbol{\Sigma}_r^{-1}\right).
\end{split}
\end{align}
It follows from the definition of the prior and \eqref{eq:vect2} that
\begin{align}
\begin{split}
    \boldsymbol{\Sigma}_{\beta}^{-1}\boldsymbol{\beta}_{0,T}=&(\boldsymbol{\Sigma}_{r}^{-1}\otimes \boldsymbol{\Sigma}_{0,\beta}^{-1})\boldsymbol{\beta}_{0,T}.
  \end{split}
\end{align}

Multiplying terms (using $\vect{C}=\left(\boldsymbol{X}_{1,\cdot}^\top\boldsymbol{X}_{1,\cdot} + 
  \boldsymbol{\Sigma}_{0,\beta}^{-1}\right)^{-1}$) we get
\begin{align}
  \begin{split}
  \left(\vect{\Sigma}_r\otimes \vect{C}\right) \vectoriz\left(\boldsymbol{X}_{1,\cdot}^\top (\boldsymbol{Y}-\boldsymbol{\hat{Y}}_B)\boldsymbol{\Sigma}_r^{-1}\right)
  =&  \vectoriz\left(\vect{C}\boldsymbol{X}_{1,\cdot}^\top (\boldsymbol{Y}-\boldsymbol{\hat{Y}}_B)\right)\label{eq:F14}
    \end{split}
  \end{align}
and 
\begin{align}
  \begin{split}
 \left( \vect{\Sigma}_r\otimes\vect{C}^{-1}\right)    (\boldsymbol{\Sigma}_{r}^{-1}\otimes \boldsymbol{\Sigma}_{0,\beta}^{-1})\boldsymbol{\beta}_{0,T}
 =&\left(\boldsymbol{I}_{m}\otimes
  \left(\vect{C}^{-1}
  \boldsymbol{\Sigma}_{0,\beta}^{-1}\right)\right)\boldsymbol{\beta}_{0,T}\\
  =&\vectoriz \left(\vect{C}^{-1}
  \boldsymbol{\Sigma}_{0,\beta}^{-1}\boldsymbol{\beta}_{0,T}^m\right).\label{eq:F15}
  \end{split}
 \end{align}
Inserting \eqref{eq:F14}--\eqref{eq:F15} in \eqref{eq:F10} we get    
\begin{align}
    \boldsymbol{\hat{\beta}}_T
  =&\vectoriz\left( \left(\boldsymbol{X}_{1,\cdot}^\top\boldsymbol{X}_{1,\cdot} + 
  \boldsymbol{\Sigma}_{0,\beta}^{-1}\right)^{-1}
  \left(\boldsymbol{X}_{1,\cdot}^\top (\boldsymbol{Y}-\boldsymbol{\hat{Y}}_B)+
  \boldsymbol{\Sigma}_{0,\beta}^{-1}\boldsymbol{\beta}_{0,T}^m\right)\right).
\end{align}

With $\boldsymbol{\Sigma}_{0,\beta}=
\frac{1-\lambda}{\lambda T}(\boldsymbol{\Sigma}_{h,T}^d+\boldsymbol{S}_T\boldsymbol{\Sigma}_{h,B}^d\boldsymbol{S}_T^\top)^{-1}$ and $\vect{X}_{1,\cdot}=\vect{\hat{Y}}_T-\vect{\hat{Y}}_B\vect{S}_T^\top$, we get 
\begin{align}
    \boldsymbol{\hat{\beta}}_T=&  \vectoriz\left[\left(\left(\boldsymbol{\hat{Y}}_T-\boldsymbol{\hat{Y}}_B\boldsymbol{S}^\top_T\right)^\top\left(\boldsymbol{\hat{Y}}_T-\boldsymbol{\hat{Y}}_B\boldsymbol{S}^\top_T\right) +   \frac{\lambda T }{1-\lambda}\left(\boldsymbol{\Sigma}_{h,T}^d+ \boldsymbol{S}_T\boldsymbol{\Sigma}_{h,B}^d\boldsymbol{S}_T^\top\right)\right)^{-1}\times\right.\nonumber\\
&\left.\left(  \left(\boldsymbol{\hat{Y}}_T-\boldsymbol{\hat{Y}}_B\boldsymbol{S}^\top_T\right)^\top\left(\boldsymbol{Y}-\boldsymbol{\hat{Y}}_B\right) +
  \frac{\lambda T}{1-\lambda}\left(\boldsymbol{\Sigma}_{h,T}^d+\boldsymbol{S}_T\boldsymbol{\Sigma}_{h,B}^d\boldsymbol{S}_T^\top\right)\left(  \boldsymbol{\beta}_{0,T}^m\right)^\top\right)\right]\nonumber\\
    =&  \vectoriz\left[\left((1-\lambda)\left(\boldsymbol{\hat{Y}}_T-\boldsymbol{\hat{Y}}_B\boldsymbol{S}^\top_T\right)^\top\left(\boldsymbol{\hat{Y}}_T-\boldsymbol{\hat{Y}}_B\boldsymbol{S}^\top_T\right) +   \lambda T \left(\boldsymbol{\Sigma}_{h,T}^d+\boldsymbol{S}_T\boldsymbol{\Sigma}_{h,B}^d\boldsymbol{S}_T^\top\right)\right)^{-1}\times\right.\nonumber\\
&\left.\left( (1-\lambda) \left(\boldsymbol{\hat{Y}}_T-\boldsymbol{\hat{Y}}_B\boldsymbol{S}^\top_T\right)^\top\left(\boldsymbol{Y}-\boldsymbol{\hat{Y}}_B\right) +  \lambda T \left(\boldsymbol{\Sigma}_{h,T}^d+\boldsymbol{S}_T\boldsymbol{\Sigma}_{h,B}^d\boldsymbol{S}_T^\top\right)\left(  \boldsymbol{\beta}_{0,T}^m\right)^\top\right)\right].
\end{align}
Hence, if we choose $\boldsymbol{\beta}_{0,T}^m= \boldsymbol{\Sigma}^d_{h,B}\boldsymbol{S}_T^\top\left(\boldsymbol{\Sigma}_{h,T}^d+ \boldsymbol{S}_T\boldsymbol{\Sigma}_{h,B}^d\boldsymbol{S}_T^\top\right)^{-1} $, we get
\begin{align}
    \boldsymbol{\hat{\beta}}_T^m=& \left((1-\lambda)\left(\boldsymbol{\hat{Y}}_T-\boldsymbol{\hat{Y}}_B\boldsymbol{S}^\top_T\right)^\top\left(\boldsymbol{\hat{Y}}_T-\boldsymbol{\hat{Y}}_B\boldsymbol{S}^\top_T\right) +   \lambda T \left(\boldsymbol{\Sigma}_{h,T}^d+\boldsymbol{S}_T\boldsymbol{\Sigma}_{h,B}^d\boldsymbol{S}_T^\top\right)\right)^{-1}\times\nonumber\\
&\left( (1-\lambda) (\boldsymbol{\hat{Y}}_T-\boldsymbol{\hat{Y}}_B\boldsymbol{S}^\top_T)^\top(\boldsymbol{Y}-\boldsymbol{\hat{Y}}_B) +
  \lambda T \boldsymbol{S}_T\boldsymbol{\Sigma}^d_{h,B}\right).
\end{align}
Finally, using the linear constraints $\boldsymbol{\beta}_B^m = \boldsymbol{I}_m-\boldsymbol{S}_T^\top\boldsymbol{\beta}_T^m$ and \eqref{eq:F6}, we get
\begin{align}
  \begin{split}
  \left[\begin{matrix}
      \boldsymbol{\hat{\beta}}_T^m\\
      \boldsymbol{\hat{\beta}}_B^m
    \end{matrix}\right]=&
\left[ \begin{matrix}
  \boldsymbol{0}\\
  \boldsymbol{I}
  \end{matrix}\right]-
\left[ \begin{matrix}
  \boldsymbol{I}\\
  \boldsymbol{S}_T^\top
  \end{matrix}\right]      \boldsymbol{\hat{\beta}}_T^m\\
=&\boldsymbol{P}^\top(\lambda).
\end{split}
\end{align}
This is \eqref{eq:35} and concludes the proof of Theorem \ref{the:shrink}.\qed 

\section{Proof of Corollary \ref{remark:MapSig}}\label{collo:MapSig}
The log-posterior density wrt.~$\vect{\Sigma}_r$ is
\begin{align}
  \begin{split}
  l_{\text{MAP}}(\boldsymbol{\Sigma}_r)\propto & -\frac{T}{2}\log|\boldsymbol{\Sigma}_r| -
  \frac{1}{2}\left(\boldsymbol{y}-\boldsymbol{\hat{y}}_B-\boldsymbol{X}\boldsymbol{\beta}_T\right)^\top\left(\boldsymbol{I}_T\otimes\boldsymbol{\Sigma}_r^{-1} \right)
  \left(\boldsymbol{y}-\boldsymbol{\hat{y}}_B-\boldsymbol{X}\boldsymbol{\beta}_T\right) -\\
 & \frac{1}{2}\left(\boldsymbol{\beta}_T-\boldsymbol{\beta}_{0,T}\right)^\top\left(\boldsymbol{\Sigma}^{-1}_{r}\otimes\boldsymbol{\Sigma}^{-1}_{0,\beta}\right)\left(\boldsymbol{\beta}_T-\boldsymbol{\beta}_{0,T}\right)-
  \frac{n-m}{2}\log|\boldsymbol{\Sigma}_{r}|.\label{eq:MAP_sig.r}
  \end{split}
\end{align}
The first two terms and the last term are treated in Appendix \ref{proof:main}. The only term in \eqref{eq:MAP_sig.r} that we have not treated in the previous is the derivative of the third term. Using \eqref{eq:MatrixNorm} we get 
\begin{align}
\begin{split}
\left(\boldsymbol{\beta}_T-\boldsymbol{\beta}_{0,T}\right)^\top\left(\boldsymbol{\Sigma}^{-1}_{r}\otimes\boldsymbol{\Sigma}^{-1}_{0,\beta}\right)\left(\boldsymbol{\beta}_T-\boldsymbol{\beta}_{0,T}\right)=&
\text{Tr}\left(\left(\boldsymbol{\beta}^m_T-\boldsymbol{\beta}^m_{0,T}\right)^\top\boldsymbol{\Sigma}^{-1}_{0,\beta} \left(\boldsymbol{\beta}^m_T-\boldsymbol{\beta}^m_{0,T}\right)\boldsymbol{\Sigma}^{-1}_{r}\right)\\
=&\text{Tr}\left(\left(\boldsymbol{\beta}^m_T-\boldsymbol{\beta}^m_{0,T}\right)\boldsymbol{\Sigma}^{-1}_{r}\left(\boldsymbol{\beta}^m_T-\boldsymbol{\beta}^m_{0,T}\right)^\top\boldsymbol{\Sigma}^{-1}_{0,\beta}\right),
\end{split}
\end{align}
and using \eqref{eq:derivInvTrace} we get 
\begin{align}
\frac{\partial \left(\boldsymbol{\beta}_T-\boldsymbol{\beta}_{0,T}\right)^\top\left(\boldsymbol{\Sigma}^{-1}_{r}\otimes\boldsymbol{\Sigma}^{-1}_{0,\beta}\right)\left(\boldsymbol{\beta}_T-\boldsymbol{\beta}_{0,T}\right)}{\partial\vect{\Sigma}_r}=&
-\boldsymbol{\Sigma}^{-1}_{r} \left(\boldsymbol{\beta}^m_T-\boldsymbol{\beta}^m_{0,T}\right)^\top \boldsymbol{\Sigma}^{-1}_{0,\beta}\left(\boldsymbol{\beta}^m_T-\boldsymbol{\beta}^m_{0,T}\right) \boldsymbol{\Sigma}^{-1}_{r}.
\end{align}

The derivative of $l_{\text{MAP}}$ wrt.~$\boldsymbol{\Sigma}_r$ is 
 \begin{align}
   \begin{split}
  \frac{\partial l_{\text{MAP}}(\boldsymbol{\Sigma}_r)}{\partial \boldsymbol{\Sigma}_r}=& -\frac{T}{2}\boldsymbol{\Sigma}_r^{-1} +
   \frac{1}{2}\boldsymbol{\Sigma}_r^{-1}\sum_{t=1}^T\boldsymbol{e}_t \boldsymbol{e}_t^\top\boldsymbol{\Sigma}_r^{-1} +\\&   \frac{1}{2}\boldsymbol{\Sigma}^{-1}_{r}\left(\boldsymbol{\hat{\beta}}_T^m-\boldsymbol{\beta}_{0,T}^m\right)^\top\boldsymbol{\Sigma}^{-1}_{0,\beta}\left(\boldsymbol{\hat{\beta}}^m_T-\boldsymbol{\beta}_{0,T}^m\right)\boldsymbol{\Sigma}^{-1}_{r}-
  \frac{n-m}{2}\boldsymbol{\Sigma}_{r}^{-1}
  \end{split}
\end{align}
and, hence, the MAP estimate of $\boldsymbol{\Sigma}_r$ is 
\begin{align}
  \begin{split}
  \boldsymbol{\hat{\Sigma}}_r=&\frac{1}{T+n-m}\left(\sum_{t=1}^T\boldsymbol{e}_t \boldsymbol{e}_t^\top +  \left(\boldsymbol{\hat{\beta}}^m_T-\boldsymbol{\beta}^m_{T,0}\right)^\top\boldsymbol{\Sigma}^{-1}_{0,\beta}\left(\boldsymbol{\hat{\beta}}_T^m-\boldsymbol{\beta}_{0,T}^m\right)\right)\\
 =&  \frac{1}{T+n-m}\left(\sum_{t=1}^T\boldsymbol{e}_t \boldsymbol{e}_t^\top + \right.\\ & \left.
 \frac{\lambda T}{1-\lambda} \left(\boldsymbol{\hat{\beta}}_T^m-\boldsymbol{\beta}_{0,T}^m\right)^\top\left(\boldsymbol{\Sigma}_{h,T}^d+\boldsymbol{S}_T\boldsymbol{\Sigma}_{h,B}^d\boldsymbol{S}_T^\top\right)\left(\boldsymbol{\hat{\beta}}^m_T-\boldsymbol{\beta}^m_{T,0}\right)\right)\\
  =&  \frac{T/(1-\lambda)}{T+n-m}\left(\frac{1-\lambda}{T}\sum_{t=1}^T\boldsymbol{e}_t \boldsymbol{e}_t^\top + \right.\\ & \left.
 \lambda  \left(\boldsymbol{\hat{\beta}}^m_T-\boldsymbol{\beta}^m_{T,0}\right)^\top\left(\boldsymbol{\Sigma}_{h,T}^d+\boldsymbol{S}_T\boldsymbol{\Sigma}_{h,B}^d\boldsymbol{S}_T^\top\right)\left(\boldsymbol{\hat{\beta}}^m_T-\boldsymbol{\beta}^m_{T,0}\right)\right).
 \end{split}
\end{align}
Using (the proof of) Remark \ref{remark:reconvar} and inserting $\boldsymbol{\beta}_{0,T}$ we can rewrite this as
\begin{align}
  \begin{split}
  \boldsymbol{\hat{\Sigma}}_r=&
   \frac{T/(1-\lambda)}{T+n-m}\left((1-\lambda)\boldsymbol{P}(\lambda)\boldsymbol{\Sigma}_{h}\boldsymbol{P}^\top(\lambda) + 
 \lambda (\left(\boldsymbol{\hat{\beta}}^m_T\right)^\top\left(\boldsymbol{\Sigma}_{h,T}^d+\boldsymbol{S}_T\boldsymbol{\Sigma}_{h,B}^d\boldsymbol{S}_T^\top\right)\boldsymbol{\hat{\beta}}^m_T -\right. \\ &\left.
\left(\boldsymbol{\hat{\beta}}^m_T\right)^\top\boldsymbol{S}_T\boldsymbol{\Sigma}_{h,B}^d-
\boldsymbol{\Sigma}_{h,B}^d\boldsymbol{S}_T^\top\boldsymbol{\hat{\beta}}_T^m+
\boldsymbol{\Sigma}_{h,B}^d\boldsymbol{S}_T^\top\left(\boldsymbol{\Sigma}_{h,T}^d+ \boldsymbol{S}_T\boldsymbol{\Sigma}_{h,B}^d\boldsymbol{S}_T^\top\right)^{-1}\boldsymbol{S}_T\boldsymbol{\Sigma}_{h,B}^d\right).
\end{split}
\end{align}
With $\vect{\Sigma}_s=(1-\lambda)\vect{\Sigma}_h+\lambda\vect{\Sigma}_h^d$, we have $(1-\lambda)\vect{P}(\lambda)\vect{\Sigma}_h\vect{P}^\top(\lambda)=(1-\lambda)\vect{P}(\lambda)\vect{\Sigma}_s\vect{P}^\top(\lambda)-\lambda\vect{P}(\lambda)\vect{\Sigma}_h^d\vect{P}^\top(\lambda)$, where
\begin{align}
  \begin{split}
  \boldsymbol{P}(\lambda)\boldsymbol{\Sigma}^d_h\boldsymbol{P}^\top(\lambda) =&
  \left[\begin{matrix} \boldsymbol{\hat{\beta}}_T^m \\
      \boldsymbol{I}- \boldsymbol{S}_T^\top\boldsymbol{\hat{\beta}}_T^m
      \end{matrix}
      \right]^\top
  \left[\begin{matrix}
      \boldsymbol{\Sigma}^d_{h,T} & \boldsymbol{0}\\
      \boldsymbol{0} & \boldsymbol{\Sigma}^d_{h,B} 
      \end{matrix}
      \right]
    \left[\begin{matrix} \boldsymbol{\hat{\beta}}_T^m \\
      \boldsymbol{I}- \boldsymbol{S}_T^\top\boldsymbol{\hat{\beta}}_T^m
      \end{matrix}
      \right]\\
    =& \left(\boldsymbol{\hat{\beta}}_T^m\right)^\top\boldsymbol{\Sigma}^d_{h,T}\boldsymbol{\hat{\beta}}_T^m+ 
    \left(\boldsymbol{I}- \left(\boldsymbol{\hat{\beta}}_T^m\right)^\top\boldsymbol{S}_T\right) \boldsymbol{\Sigma}^d_{h,B} \left(\boldsymbol{I}- \boldsymbol{S}_T^\top\left(\boldsymbol{\hat{\beta}}_T^m\right)^\top\right)\\
       =& \left(\boldsymbol{\hat{\beta}}_T^m\right)^\top\boldsymbol{\Sigma}^d_{h,T}\boldsymbol{\hat{\beta}}_T^m+
       \boldsymbol{\Sigma}^d_{h,B}-\boldsymbol{\Sigma}^d_{h,B}\boldsymbol{S}_T^\top \left(\boldsymbol{\hat{\beta}}_T^m\right)^\top -\\
       & \left(\boldsymbol{\hat{\beta}}_T^m\right)^\top\boldsymbol{S}_T\boldsymbol{\Sigma}^d_{h,B}+
   \left(\boldsymbol{\hat{\beta}}_T^m\right)^\top\boldsymbol{S}_T \boldsymbol{\Sigma}^d_{h,B}  \boldsymbol{S}_T^\top\left(\boldsymbol{\hat{\beta}}_T^m\right)^\top\\
       =& \left(\boldsymbol{\hat{\beta}}_T^m\right)^\top\left(\boldsymbol{\Sigma}^d_{h,T}+\boldsymbol{S}_T \boldsymbol{\Sigma}^d_{h,B}  \boldsymbol{S}_T^\top\right)\boldsymbol{\hat{\beta}}_T^m+
       \boldsymbol{\Sigma}^d_{h,B}-\\ &
       \boldsymbol{\Sigma}^d_{h,B}\boldsymbol{S}_T^\top \left(\boldsymbol{\hat{\beta}}_T^m\right)^\top - (\boldsymbol{\hat{\beta}}_T^m)^\top\boldsymbol{S}_T\boldsymbol{\Sigma}^d_{h,B}.
       \end{split}
\end{align}
It follows that the MAP estimate of $\vect{\Sigma}_r$ can be written as  \citep[using][eq. (156) for the second equallity ]{petersen_Matrix2006}
\begin{align}
  \begin{split}
  \boldsymbol{\hat{\Sigma}}_r=&
\frac{T/(1-\lambda)}{T+n-m}\left(\boldsymbol{P}(\lambda)\boldsymbol{\Sigma}_s\boldsymbol{P}^\top(\lambda)- 
    \lambda \boldsymbol{\hat{\Sigma}}^d_{h,B} + \right. \\ &\left.
    \lambda\boldsymbol{\Sigma}_{h,B}^d\boldsymbol{S}_T^\top(\boldsymbol{\Sigma}_{h,T}^d+ \boldsymbol{S}_T\boldsymbol{\Sigma}_{h,B}^d\boldsymbol{S}_T^\top)^{-1}\boldsymbol{S}_T\boldsymbol{\Sigma}_{h,B}^d
    \right) \\
       =& \frac{T/(1-\lambda)}{T+n-m}\left(\boldsymbol{P}(\lambda)\boldsymbol{\Sigma}_s\boldsymbol{P}^\top(\lambda)- 
    \lambda\left(\left(\boldsymbol{\Sigma}_{h,B}^d\right)^{-1}+ \boldsymbol{S}_T^\top\left(\boldsymbol{\Sigma}_{h,T}^d\right)^{-1}\boldsymbol{S}_T
    \right)^{-1}\right). \label{eq:proofSig.r}
    \end{split}
\end{align}
This is \eqref{eq:Sigma.rMap} and  concludes the proof of Corollary \ref{remark:MapSig}.\qed

\section{Proof of Theorem \ref{the:SigMap}}\label{proof:MAP-var}
If we assume that $\boldsymbol{\Sigma_r}\sim \mathcal{W}^{-1}(\boldsymbol{\Psi},v)$ and $v>m-1$, then the log-posterior density becomes
\begin{align}
  \begin{split}
  l_{\text{MAP}}(\boldsymbol{\Sigma}_r)\propto & -
  \frac{1}{2}\left(\boldsymbol{y}-\boldsymbol{\hat{y}}_B-\boldsymbol{X}\boldsymbol{\hat{\beta}}_T\right)^\top\left(\boldsymbol{I}_T\otimes\boldsymbol{\Sigma}_r^{-1} \right)
  (\boldsymbol{y}-\boldsymbol{\hat{y}}_B-\boldsymbol{X}\boldsymbol{\hat{\beta}}_T) -\\&\frac{1}{2}\left(\boldsymbol{\beta}_T-\boldsymbol{\beta}_{0,T}\right)^\top\left(\boldsymbol{\Sigma}^{-1}_{r}\otimes\boldsymbol{\Sigma}^{-1}_{0,\beta}\right)(\boldsymbol{\beta}_T-\boldsymbol{\beta}_{0,T})-\\ 
   & \frac{n-m}{2}\log|\boldsymbol{\Sigma}_{r}| - 
  \frac{T}{2}\log|\boldsymbol{\Sigma}_r|-\frac{v+m+1}{2}|\boldsymbol{\Sigma}_r|-
  \frac{1}{2}\text{Tr}\left(\boldsymbol{\Psi}\boldsymbol{\Sigma}_r^{-1}\right)\label{eq:proofMapVar},
  \end{split}
\end{align}
 and the derivative wrt.~$\boldsymbol{\Sigma}_r$ is 
 \begin{align}
   \begin{split}
  \frac{\partial l_{\text{MAP}}(\boldsymbol{\Sigma}_r)}{\partial \boldsymbol{\Sigma}_r}=& -\frac{T+n+v+1}{2}\boldsymbol{\Sigma}_r^{-1} +
   \frac{1}{2}\boldsymbol{\Sigma}_r^{-1}\sum_{t=1}^T\boldsymbol{e}_t \boldsymbol{e}_t^\top\boldsymbol{\Sigma}_r^{-1} +\\&   \frac{1}{2}\boldsymbol{\Sigma}^{-1}_{r}\left(\boldsymbol{\hat{\beta}}_T^m-\boldsymbol{\beta}_{0,T}^m\right)\boldsymbol{\Sigma}^{-1}_{0,\beta}\left(\boldsymbol{\hat{\beta}}^m_T-\boldsymbol{\beta}_{0,T}^m\right)^\top\boldsymbol{\Sigma}^{-1}_{r}-
  \frac{1}{2}\boldsymbol{\Sigma}_r^{-1}\boldsymbol{\Psi}\boldsymbol{\Sigma}_r^{-1}.
 \end{split}
\end{align}
Consequently, using \eqref{eq:proofSig.r} the MAP estimate of $\boldsymbol{\Sigma}_r$ is 
\begin{align}
  \begin{split}
  \boldsymbol{\hat{\Sigma}}_{r,\text{MAP}}=&\frac{\left(\sum_{t=1}^T\boldsymbol{e}_t \boldsymbol{e}_t^\top +  \left(\boldsymbol{\hat{\beta}}^m_T-\boldsymbol{\beta}^m_{T,0}\right)\boldsymbol{\Sigma}^{-1}_{0,\beta}\left(\boldsymbol{\hat{\beta}}_T^m-
  \boldsymbol{\beta}_{0,T}^m\right)^\top+\boldsymbol{\Psi}\right)}{T+n+v+1}\\
 =&  \frac{T/(1-\lambda)}{T+n+v+1}\left(\boldsymbol{P}(\lambda)\boldsymbol{\Sigma}_s\boldsymbol{P}^\top(\lambda)- 
    \right. \\ &\left.
    \lambda\left(\left(\boldsymbol{\Sigma}_{h,B}^d\right)^{-1}+ \boldsymbol{S}_T^\top\left(\boldsymbol{\Sigma}_{h,T}^d\right)^{-1}\boldsymbol{S}_T
    \right)^{-1}+\frac{1-\lambda}{T}\boldsymbol{\Psi}\right).
    \end{split}
\end{align}
If we set $\vect{\Psi}=\frac{T\lambda}{1-\lambda}\left(\left(\boldsymbol{\Sigma}_{h,B}^d\right)^{-1}+ \boldsymbol{S}_T^\top\left(\boldsymbol{\Sigma}_{h,T}^d\right)^{-1}\boldsymbol{S}_T
    \right)^{-1}$, then the estimate is
\begin{align}
  \boldsymbol{\hat{\Sigma}}_{r,MAP}=&\frac{T/(1-\lambda)}{T+n+v+1}\boldsymbol{P}(\lambda)\boldsymbol{\Sigma}_s\boldsymbol{P}^\top(\lambda).
  \end{align}
If, in addition, we set $v=\frac{\lambda}{1-\lambda}T-(n+1)$, then the estimate becomes
\begin{align}
    \boldsymbol{\hat{\Sigma}}_{r,MAP}=&\boldsymbol{P}(\lambda)\boldsymbol{\Sigma}_s\boldsymbol{P}^\top(\lambda).
\end{align}
This is \eqref{eq:Sigma.rMap} and concludes the proof of Theorem \ref{the:SigMap}. \qed

\subsection{Proof of equation \eqref{eq:V1} and \eqref{eq:SigMapLim}}\label{proof:40.41}

\subsubsection*{Equation \eqref{eq:V1}:}
First note that 
\begin{align}
\lim_{\lambda\rightarrow 1} \vect{\Sigma}_s= \lim_{\lambda\rightarrow 1} (1-\lambda)\vect{\Sigma}_h+ \lambda\vect{\Sigma}_h^d=\vect{\Sigma}_h^d
\end{align}
and 
\begin{align}
\begin{split}
\lim_{\lambda\rightarrow 1}\boldsymbol{P}(\lambda) =& 
\vect{S}^\top((1-\lambda)\vect{\Sigma}_h+ \lambda\vect{\Sigma}_h^d)^{-1}\vect{S})^{-1}\vect{S}^\top((1\lambda)\vect{\Sigma}_h+ \lambda\vect{\Sigma}_h^d)^{-1}\\
=& 
(\vect{S}^\top( \vect{\Sigma}_h^d)^{-1}\vect{S})^{-1}
\vect{S}^\top(\vect{\Sigma}_h^d)^{-1}.
\end{split}
\end{align}
Therefore, 
\begin{align}
  \begin{split}
    \lim_{\lambda\rightarrow 1}\boldsymbol{P}(\lambda) \boldsymbol{\Sigma}_s \boldsymbol{P}(\lambda)^\top=&(\vect{S}^\top( \vect{\Sigma}_h^d)^{-1}\vect{S})^{-1}\vect{S}^\top(\vect{\Sigma}_h^d)^{-1}\vect{\Sigma}_h^d(\vect{\Sigma}_h^d)^{-1} \vect{S} (\vect{S}^\top( \vect{\Sigma}_h^d)^{-1}\vect{S})^{-1}\\
    =&(\vect{S}^\top( \vect{\Sigma}_h^d)^{-1}\vect{S})^{-1}\\
    =&\left(\vect{S}_T^\top( \vect{\Sigma}_{h,T}^d)^{-1}\vect{S}_T+( \vect{\Sigma}_{h,B}^d)^{-1}\right)^{-1}.
  \end{split}
\end{align}
This is \eqref{eq:V1} and concludes the first part of the proof.\qed

\subsubsection*{Equation \eqref{eq:SigMapLim}:}
We start by rewriting \eqref{eq:Sig.rMAP}
\begin{align}
\begin{split}
\boldsymbol{\hat{\Sigma}}_{r,MAP}
=& \frac{T/(1-\lambda)}{T+n-m}\left(\boldsymbol{P}
\lambda)\boldsymbol{\Sigma}_s\boldsymbol{P}^\top(\lambda)- 
\lambda\left(\left(\boldsymbol{\Sigma}_{h,B}^d\right)^{-1}+\boldsymbol{S}_T^\top\left(\boldsymbol{\Sigma}_{h,T}^d\right)^{-1}\boldsymbol{S}_T \right)^{-1}\right) \\
=&\frac{T}{T+n-m}\left(\boldsymbol{P}(\lambda)
\boldsymbol{\Sigma}_h\boldsymbol{P}^\top(\lambda)+      \frac{\lambda}{1-\lambda}\left(\boldsymbol{P}(\lambda)
      \boldsymbol{\Sigma}_h^d\boldsymbol{P}^\top(\lambda)
      -  \boldsymbol{P}(1)
      \boldsymbol{\Sigma}_h^d\boldsymbol{P}^\top(1)\right)\right).
\end{split}
\end{align}
We need to show that 
\begin{align}
 \lim_{\lambda\rightarrow 1} \frac{\lambda}{1-\lambda}\left(\boldsymbol{P}(\lambda)
      \boldsymbol{\Sigma}_h^d\boldsymbol{P}^\top(\lambda)
      -  \boldsymbol{P}(1)
      \boldsymbol{\Sigma}_h^d\boldsymbol{P}^\top(1)\right)= \vect{0}\label{eq:lm1}.
\end{align}

Starting with the second term and using the form introduced in Appendix \ref{proof:opt.proj.hier}  
\begin{align}
 \boldsymbol{P}(1)=&\vect{J}-\vect{J}\vect{\Sigma}_h^d\vect{U}(\vect{U}^\top\vect{\Sigma}_h^d\vect{U})^{-1}\vect{U}^\top.\label{eq:H11}
 \end{align}
We will use the short-hand notation
\begin{align}
  \begin{split}
  \vect{A}=&\vect{U}^\top\vect{\Sigma}_h^d\vect{U}\\
  \vect{A}_s=&\vect{U}^\top\vect{\Sigma}_s\vect{U}\\
\end{split}
\end{align}
and note that $\vect{A}_s$ is a function of $\lambda$. 
$  \boldsymbol{P}(1)\boldsymbol{\Sigma}_h^d\boldsymbol{P}^\top(1)$ can be written as
\begin{align}
  \begin{split}
  \boldsymbol{P}(1)\boldsymbol{\Sigma}_h^d\boldsymbol{P}^\top(1)=&
  \vect{J}\boldsymbol{\Sigma}_h^d \vect{J}^{\top} +
  \vect{J}\vect{\Sigma}_h^d\vect{U}\vect{A}^{-1}
  \vect{A}\vect{A}^{-1}\vect{U}^\top 
  \vect{\Sigma}_h^d\vect{J}^\top- \\
  &\vect{J}\vect{\Sigma}_h^d  \vect{U}\vect{A}^{-1} 
  \vect{U}^\top \vect{\Sigma}_h^d\vect{J}^\top-\vect{J}\vect{\Sigma}_h^d  \vect{U}\vect{A}^{-1} 
  \vect{U}^\top \vect{\Sigma}_h^d\vect{J}^\top\\
  =&
  \vect{J}\boldsymbol{\Sigma}_h^d \vect{J}^{\top} -
  \vect{J}\vect{\Sigma}_h^d\vect{U}\vect{A}^{-1}\vect{U}^\top 
  \vect{\Sigma}_h^d\vect{J}^\top.
  \end{split}
\end{align}

For the first term of \eqref{eq:lm1}, note that 
\begin{align}
   \boldsymbol{P}(\lambda) =& \vect{J}-\vect{J}\vect{\Sigma}_s\vect{U}\vect{A}_s^{-1}\vect{U}^\top
\end{align}
and, hence,
\begin{align}
  \begin{split}
  \boldsymbol{P}(\lambda)\boldsymbol{\Sigma}_h^d\boldsymbol{P}^\top(\lambda)=&
  \vect{J}\boldsymbol{\Sigma}_h^d \vect{J}^{\top} +
  \vect{J}\vect{\Sigma}_s\vect{U}\vect{A}_s^{-1}\vect{A}\vect{A}_s^{-1}\vect{U}^\top 
  \vect{\Sigma}_s\vect{J}^\top- 
  \vect{J}\vect{\Sigma}_s  \vect{U}\vect{A}_s^{-1} 
  \vect{U}^\top \vect{\Sigma}_h^d\vect{J}^\top- 
    \\  &
  \vect{J}\vect{\Sigma}_h^d  \vect{U}\vect{A}_s^{-1} 
  \vect{U}^\top \vect{\Sigma}_s\vect{J}^\top.
  \end{split}
  \end{align}
  Replacing $\vect{\Sigma}_s$ by $(1-\lambda)\vect{\Sigma}_h+\lambda\vect{\Sigma}_h^d$ in the last two terms we get
  \begin{align}
\begin{split}      
     \boldsymbol{P}(\lambda)\boldsymbol{\Sigma}_h^d\boldsymbol{P}^\top(\lambda)=& \vect{J}\boldsymbol{\Sigma}_h^d \vect{J}^{\top} +
\vect{J}\vect{\Sigma}_s\vect{U}\vect{A}_s^{-1}\vect{A}\vect{A}_s^{-1}\vect{U}^\top 
  \vect{\Sigma}_s\vect{J}^\top- 
  \\ &
  \vect{J}((1-\lambda)\vect{\Sigma}_h+\lambda\vect{\Sigma}_h^d) \vect{U}\vect{A}_s^{-1} 
  \vect{U}^\top\vect{\Sigma}_h^d\vect{J}^\top-
  \\ &
  \vect{J}\vect{\Sigma}_h^d  \vect{U}\vect{A}_s^{-1} 
  \vect{U}^\top ((1-\lambda)\vect{\Sigma}_h+\lambda\vect{\Sigma}_h^d)\vect{J}^\top\\
    =&
    \vect{J}\boldsymbol{\Sigma}_h^d \vect{J}^{\top} +
\vect{J}\vect{\Sigma}_s\vect{U}\vect{A}_s^{-1}\vect{A}\vect{A}_s^{-1}\vect{U}^\top 
  \vect{\Sigma}_s\vect{J}^\top- 
  \\  &
  (1-\lambda)\vect{J}\vect{\Sigma}_h \vect{U}\vect{A}_s^{-1} 
  \vect{U}^\top \vect{\Sigma}_h^d\vect{J}^\top-  (1-\lambda)\vect{J}\vect{\Sigma}_h^d \vect{U}\vect{A}_s^{-1} 
  \vect{U}^\top \vect{\Sigma}_h\vect{J}^\top- 
  \\&
  2 \lambda\vect{J}\vect{\Sigma}_h^d \vect{U}\vect{A}_s^{-1} 
  \vect{U}^\top \vect{\Sigma}_h^d\vect{J}^\top.
  \end{split}
\end{align}
The second term  can be written as
\begin{align}
  \begin{split}
   & \vect{J}\vect{\Sigma}_s\vect{U}\vect{A}_s^{-1}\vect{A}\vect{A}_s^{-1}\vect{U}^\top 
  \vect{\Sigma}_s\vect{J}^\top\\
  =& \vect{J}((1-\lambda)\vect{\Sigma}_h+\lambda\vect{\Sigma}_h^d) 
  \vect{U}\vect{A}_s^{-1}\vect{A}\vect{A}_s^{-1}\vect{U}^\top 
  ((1-\lambda)\vect{\Sigma}_h+\lambda\vect{\Sigma}_h^d)\vect{J}^\top\\
 =& (1-\lambda)^2\vect{J}\vect{\Sigma}_h 
  \vect{U}\vect{A}_s^{-1}\vect{A}\vect{A}_s^{-1}\vect{U}^\top 
 \vect{\Sigma}_h\vect{J}^\top+
 (1-\lambda)\lambda\vect{J}\vect{\Sigma}_h 
  \vect{U}\vect{A}_s^{-1}\vect{A}\vect{A}^{-1}\vect{U}^\top 
  \vect{\Sigma}_h^d\vect{J}^\top+
  \\&
  (1-\lambda)\lambda\vect{J}\vect{\Sigma}_h^d 
  \vect{U}\vect{A}_s^{-1}\vect{A}\vect{A}_s^{-1}\vect{U}^\top 
  \vect{\Sigma}_h\vect{J}^\top+
  \lambda^2\vect{J}\vect{\Sigma}_h^d 
  \vect{U}\vect{A}_s^{-1}\vect{A}\vect{A}_s^{-1}\vect{U}^\top 
  \vect{\Sigma}_h^d\vect{J}^\top.
  \end{split}
\end{align}
Collecting the terms we get 
\begin{align}
  \begin{split}
 \frac{\lambda \boldsymbol{P}(\lambda)\boldsymbol{\Sigma}_h^d\boldsymbol{P}^\top(\lambda)}{1-\lambda}=&
      \frac{\lambda}{1-\lambda}\vect{J}\boldsymbol{\Sigma}_h^d \vect{J}^{\top} +
       \lambda(1-\lambda)\vect{J}\vect{\Sigma}_h  \vect{U}\vect{A}_s^{-1}\vect{A}
      \vect{A}_s^{-1}\vect{U}^\top  \vect{\Sigma}_h\vect{J}^\top+
      \\  &
      \lambda\vect{J}\vect{\Sigma}_h \vect{U}\vect{A}_s^{-1} 
  (\lambda\vect{A}\vect{A}_s^{-1}-\vect{I})\vect{U}^\top \vect{\Sigma}_h^d\vect{J}^\top+ 
      \\    &
      \lambda\vect{J}\vect{\Sigma}_h^d \vect{U}\vect{A}_s^{-1} 
  (\lambda\vect{A}\vect{A}_s^{-1}- 
  \vect{I})\vect{U}^\top \vect{\Sigma}_h\vect{J}^\top+ 
  \\ &
  \frac{\lambda}{1-\lambda}\vect{J}\vect{\Sigma}_h^d \vect{U}\vect{A}_s^{-1} (\lambda\vect{A}\vect{A}_s^{-1}-\vect{I})
  \vect{U}^\top \vect{\Sigma}_h^d\vect{J}^\top-
  \\ &
  \frac{\lambda^2}{1-\lambda}\vect{J}\vect{\Sigma}_h^d \vect{U}\vect{A}_s^{-1} 
  \vect{U}^\top \vect{\Sigma}_h^d\vect{J}^\top.\label{eq:H18}
  \end{split}
\end{align}
In the limit $\lambda\rightarrow 1$, the second term is clearly zero. For the third through sixth terms we rewrite $\vect{A}_s^{-1}$ \citep[using][eq. (157)]{petersen_Matrix2006} as
\begin{align}
  \begin{split}
  \vect{A}_s^{-1}=&(\vect{U}^\top((1-\lambda)\vect{\Sigma}_h+\lambda\vect{\Sigma}_h^d)\vect{U})^{-1}\\
  =&\frac{1}{\lambda}\vect{A}^{-1}-
    \frac{1}{\lambda^2}\vect{A}^{-1}\vect{U}^\top\left(\frac{\vect{\Sigma}_h^{-1}}{1-\lambda}+
  \frac{\vect{U}\vect{A}^{-1}\vect{U}^\top}{\lambda}\right)^{-1}\vect{U}\vect{A}^{-1}\\
  =&\frac{1}{\lambda}\vect{A}^{-1}-
    \frac{1-\lambda}{\lambda}\vect{A}^{-1}\vect{U}^\top(\lambda\vect{\Sigma}_h^{-1}+
  (1-\lambda)\vect{U}\vect{A}^{-1}\vect{U}^\top)^{-1}\vect{U}\vect{A}^{-1}.\label{eq:AsInv}
    \end{split}
\end{align}
Inserting \eqref{eq:AsInv} in \eqref{eq:H18} shows that term three and four are proportional to $1-\lambda$ and, thus, will be zero in the limit $\lambda\rightarrow 1$. The fifth term becomes
\begin{align}
  \begin{split}
 \texttt{term5}=&  \frac{\lambda}{1-\lambda} \vect{J}\vect{\Sigma}_h^d \vect{U}\vect{A}_s^{-1} (\lambda\vect{A}\vect{A}_s^{-1}-\vect{I})
  \vect{U}^\top \vect{\Sigma}_h^d\vect{J}^\top\\
  =&-\lambda\vect{J}\vect{\Sigma}_h^d \vect{U}\vect{A}_s^{-1} (
\vect{U}^\top(\lambda\vect{\Sigma}_h^{-1}+
  (1-\lambda)\vect{U}\vect{A}^{-1}\vect{U}^\top)^{-1}\vect{U}\vect{A}^{-1}
  )
  \vect{U}^\top \vect{\Sigma}_h^d\vect{J}^\top.
  \end{split}
\end{align}
In the limit $\lambda\rightarrow 1$, we get 
\begin{align}
  \begin{split}
  \lim_{\lambda\rightarrow 1} \texttt{term5}=&
  -\vect{J}\vect{\Sigma}_h^d 
  \vect{U}\vect{A}^{-1} (
  \vect{U}^\top(\vect{\Sigma}_h^{-1})^{-1}\vect{U}\vect{A}^{-1}
  )
  \vect{U}^\top \vect{\Sigma}_h^d\vect{J}^\top\\
  =& -   \vect{J}\vect{\Sigma}_h^d 
  \vect{U}\vect{A}^{-1} 
  \vect{U}^\top\vect{\Sigma}_h\vect{U}\vect{A}^{-1}
  \vect{U}^\top \vect{\Sigma}_h^d\vect{J}^\top.
  \end{split}
\end{align}
As a result, we can write \eqref{eq:lm1} as
\begin{align}
  \begin{split}
&  \lim_{\lambda\rightarrow 1}
   \frac{\lambda (\boldsymbol{P}(\lambda)\boldsymbol{\Sigma}_h^d\boldsymbol{P}^\top(\lambda)-\boldsymbol{P}(1)\boldsymbol{\Sigma}_h^d\boldsymbol{P}^\top(1))}{1-\lambda}\\
   =&-\vect{J}\vect{\Sigma}_h^d 
  \vect{U}\vect{A}^{-1} 
  \vect{U}^\top\vect{\Sigma}_h\vect{U}\vect{A}^{-1}
  \vect{U}^\top \vect{\Sigma}_h^d\vect{J}^\top-\\
&  \lim_{\lambda\rightarrow 1} \frac{\lambda}{1-\lambda}\vect{J}\vect{\Sigma}_h^d \vect{U}\left(\lambda(\vect{U}^\top\vect{\Sigma}_s\vect{U})^{-1}  
 -  (\vect{U}^\top\vect{\Sigma}_h^d\vect{U})^{-1}\right)\vect{U}^\top \vect{\Sigma}_h^d\vect{J}^\top\\
 =&-\vect{J}\vect{\Sigma}_h^d 
  \vect{U}\vect{A}^{-1} 
  \vect{U}^\top\vect{\Sigma}_h\vect{U}\vect{A}^{-1}
  \vect{U}^\top \vect{\Sigma}_h^d\vect{J}^\top+\\
&  \lim_{\lambda\rightarrow 1} \frac{\lambda}{1-\lambda}\vect{J}\vect{\Sigma}_h^d \vect{U}\left(
      (1-\lambda)\vect{A}^{-1}\vect{U}^\top(\lambda\vect{\Sigma}_h^{-1}+
  (1-\lambda)\vect{U}\vect{A}^{-1}\vect{U}^\top)^{-1}\vect{U}\vect{A}^{-1}
  \right)\times
  \\ &
  \quad\quad\vect{U}^\top \vect{\Sigma}_h^d\vect{J}^\top\\
   =&-\vect{J}\vect{\Sigma}_h^d 
  \vect{U}\vect{A}^{-1} 
  \vect{U}^\top\vect{\Sigma}_h\vect{U}\vect{A}^{-1}
  \vect{U}^\top \vect{\Sigma}_h^d\vect{J}^\top+
  \\ &
  \vect{J}\vect{\Sigma}_h^d \vect{U}\left(
      \vect{A}^{-1}\vect{U}^\top(\vect{\Sigma}_h^{-1} )^{-1}\vect{U}\vect{A}^{-1}
  \right)\vect{U}^\top \vect{\Sigma}_h^d\vect{J}^\top\\
  =&-\vect{J}\vect{\Sigma}_h^d 
  \vect{U}\vect{A}^{-1} 
  \vect{U}^\top\vect{\Sigma}_h\vect{U}\vect{A}^{-1}
  \vect{U}^\top \vect{\Sigma}_h^d\vect{J}^\top+
  \vect{J}\vect{\Sigma}_h^d \vect{U}
      \vect{A}^{-1}\vect{U}^\top\vect{\Sigma}_h\vect{U}\vect{A}^{-1}
 \vect{U}^\top \vect{\Sigma}_h^d\vect{J}^\top\\
 =&\vect{0}.
 \end{split}
\end{align}
In summary, we can write the result as 
\begin{align}
  \begin{split}
  \lim_{\lambda\rightarrow 1}  \boldsymbol{\hat{\Sigma}}_{r,MAP}
    =& \frac{T}{T+n-m}\boldsymbol{P}(1)\boldsymbol{\Sigma}_h\boldsymbol{P}^\top(1)\\
    =&\frac{T}{T+n-m}(\vect{S}^\top(\vect{\Sigma}_h^d)^{-1}\vect{S})^{-1}\vect{S}^\top(\boldsymbol{\Sigma}_h^d)^{-1}\boldsymbol{\Sigma}_h(\boldsymbol{\Sigma}_h^d)^{-1}\vect{S}(\vect{S}^\top(\vect{\Sigma}_h^d)^{-1}\vect{S})^{-1}.
 \end{split}
\end{align}
  Since $\boldsymbol{\Sigma}_h=(\boldsymbol{\Sigma}_h^d)^{1/2}\boldsymbol{R}_h(\boldsymbol{\Sigma}_h^d)^{1/2}$ we have shown \eqref{eq:SigMapLim}.\qed
\section{Proof of Corollary \ref{remark:RemlMap}}\label{proof:RemlMap}
When adding a REML correction term (the Hessian wrt.~$\vect{\beta}$ of the MAP-objective) to \eqref{eq:proofMapVar}, we get (using $\vect{e}_B=\boldsymbol{y}-\boldsymbol{\hat{y}}_B$)
\begin{align}
  \begin{split}
  l_{\text{MAP,RE}}(\boldsymbol{\Sigma}_r)\propto & -
  \frac{1}{2}(\vect{e}_B-\boldsymbol{X}\boldsymbol{\hat{\beta}}_T)^T\left(\boldsymbol{I}_T\otimes\boldsymbol{\Sigma}_r^{-1} \right)
  (\vect{e}_B-\boldsymbol{X}\boldsymbol{\hat{\beta}}_T) -\\
   &\frac{1}{2}(\boldsymbol{\beta}_T-\boldsymbol{\beta}_{0,T})^T\left(\boldsymbol{\Sigma}^{-1}_{r}\otimes\boldsymbol{\Sigma}^{-1}_{0,\beta}\right)(\boldsymbol{\beta}_T-\boldsymbol{\beta}_{0,T})-\\ 
   & \frac{T+n+v+1}{2}\log|\boldsymbol{\Sigma}_{r}| - 
  \frac{1}{2}\text{Tr}(\boldsymbol{\Psi}\boldsymbol{\Sigma}_r^{-1})+\\ 
  &\frac{1}{2}\log|\boldsymbol{X}^T(\boldsymbol{I}_T\otimes\boldsymbol{\Sigma}_r^{-1} )\boldsymbol{X}+\boldsymbol{\Sigma}^{-1}_{r}\otimes\boldsymbol{\Sigma}^{-1}_{0,\beta}|\label{eq:proofMapReVar}. 
  \end{split}
\end{align}

We assume that $\vect{x}_{i,t}=\vect{x}_{j,t}\in \mathbb{R}^{n-m}$. In this case the determinant in the REML correction term is (see \eqref{eq:proof_llRE})
\begin{align}
  |\boldsymbol{X}^T(\boldsymbol{I}_T\otimes\boldsymbol{\Sigma}_r^{-1} )\boldsymbol{X}+\boldsymbol{\Sigma}^{-1}_{r}\otimes\boldsymbol{\Sigma}^{-1}_{0,\beta}|=
  |\boldsymbol{\Sigma}_r^{-1}|^{n-m}|\boldsymbol{X}^T\boldsymbol{X}+\boldsymbol{\Sigma}^{-1}_{0,\beta}|^m 
\end{align}
and, hence,
\begin{align}
  \begin{split}
  l_{\text{MAP,RE}}(\boldsymbol{\Sigma}_r)\propto & -
  \frac{1}{2}(\vect{e}_B-\boldsymbol{X}\boldsymbol{\hat{\beta}}_T)^T\left(\boldsymbol{I}_T\otimes\boldsymbol{\Sigma}_r^{-1} \right)
  (\vect{e}_B-\boldsymbol{X}\boldsymbol{\hat{\beta}}_T) -\\
   &\frac{1}{2}(\boldsymbol{\beta}_T-\boldsymbol{\beta}_{0,T})^T\left(\boldsymbol{\Sigma}^{-1}_{r}\otimes\boldsymbol{\Sigma}^{-1}_{0,\beta}\right)(\boldsymbol{\beta}_T-\boldsymbol{\beta}_{0,T})-\\ 
   & \frac{T+m+v+1}{2}\log|\boldsymbol{\Sigma}_{r}| -  \frac{1}{2}tr(\boldsymbol{\Psi}\boldsymbol{\Sigma}_r^{-1}). 
 \end{split}
  \end{align}
This gives the estimate (see Section \ref{proof:MAP-var})
\begin{align}
  \begin{split}
  \boldsymbol{\hat{\Sigma}}_{r,\text{MAP}}^{\text{RE}}=&\frac{T/(1-\lambda)}{T+m+v+1}\left(\boldsymbol{P}(\lambda)\boldsymbol{\Sigma}_s\boldsymbol{P}^T(\lambda)- 
    \right. \\ 
    &\left.
    \lambda\left(\left(\boldsymbol{\Sigma}_{h,B}^d\right)^{-1}+ \boldsymbol{S}_T^T\left(\boldsymbol{\Sigma}_{h,T}^d\right)^{-1}\boldsymbol{S}_T
    \right)^{-1}+\frac{1-\lambda}{T}\boldsymbol{\Psi}\right).
    \end{split}
\end{align}
Inserting the values of $v$ and $\vect{\Psi}$ we get 
\begin{align}
  \begin{split}
  \boldsymbol{\hat{\Sigma}}_{r,\text{MAP}}^{\text{RE}}=&\frac{T/(1-\lambda)}{T+m+v+1}\boldsymbol{P}(\lambda)\boldsymbol{\Sigma}_s\boldsymbol{P}^T(\lambda)\\
  =&\frac{T}{T-(n-m)(1-\lambda)}\boldsymbol{P}(\lambda)\boldsymbol{\Sigma}_s\boldsymbol{P}^T(\lambda).
    \end{split}
\end{align}
This is \eqref{eq:Sigma.rMap} and concludes the proof of Corollary \ref{remark:RemlMap}.\qed 

\clearpage
\newpage

\section{Additional plots for the case study}\label{app:case}

\begin{knitrout}
\definecolor{shadecolor}{rgb}{0.969, 0.969, 0.969}\color{fgcolor}\begin{figure}[H]

{\centering \includegraphics[width=\maxwidth]{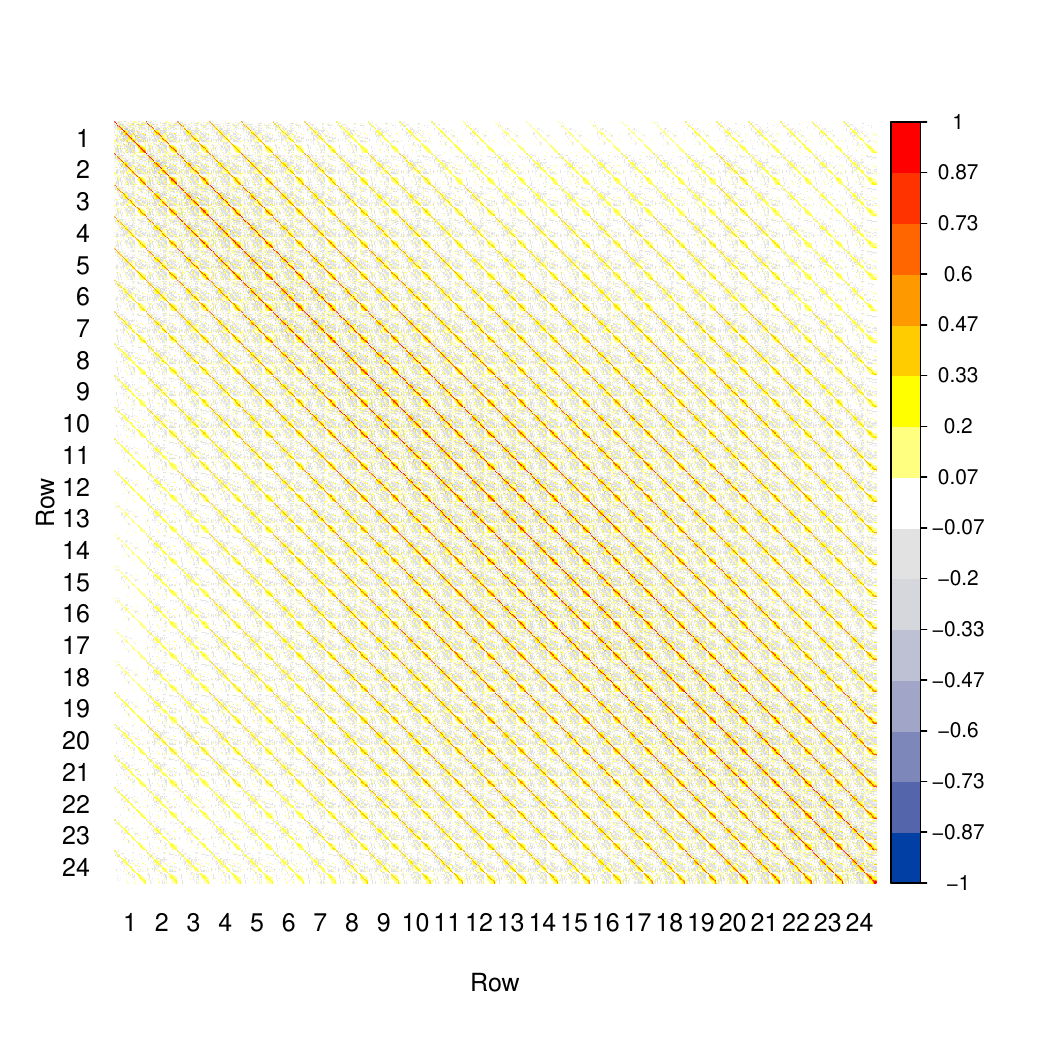} 

}

\caption[Parameter correlation for a selection of all rows in the weight matrix]{Parameter correlation of all parameters in the weight matrix for area SE. The correlation patterns within rows (diagonal blocks) are constant while the between rows correlation matrices (off-diagonal blocks) display decreasing correlation with distance so that, e.g., row 1 and row 24 are weakly correlated while the correlation between row $n$ and row $n-1$ is strong. The within-row correlation is more clear in Figures \ref{fig:SweCorrPlot} and \ref{fig:SweCorrPlotSE1Row1}.} \label{fig:SweCorrPlotFull}
\end{figure}

\end{knitrout}

\clearpage
\newpage

\begin{landscape}
  
\begin{knitrout}
\definecolor{shadecolor}{rgb}{0.969, 0.969, 0.969}\color{fgcolor}\begin{figure}[H]

{\centering \includegraphics[width=1\linewidth,height=0.75\textheight]{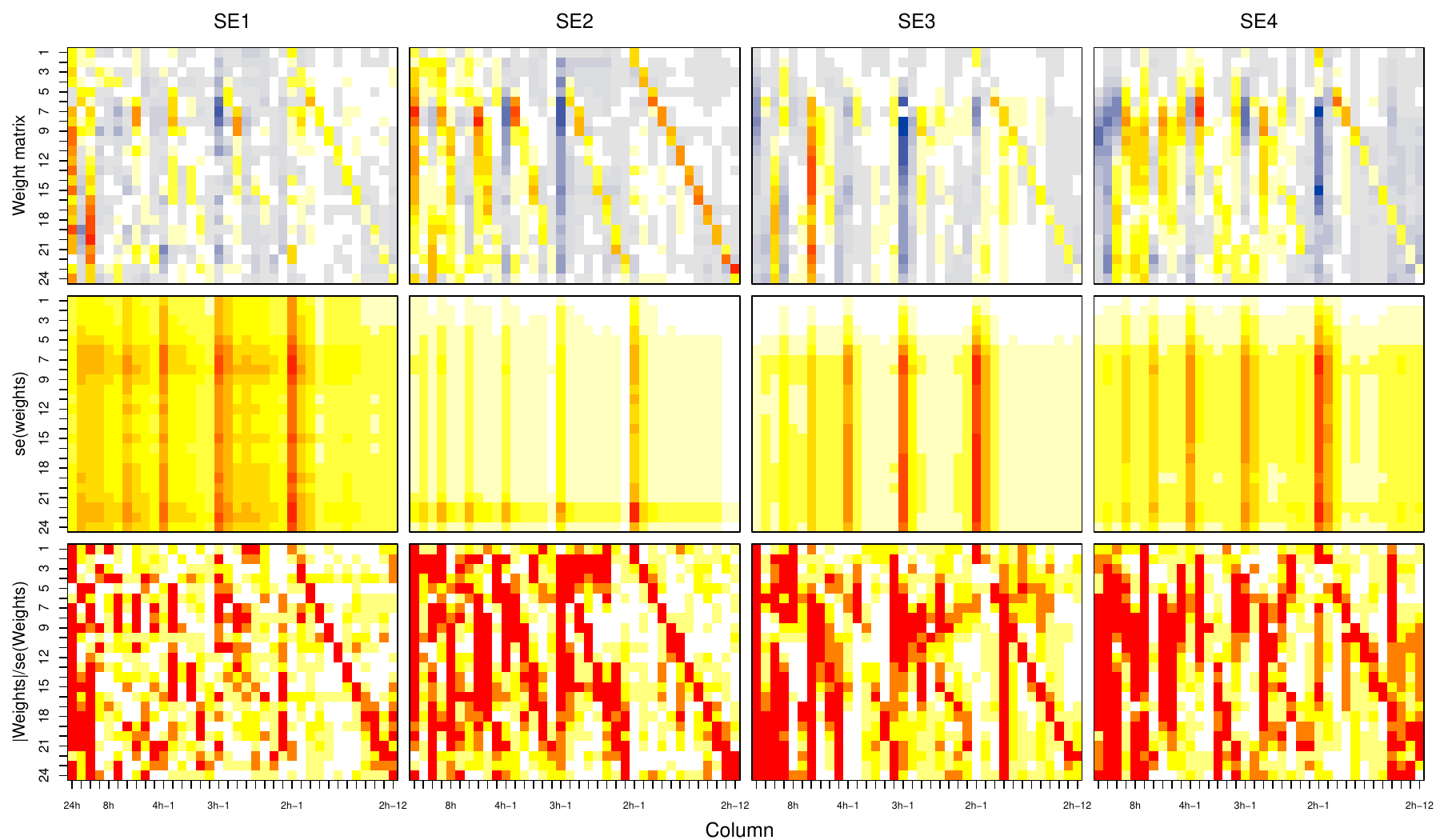} 

}

\caption{SE1-SE4: Volume-weighted weight matrices, associated standard errors, and absolute value of Wald test statics. Blue colours represent negative values and red/yellow represent positive values. For the last row the colour coding is the same as in Figure \ref{fig:PlotWeights}. }\label{fig:PlotWeightsSE1_4}
\end{figure}

\end{knitrout}

 \end{landscape}

\clearpage


\begin{knitrout}
\definecolor{shadecolor}{rgb}{0.969, 0.969, 0.969}\color{fgcolor}\begin{figure}[H]

{\centering \includegraphics[width=\maxwidth,height=0.75\textheight]{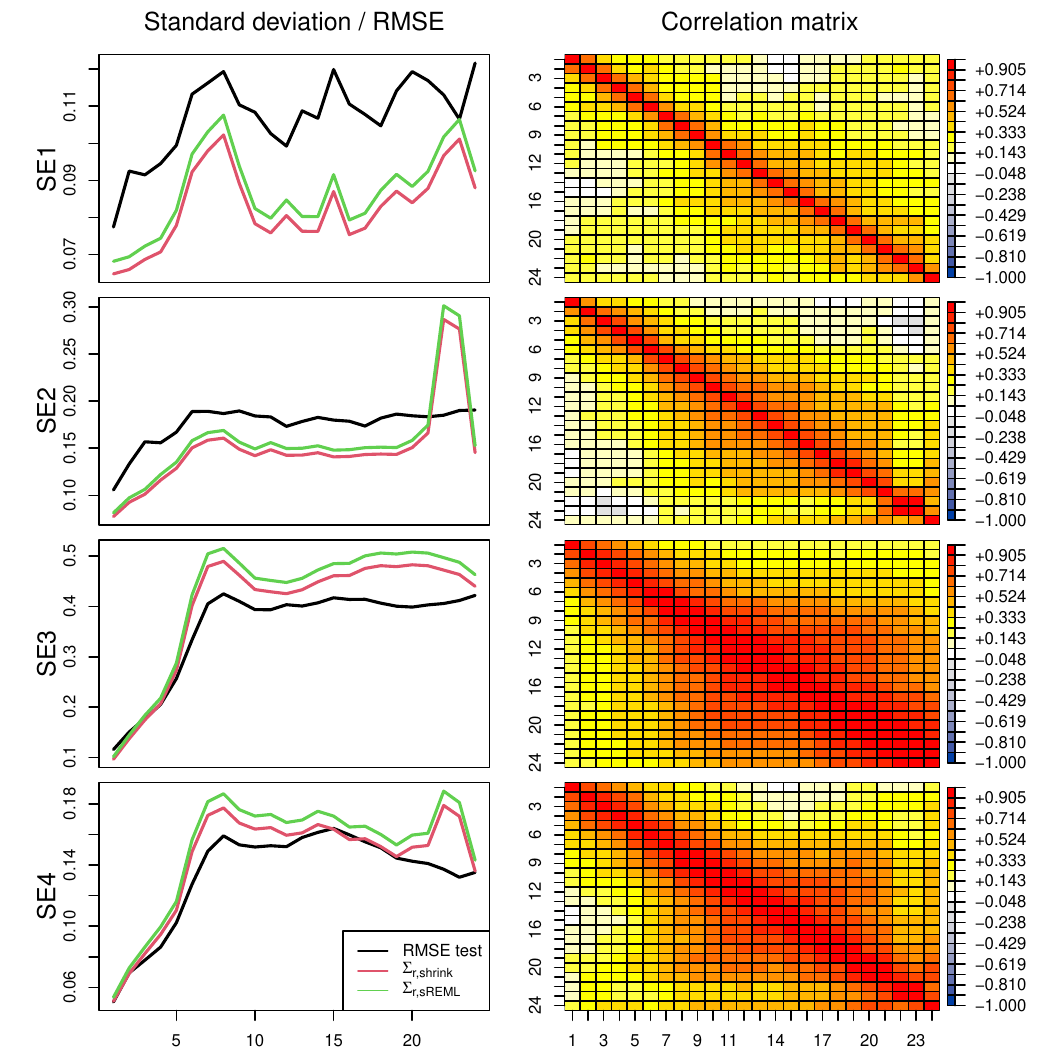} 

}

\caption[Standard deviation and correlation matrix corresponding to $\vect{\Sigma}_r$ for areas SE1--SE4]{Standard deviation and correlation matrix corresponding to $\vect{\Sigma}_r$ for areas SE1--SE4.}\label{fig:SweSigrSE1}
\end{figure}

\end{knitrout}

\clearpage
\newpage


\begin{knitrout}
\definecolor{shadecolor}{rgb}{0.969, 0.969, 0.969}\color{fgcolor}\begin{figure}[H]

{\centering \includegraphics[width=\maxwidth]{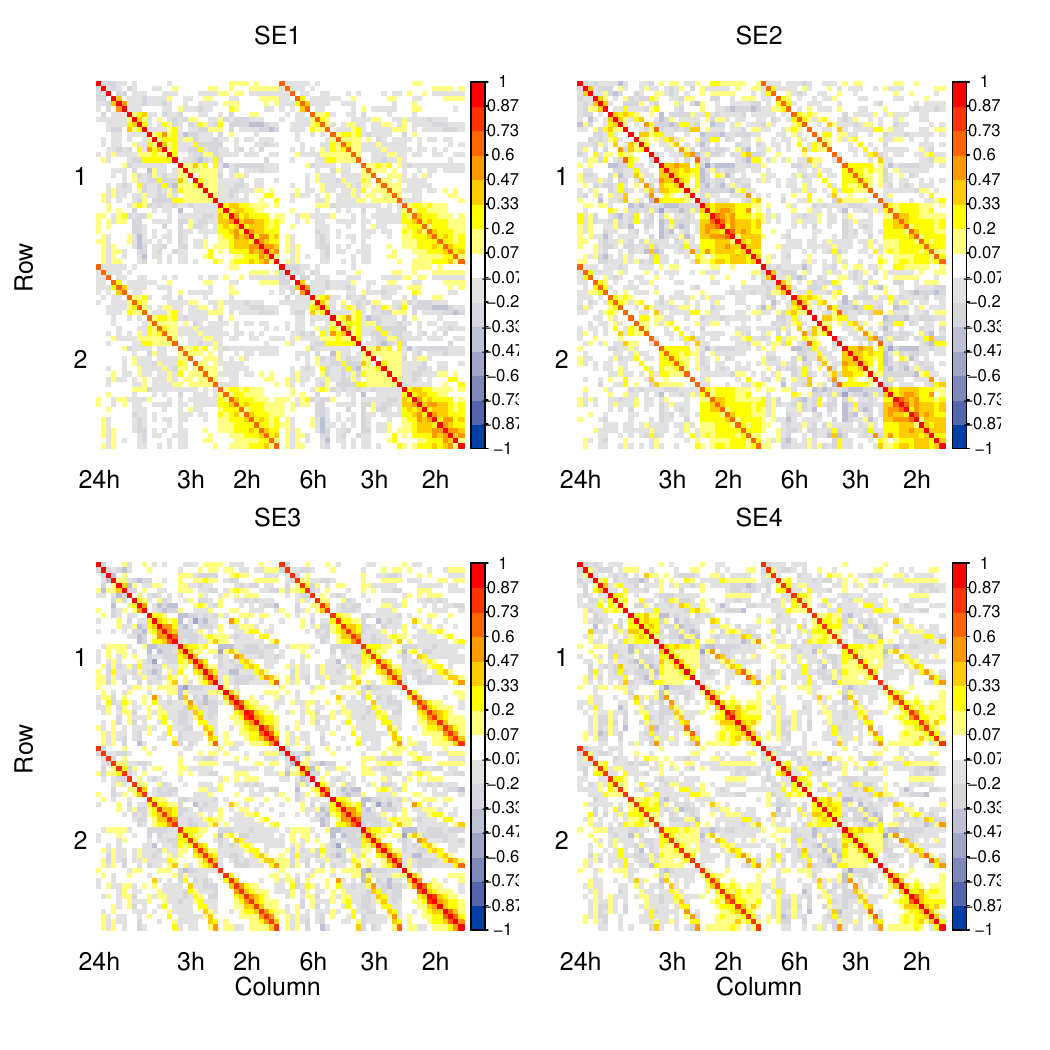} 
}
\caption[Parameter correlation for rows 1--2 of the weight matrix ($\vect{P}_T$) for areas SE1--SE4]{Parameter correlation for rows 1--2 of the weight matrix ($\vect{P}_T$) for areas SE1--SE4. The correlation between levels of the forecasts (e.g., 2h and 3h.) is generally weak as is the correlation between forecast far away in time (e.g., 2h-1 and 2h-12 for row 1). In contrast, the correlation between the same weight in different rows is generally strong.}\label{fig:SweCorrPlotSE1Row1}
\end{figure}

\end{knitrout}
\newpage

\section{Variance separation tables}\label{sec:projTabs}

\begin{table}[H]
\centering
\begingroup\small
\begin{tabular}{l|r|rr|r|rrr}
  & $\frac{||\vect{y}_I-\vect{\hat{y}}_I||^2}{T}$ & $\frac{||\vect{y}_I-\vect{\tilde{y}}_I||^2}{T}$ & $\frac{||\vect{\tilde{y}}_I-\vect{\hat{y}}_I||^2}{T}$ & $\frac{ \frac{\sum\vect{SS}_{mod}}{n-m}}{\frac{\sum \vect{SSE}}{T-n+m}}$ & $\frac{||\vect{y}_I-\vect{\tilde{y}}_I^{\lambda}||^2}{T}$ & $\frac{||\vect{\tilde{y}}_I^{\lambda}-\vect{\hat{y}}_I||^2}{T}$ & $\frac{2(\vect{\tilde{e}}_I^{\lambda})^T\vect{{\tilde{\hat{e}}}}_I^{\lambda}}{T}$ \\ 
  \hline
24h & 1.56 & 1.177 & 0.379 & 2.94 & 1.297 & 0.213 & 0.046 \\ 
  12h & 0.99 & 0.781 & 0.205 & 2.58 & 0.869 & 0.088 & 0.029 \\ 
  8h & 0.79 & 0.620 & 0.171 & 2.89 & 0.691 & 0.069 & 0.030 \\ 
  6h & 0.65 & 0.516 & 0.138 & 2.83 & 0.576 & 0.055 & 0.023 \\ 
  4h & 0.50 & 0.393 & 0.105 & 3.23 & 0.436 & 0.046 & 0.016 \\ 
  3h & 0.42 & 0.333 & 0.092 & 3.61 & 0.368 & 0.042 & 0.015 \\ 
  2h & 0.33 & 0.249 & 0.078 & 4.55 & 0.275 & 0.042 & 0.010 \\ 
  1h & 0.21 & 0.149 & 0.064 & 8.39 & 0.164 & 0.044 & 0.006 \\ 
   \hline
Total & 5.45 & 4.217 & 1.233 & 2.65 & 4.675 & 0.599 & 0.175 \\ 
  \end{tabular}
\endgroup
\caption{\label{tab:projTabSE1} Variance separation \eqref{eq:proj2} on the training set for area SE1. See Table \ref{tab:projTabSE} for explanation.} 
\end{table}

\begin{table}[H]
\centering
\begingroup\small
\begin{tabular}{l|r|rr|r|rrr}
  & $\frac{||\vect{y}_I-\vect{\hat{y}}_I||^2}{T}$ & $\frac{||\vect{y}_I-\vect{\tilde{y}}_I||^2}{T}$ & $\frac{||\vect{\tilde{y}}_I-\vect{\hat{y}}_I||^2}{T}$ & $\frac{ \frac{\sum\vect{SS}_{mod}}{n-m}}{\frac{\sum \vect{SSE}}{T-n+m}}$ & $\frac{||\vect{y}_I-\vect{\tilde{y}}_I^{\lambda}||^2}{T}$ & $\frac{||\vect{\tilde{y}}_I^{\lambda}-\vect{\hat{y}}_I||^2}{T}$ & $\frac{2(\vect{\tilde{e}}_I^{\lambda})^T\vect{{\tilde{\hat{e}}}}_I^{\lambda}}{T}$ \\ 
  \hline
24h & 5.80 & 4.412 & 1.391 & 2.88 & 5.073 & 0.597 & 0.132 \\ 
  12h & 3.84 & 3.094 & 0.746 & 2.46 & 3.503 & 0.234 & 0.102 \\ 
  8h & 3.06 & 2.471 & 0.590 & 2.60 & 2.791 & 0.194 & 0.076 \\ 
  6h & 2.66 & 2.062 & 0.601 & 3.45 & 2.340 & 0.264 & 0.058 \\ 
  4h & 2.13 & 1.579 & 0.552 & 4.85 & 1.788 & 0.289 & 0.053 \\ 
  3h & 1.79 & 1.271 & 0.516 & 6.15 & 1.439 & 0.299 & 0.050 \\ 
  2h & 1.20 & 0.889 & 0.313 & 5.21 & 1.005 & 0.167 & 0.029 \\ 
  1h & 0.71 & 0.516 & 0.192 & 6.30 & 0.582 & 0.108 & 0.018 \\ 
   \hline
Total & 21.19 & 16.292 & 4.901 & 3.07 & 18.522 & 2.153 & 0.518 \\ 
  \end{tabular}
\endgroup
\caption{\label{tab:projTabSE2} Variance separation \eqref{eq:proj2} on the training set for area SE2. See Table \ref{tab:projTabSE} for explanation.} 
\end{table}

\begin{table}[H]
\centering
\begingroup\small
\begin{tabular}{l|r|rr|r|rrr}
  & $\frac{||\vect{y}_I-\vect{\hat{y}}_I||^2}{T}$ & $\frac{||\vect{y}_I-\vect{\tilde{y}}_I||^2}{T}$ & $\frac{||\vect{\tilde{y}}_I-\vect{\hat{y}}_I||^2}{T}$ & $\frac{ \frac{\sum\vect{SS}_{mod}}{n-m}}{\frac{\sum \vect{SSE}}{T-n+m}}$ & $\frac{||\vect{y}_I-\vect{\tilde{y}}_I^{\lambda}||^2}{T}$ & $\frac{||\vect{\tilde{y}}_I^{\lambda}-\vect{\hat{y}}_I||^2}{T}$ & $\frac{2(\vect{\tilde{e}}_I^{\lambda})^T\vect{{\tilde{\hat{e}}}}_I^{\lambda}}{T}$ \\ 
  \hline
24h & 110.2 & 49.58 & 60.64 & 11.18 & 61.90 & 44.94 & 3.39 \\ 
  12h & 50.1 & 30.07 & 20.04 & 6.72 & 37.06 & 11.94 & 1.11 \\ 
  8h & 32.3 & 22.28 & 10.02 & 4.53 & 27.41 & 3.87 & 1.03 \\ 
  6h & 25.3 & 17.77 & 7.53 & 4.14 & 21.85 & 2.59 & 0.86 \\ 
  4h & 17.9 & 12.42 & 5.48 & 4.51 & 15.24 & 2.09 & 0.57 \\ 
  3h & 13.8 & 9.55 & 4.25 & 4.56 & 11.71 & 1.61 & 0.48 \\ 
  2h & 9.1 & 6.53 & 2.60 & 3.94 & 7.99 & 0.82 & 0.31 \\ 
  1h & 4.8 & 3.35 & 1.40 & 4.05 & 4.10 & 0.49 & 0.16 \\ 
   \hline
Total & 263.5 & 151.55 & 111.96 & 4.42 & 187.27 & 68.35 & 7.90 \\ 
  \end{tabular}
\endgroup
\caption{\label{tab:projTabSE3} Variance separation \eqref{eq:proj2} on the training set for area SE3. See Table \ref{tab:projTabSE} for explanation.} 
\end{table}

\begin{table}[H]
\centering
\begingroup\small
\begin{tabular}{l|r|rr|r|rrr}
  & $\frac{||\vect{y}_I-\vect{\hat{y}}_I||^2}{T}$ & $\frac{||\vect{y}_I-\vect{\tilde{y}}_I||^2}{T}$ & $\frac{||\vect{\tilde{y}}_I-\vect{\hat{y}}_I||^2}{T}$ & $\frac{ \frac{\sum\vect{SS}_{mod}}{n-m}}{\frac{\sum \vect{SSE}}{T-n+m}}$ & $\frac{||\vect{y}_I-\vect{\tilde{y}}_I^{\lambda}||^2}{T}$ & $\frac{||\vect{\tilde{y}}_I^{\lambda}-\vect{\hat{y}}_I||^2}{T}$ & $\frac{2(\vect{\tilde{e}}_I^{\lambda})^T\vect{{\tilde{\hat{e}}}}_I^{\lambda}}{T}$ \\ 
  \hline
24h & 9.92 & 5.660 & 4.265 & 6.89 & 6.760 & 2.781 & 0.384 \\ 
  12h & 5.46 & 3.455 & 2.002 & 5.69 & 4.106 & 1.158 & 0.193 \\ 
  8h & 3.79 & 2.649 & 1.140 & 4.28 & 3.129 & 0.544 & 0.116 \\ 
  6h & 3.08 & 2.179 & 0.898 & 4.19 & 2.570 & 0.403 & 0.104 \\ 
  4h & 2.23 & 1.544 & 0.687 & 4.79 & 1.821 & 0.338 & 0.072 \\ 
  3h & 1.78 & 1.211 & 0.569 & 5.22 & 1.425 & 0.292 & 0.063 \\ 
  2h & 1.19 & 0.831 & 0.359 & 4.80 & 0.979 & 0.165 & 0.046 \\ 
  1h & 0.67 & 0.439 & 0.227 & 5.68 & 0.516 & 0.128 & 0.022 \\ 
   \hline
Total & 28.11 & 17.968 & 10.147 & 4.24 & 21.306 & 5.809 & 1.000 \\ 
  \end{tabular}
\endgroup
\caption{\label{tab:projTabSE4} Variance separation \eqref{eq:proj2} on the training set for area SE4. See Table \ref{tab:projTabSE} for explanation.} 
\end{table}

\end{document}